\documentclass[11pt]{article}
\usepackage{verbatim} %
\usepackage[usenames, dvipsnames]{color}

\usepackage{bbm}
\usepackage{mathtools}
\usepackage{amsmath}
\usepackage{amsthm, thmtools}
\usepackage{thm-restate} %
\usepackage{amstext,amssymb, amsfonts}

\usepackage{dsfont} %
 
\usepackage{empheq} %

\usepackage{lmodern} %

\usepackage{float} %

\usepackage{array,multirow} %

\usepackage{algorithmic}
\usepackage[section]{algorithm} %
\usepackage{todonotes}

\newcounter{mynotes}
\setcounter{mynotes}{0}

\usepackage{hyperref} %
 
\urlstyle{same}

\usepackage[symbol]{footmisc}

\declaretheorem[within=section]{theorem}
\declaretheorem[sibling=theorem]{corollary}

\declaretheorem[sibling=theorem]{lemma}
\declaretheorem[sibling=theorem]{claim}

\declaretheorem[sibling=theorem]{definition}

\declaretheorem[sibling=theorem]{proposition}
\declaretheorem[sibling=theorem]{remark}
\declaretheorem[sibling=theorem]{question}
\declaretheorem[sibling=theorem]{conjecture}

\def\tV{{\widetilde{V}}}
\def\tA{{\widetilde{A}}}
\def\tW{{\widetilde{W}}}
\def\tX{{\widetilde{X}}}
\def\tQ{{\widetilde{Q}}}
\def\tV{{\widetilde{V}}}

\def \barb{{\bar{b}}}
\def \barE{{\overline{E}}}
\def \barS{{\overline{S}}}
\def \barT{{\overline{T}}}
\def \barA{{\overline{A}}}

\newcommand{\R}{\mathbb{R}} %
\newcommand{\C}{\mathbb{C}} %
\newcommand{\F}{\mathbb{F}}

\newcommand{\cP}{\mathcal P}

\newcommand{\supp}{\mathrm{supp}} %
\newcommand{\inpro}[2]{\left\langle #1,#2 \right\rangle} %
\newcommand{\E}{\mathbb{E}} %
\newcommand{\set}[1]{\{#1\}}

\renewcommand{\epsilon}{\varepsilon}

  \newcommand{\beq}{\begin{equation}}
  \newcommand{\eeq}{\end{equation}}
  \newcommand{\beqn}{\begin{equation*}}
  \newcommand{\eeqn}{\end{equation*}}
  \newcommand{\beqr}{\begin{eqnarray}}
  \newcommand{\eeqr}{\end{eqnarray}}
  \newcommand{\beqrn}{\begin{eqnarray*}}
  \newcommand{\eeqrn}{\end{eqnarray*}}
  \newcommand{\bmline}{\begin{multline}}
  \newcommand{\emline}{\end{multline}}
  \newcommand{\bmlinen}{\begin{multline*}}
  \newcommand{\emlinen}{\end{multline*}}

\usepackage[margin=1in]{geometry}
\parskip=0.5ex
\usepackage{url}

\renewcommand{\le}{\leqslant}

\renewcommand{\ge}{\geqslant}

\newcommand{\MDS}{\operatorname{MDS}}
\newcommand{\Span}{\operatorname{span}}
\newcommand{\rank}{\operatorname{rank}}
\newcommand{\row}{\mathrm{row}}
\newcommand{\col}{\mathrm{col}}

\newcommand{\lp}{\left(}
\newcommand{\rp}{\right)}

\newcommand{\lc}{\left\{}
\newcommand{\rc}{\right\}}

\title{Lower Bounds for Maximally Recoverable Tensor Codes\\ and Higher Order MDS Codes}

\author{Joshua Brakensiek\thanks{Department of Computer Science, Stanford University, Stanford, CA. Email: {\tt jbrakens@cs.stanford.edu}. Portions of this work were done during an internship at Microsoft Research, Redmond. Research supported in part by an NSF Graduate Research Fellowship.} \and Sivakanth Gopi\thanks{Microsoft Research, Redmond, WA. Email: {\tt sigopi@microsoft.com}.} \and Visu Makam\thanks{Radix Trading Europe B.V. Email: {\tt
    visu@umich.edu}. Research supported by NSF Grant No. DMS-1638352,
  CCF-1412958, and CCF-1900460 and the University of Melbourne and the
  Institute for Advanced Study, Princeton.}}

\date{}

\begin{document}

\maketitle
\thispagestyle{empty}

\begin{abstract}
An $(m,n,a,b)$-tensor code consists of $m\times n$ matrices whose columns satisfy `$a$' parity checks and rows satisfy `$b$' parity checks (i.e., a tensor code is the tensor product of a column code and row code). Tensor codes are useful in distributed storage because a single erasure can be corrected quickly either by reading its row or column. Maximally Recoverable (MR) Tensor Codes, introduced by Gopalan et al.~\cite{GHKSWY}, are tensor codes which can correct every erasure pattern that is information theoretically possible to correct. The main questions about MR Tensor Codes are characterizing which erasure patterns are correctable and obtaining explicit constructions over small fields.

In this paper, we study the important special case when $a=1$, i.e., the columns satisfy a single parity check equation. We introduce the notion of higher order MDS codes ($\MDS(\ell)$ codes) which is an interesting generalization of the well-known MDS codes, where $\ell$ captures the order of genericity of points in a low-dimensional space. We then prove that a tensor code with $a=1$ is MR iff the row code is an $\MDS(m)$ code. We then show that $\MDS(m)$ codes satisfy some weak duality. Using this characterization and duality, we prove that $(m,n,a=1,b)$-MR tensor codes require fields of size $q=\Omega_{m,b}(n^{\min\{b,m\}-1})$. Our lower bound also extends to the setting of $a>1$. We also give a deterministic polynomial time algorithm to check if a given erasure pattern is correctable by the MR tensor code (when $a=1$).
\end{abstract}

\vspace{-3ex}

\newpage
\tableofcontents
\thispagestyle{empty}
\newpage

\pagenumbering{arabic}
\section{Introduction}

In distributed storage, data is stored in individual servers each with a few terabytes of capacity. A large datacenter can have millions of such servers holding a few exabytes (millions terabytes) of data. Costs for building and running such datacenters run into billions of dollars. In such a large system, hard disks crash every minute. Servers can also become temporarily unavailable due to system updates or network bottlenecks. To avoid data loss and to serve user requests with low latency, some form {of} redundancy is necessary. Replicating data multiple times is too wasteful, doubling or tripling the costs. Erasure coding has been used to improve the storage efficiency while maintaining data reliability. For example a $(k+h,k)$-Reed-Solomon code can be used to add $h$ redundant servers to every $k$ data servers. This allows us to correct any $h$ erasures (node failures). To improve storage efficiency, one is forced to choose large values of $k$. But this creates problems with latency, to recover a single erased node, one needs to read data from $k$ other servers. When $k$ is large, this is prohibitively slow. To balance this tension between storage efficiency and latency, erasure codes with \emph{locality} were introduced in~\cite{GHSY,XOR_ELE}. These codes allow fast recovery of an erased symbol by reading a small number of other unerased coordinates, this ability is referred to as locality. This requires that each coordinate of the code participates in a parity check equation involving a few other symbols. In addition to these local parity checks, the code also satisfies a small number of global parity checks which give it resilience to tolerate a large number of erasures in the worst case. Such codes with different architectures were deployed in practice to reduce the storage overhead while maintaining low latency and high durability~\cite{HuangSX,MLRH14}. 

The key notion of \emph{maximal recoverability} for such local codes
was introduced in \cite{CHL,GHSY}. Maximal recoverability refers to
the optimality of a code {with respect} to its ability to correct every erasure pattern that is information theoretically possible to correct given the code architecture (or topology). Therefore maximally recoverable (MR) codes have the best durability among all the codes with that particular architecture. In a seminal paper, Gopalan et al.~\cite{GHKSWY} generalized and brought under a common framework various code topologies used in erasure coding, by introducing MR codes with grid-like topologies. Here the topology {is} specified by $T_{m\times n}(a,b,h)$. This means that the codewords are $m\times n$ matrices where each column satisfies $a$ parity check equations and each row satisfies $b$ parity check equations. In addition there are $h$ global parity check equations that all the $mn$ symbols satisfy. It is easy to show the existence of MR codes for any given topology over exponentially large fields by using randomization and Schwartz-Zippel lemma~\cite{GHKSWY}. There are two main questions about these codes which are still wide open:
\begin{question}
	\label{question:erasure_patterns_abh}
	 What are the erasure patterns that are correctable by an MR code with topology $T_{m\times n}(a,b,h)$?
\end{question}

\begin{question}
	\label{question:fieldsize_abh}
	 What is the minimum field size required to construct an MR code with topology $T_{m\times n}(a,b,h)$? In particular, can we get explicit constructions over small fields?
\end{question}

Both questions are really important. Knowing which patterns are correctable allows to design the topology which gives the desired durability while minimizing the storage costs. And explicit constructions over small fields are important for them to be useful in practice. The amount of computation needed for encoding and recovering from erasures is very sensitive to the field size over which the code is defined. Typically, field sizes of $2^8$ to $2^{16}$ are used in practice. Fields which are much bigger incur a large computational overhead and therefore infeasible to use in practice.

When $h=0$, codes with topology $T_{m\times n}(a,b,0)$ are pure tensor
codes {(also called product codes)}, i.e., the code is a tensor product of a column code and a row code. In this paper, we will denote a code with this topology as an $(m,n,a,b)$-tensor code. Let us denote the row code by denoted by $C_\row$, which is a $(n,n-b)$-code. And denote the column code by $C_\col$, which is a $(m,m-a)$-code. Then the tensor code is $C=C_\col \otimes C_\row$, the codewords are $m\times n$ matrices where each row belongs to $C_\row$ and each column belongs to $C_\col.$ The setting $h=0$ is already very interesting for the following reason. In a recent work~\cite{holzbaur2021correctable}, it was shown that the set of erasure patterns correctable by an MR code with topology $T_{m\times n}(a,b,h)$ are precisely those obtained by adding $h$ more erasures arbitrarily to erasure patterns correctable by an MR code with topology $T_{m\times n}(a,b,0)$. Therefore answering Question~\ref{question:erasure_patterns_abh} for $h=0$ is enough to answering it for any $h$. Moreover, if one can construct an explicit MR code with topology $T_{m\times n}(a,b,0)$ over $\F_q$, then one can get an explicit MR code with topology $T_{m\times n}(a,b,{h})$ over fields of size $q^{(m-a)(n-b)}$~\cite{holzbaur2021correctable}, which also partially answers Question~\ref{question:fieldsize_abh}. 

In this paper, we will focus on $(m,n,a,b)$-MR tensor codes in the special case of $a=1$, i.e., there are no global parity checks and all the columns satisfy a single parity check equation. Firstly, this setting is very much practically relevant. The f4 storage architecture of Facebook~\cite{MLRH14} uses an $(m=3,n=14,a=1,b=4)$-tensor code, though they couldn't obtain an MR construction in their implementation. They simply use a tensor product of $(14,10)$-Reed-Solomon code with a $(3,2)$-parity check code, which need not be MR and therefore doesn't have the optimal durability. Moreover, as we will see, constructing MR tensor codes even in this special case of $a=1$ is quite challenging and leads to some really interesting generalization of MDS (Maximum Distance Separable) codes.

\subsection{Previous work}
\paragraph{Correctable Patterns:} In the paper where they introduce MR tensor codes, \cite{GHKSWY} characterize the set of correctable erasure patterns by an $(m,n,a,b)$-MR tensor code when $a=1$ in terms of a combinatorial condition called `regularity'. But they don't give {an} efficient procedure to check if an erasure pattern is regular, naively it would require checking an exponential number of constraints.  They also conjectured that regularity characterizes correctable erasure patterns when $a>1$. But this conjecture is false as shown in~\cite{holzbaur2021correctable}, we will later present a counterexample with an illuminating explanation of why regularity fails to capture correctability when $a>1.$ Currently, we do not have any characterization of correctable erasure patterns by an $(m,n,a,b)$-MR tensor codes when $a,b>1$. A subset of correctable erasure patterns in MR tensor codes when $a=2$ were obtained in~\cite{Shivakrishna_MRproduct}.

\paragraph{Constructions:} If we instantiate the row code and column code with random codes over fields of size $q\gg (an+bm-ab)\cdot \binom{mn}{an+bm-ab}$, by Schwartz-Zippel lemma and union bound, we can conclude that with high probability the tensor code will be MR. By doing a more careful union bound, \cite{kong2021new} show that there exists $(m,n,a=1,b)$-MR tensor codes over fields of size $q=O_{m,b}\lp n^{b(m-1)}\rp.$ In some special cases, $(m=4,n,a=1,b=2)$ and $(m=3,n,a=1,b=3)$, \cite{kong2021new} proved the existence of MR tensor codes over fields of size $q=O(n^5).$

\paragraph{Lower bounds:} Prior to our work there are no general lower bounds on the field size required for MR tensor codes. In the special case of $(m=4,n,a=1,b=2)$-MR tensor code, \cite{kong2021new} prove a quadratic lower bound on the field size, i.e., $q=\Omega(n^2).$ For codes with topology given by $T_{n\times n}(a=1,b=1,h=1),$ \cite{GHKSWY} prove a lower bound of $\exp\lp \Omega\lp\log(n)^2\rp\rp$. This was improved by \cite{kane2019independence}, where they proved the optimal field size is $q=\exp(O(n)).$

{
\paragraph{MR Local Reconstruction Codes} MR codes with topology $T_{m\times n}(a,0,h)$ are called MR Local Reconstruction Codes (MR LRCs).\footnote{MR LRCs are also called Partial-MDS (Maximum Distance Separable) codes in prior works.} There is extensive body of work on MR LRCs. Several explicit constructions of MR LRCs over small fields are given in~\cite{GHJY,GYBS,GLX-ff,MK19,gopi2020maximally,Blaum,TPD,HY,GHKSWY,CK,BPSY,MK19,GLX-ff,cai2020construction,UMP-2020}. A construction of an MR LRC with field size $$q \le \lp O\lp \max\{m,n\}\rp^{\min\{h,m-a\}}\rp$$ is given in~\cite{gopi2020improved,cai2020construction} using skew polynomials. See the prior work section in~\cite{gopi2020improved} for a survey of existing results on MR LRCs. A lower bound of $$q \gtrsim_{h,a}\left(n m^{\min\{a+1,h-1\}}\right)$$ on the field size is shown in~\cite{gopi2020maximally}. The upper and lower bounds match for the setting where $n=m, $ and $\alpha m \ge a \ge h-2$ for some constant $\alpha <1$, showing that the optimal field size is $\Theta_h(n^h)$ in this case. Closing the gap between upper and lower bounds for general setting of parameters is a major open problem.
}

\subsection{Our Contributions}
We show an equivalent characterization of MR tensor codes with $a=1$ in terms of their row codes. In particular, we introduce the notion of higher order MDS codes, which is a natural generalization of MDS (Maximum Distance Separable) codes and prove a tight relation to MR tensor codes.

\paragraph{Higher order MDS codes}: A $(n,k)$-MDS code encodes $k$
symbols into $n$ symbols, such that one can correct any ${k}$
erasures. Moreover, Reed-Solomon codes are explicit constructions of
MDS codes over fields of size $O(n)$. MDS codes play crucial role in
coding theory and especially in erasure coding for distributed
storage. Suppose the generator matrix of an $(n,k)$-code $C$ over $\F$
is given by a $k\times n$ matrix V, whose columns are denoted by
$V_1,V_2,\dots,V_n\in \F^k.$ Let $V_A=\operatorname{span}\lc V_i: i\in
A\rc$.\footnote{{In some instances, $V_A$ will denote the
    matrix whose rows are the $V_i$ with $i \in A$.}} Then $C$ is MDS iff $V_A=\F^k$ for all $A\subset[n]$ of size $|A|=k$. An equivalent way to state this is to say that for any subsets $A_1,A_2\subset [n]$, $\dim(V_{A_1} \cap V_{A_2}) = \dim(W_{A_1}\cap W_{A_2})$ for some generic matrix $W$.\footnote{\textbf{Genericity:} A generic point $X$ can be thought of either as a symbolic vector, or one can think of it as a point with entries in an infinite field $\F$ which avoids any fixed low-dimensional algebraic variety. If $\F=\R$ or $\C$, then one can think of a generic point as something which escapes any measure zero set. In particular, low-dimensional varieties are measure zero sets.}
One direction is obvious, this new condition clearly implies the usual
MDS definition by taking $A_2=[n]$. To prove the other direction note
that\footnote{{Here, the sum of two vectors spaces is $V_1 + V_2 :=
  \operatorname{span}\{v_1 + v_2: v_1 \in V_1, v_2 \in V_2\}.$}}
\begin{align*}
\dim(W_{A_1}\cap W_{A_2})&=\dim(W_{A_1})+\dim(W_{A_2})-\dim(W_{A_1} + W_{A_2})\\
&=\dim(W_{A_1})+\dim(W_{A_2})-\dim(W_{A_1\cup A_2})\\
&=\min\{|A_1|,k\}+\min\{|A_2|,k\}-\min\{|A_1\cup A_2|,k\}\tag{Genericity of $W$}\\
&=\dim(V_{A_1})+\dim(V_{A_2})-\dim(V_{A_1\cup A_2})\tag{$V$ is MDS}\\
&=\dim(V_{A_1})+\dim(V_{A_2})-\dim(V_{A_1}+V_{A_2})\\
&=\dim(V_{A_1}\cap V_{A_2}).
\end{align*}
This leads to a natural generalization of MDS codes to higher order MDS codes.

\begin{definition}[Higher order MDS code ($\MDS(\ell)$)]\label{def:higherMDS}
  Let $C$ be an $(n,k)$-code with generator matrix $V_{k\times n}.$
  For $\ell \ge 2$, we say that $C$ is an $\MDS(\ell)$ code if for all $A_1, \hdots, A_{\ell} \subset [n]$,
  \begin{equation}
  	\label{eqn:MDSell}
    \dim(V_{A_1} \cap V_{A_2} \cap \cdots \cap V_{A_\ell})=\dim(W_{A_1} \cap \cdots \cap W_{A_{\ell}}),
  \end{equation}
  where $W$ is a $k\times n$ generic matrix. %
\end{definition}

Note that the definition of $\MDS(\ell)$ is independent of the generator chosen to represent the code $C$ and therefore purely a property of the code $C$. This is because of invariance of the condition (\ref{eqn:MDSell}) under basis change $V \to A\cdot V$ for any invertible $k\times k$ matrix $A.$

\paragraph{Example - $\MDS(3)$}: It is instructive to look at an example. Suppose $C$ is an $(n,3)$-$\MDS(3)$ code over $\F$. Let $V$ be its generator matrix of size $3 \times n$ and let $V_1,V_2,\dots,V_n\in \F^3$ be its columns. By abusing notation, we can think of $V_i$ as points in the two dimensional projective space $\mathbb{P}^2(\F)$, because scaling the vectors does affect the definition of $\MDS(3).$ Therefore the subspace $\Span\{V_i,V_j\}$ corresponds to a line passing through the points $V_i,V_j\in \mathbb{P}^2(\F).$ Now the usual notion of MDS (which is equivalent to $\MDS(2)$), corresponds to the condition that no three points among ${V_1,V_2,\dots,V_n}\subset \mathbb{P}^2(\F)$ are collinear. The $\MDS(3)$ condition corresponds to the condition that no three points are collinear and if we draw all the lines through pairs of points in ${V_1,V_2,\dots,V_n}$, then no three lines are concurrent (i.e., pass through the same point) other than the trivial concurrency which occurs when the the three lines chosen pass through some $V_i$. Thus $\MDS(3)$ can be thought of a higher order genericity condition than $\MDS(2)$ and therefore strictly stronger condition than $\MDS(2)$. In particular an arbitrary MDS code may not be $\MDS(3).$ More generally, $\MDS(\ell)$ require that $\ell$-wise intersections behave generically. Thus $\ell$ can be thought of as the degree of genericity of the points $V_1,V_2,\dots,V_n.$

We now state one of our main theorems relating MR tensor codes with higher order MDS codes.
\begin{theorem}
	\label{thm:MRtensorcodesa1_MDSm}
	Let $C=C_\col \otimes C_\row$ be an $(m,n,a=1,b)$-tensor code where $C_\col$ is a simple parity check code. Then $C$ is MR iff $C_\row$ is an $\MDS(m)$ code.
\end{theorem}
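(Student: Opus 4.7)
My plan is to translate correctability into a subspace-intersection condition, from which both directions follow; the reverse direction requires a nontrivial enlargement argument.

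Let $V$ be a generator matrix of $C_\row$ with columns $V_1,\ldots,V_n \in \F^{n-b}$, and recall $V_A = \Span\{V_i : i \in A\}$. Every codeword of $C$ has rows $M_i = x_iV$ with $x_i \in \F^{n-b}$ satisfying $\sum_i x_i = 0$, the column parity check translated through the full-rank map $V$. For an erasure pattern $E$ with row slices $E_i$ and complements $A_i = [n]\setminus E_i$, the codeword vanishes on all unerased entries iff $x_i \in V_{A_i}^\perp$ for each $i$. Hence $E$ is correctable iff $V_{A_1}^\perp + \cdots + V_{A_m}^\perp$ is an internal direct sum in $\F^{n-b}$, equivalent by dualization to
\[
\dim\bigl(V_{A_1}\cap\cdots\cap V_{A_m}\bigr) \;=\; \sum_{i=1}^m \dim V_{A_i} - (m-1)(n-b), \qquad(\star)
\]
which is the equality case of the always-valid $\dim(\cap V_{A_i}) \ge \sum \dim V_{A_i} - (m-1)(n-b)$. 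The pattern $E$ is information-theoretically correctable iff $(\star)$ holds for a generic $W$; call such $(A_i)$ \emph{info-correctable}.

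For the forward direction, if $V$ is $\MDS(m)$ then $\dim V_{A_i}$ and $\dim(\cap V_{A_i})$ both agree with their generic counterparts, so $(\star)$ for $V$ follows from $(\star)$ for generic $W$. Hence every info-correctable pattern is correctable by $C$, giving MR.

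For the reverse direction, first apply MR to patterns where several rows share identical erased coordinates to conclude $V$ is MDS. Since generic $W$ minimizes $\dim(\cap W_{A_i})$ over all codes realizing the same matroid, $\dim(\cap V_{A_i}) \ge \dim_{\rm gen}(\cap W_{A_i})$ automatically; I need the reverse inequality. Given arbitrary $(A_1,\ldots,A_m)$, my strategy is to iteratively enlarge them to an info-correctable $(B_1,\ldots,B_m) \supseteq (A_i)$ with $\dim_{\rm gen}(\cap W_{B_i}) = \dim_{\rm gen}(\cap W_{A_i})$; then MR applied to $(B_i)$ gives $\dim(\cap V_{B_i}) = \dim_{\rm gen}(\cap W_{B_i}) = \dim_{\rm gen}(\cap W_{A_i})$, and monotonicity $\cap V_{A_i}\subseteq \cap V_{B_i}$ yields the desired upper bound.

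The enlargement step is the main technical obstacle: if $(A_i)$ is not info-correctable, I add a fresh coordinate $j \notin \bigcup_k A_k$ to some $A_i$ with $|A_i| < n-b$ chosen so that $W_{A_i} + \bigcap_{k\ne i} W_{A_k} \ne \F^{n-b}$; a direct dimension computation with generic $W_j$ then shows $\dim_{\rm gen}(\cap W_{A_i})$ is preserved while $\sum \dim W_{A_i}$ strictly increases. The key combinatorial lemma is that the opposing case ($W_{A_i} + \bigcap_{k\ne i} W_{A_k} = \F^{n-b}$ for every $i$) itself forces info-correctability, so a valid enlargement is available whenever needed, and the process terminates.
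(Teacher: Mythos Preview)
Your overall architecture matches the paper's: both reduce correctability to a subspace-intersection statement (your $(\star)$ is a slight generalization of the paper's Lemma~\ref{lem:correctability_intersection_a1}, which treats only maximal patterns), and both handle the hard direction MR $\Rightarrow \MDS(m)$ by an enlargement argument. The paper packages that enlargement as Lemma~\ref{lemma:ultimate-mds-equiv}, whose proof first \emph{shrinks} the tuple to make the generic intersection zero and then grows back up, whereas you attempt a grow-only procedure that preserves the generic intersection dimension throughout. Either route is viable.

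There is, however, a concrete gap in your enlargement step: you insist on adding a \emph{fresh} coordinate $j \notin \bigcup_k A_k$, and such a $j$ need not exist. Take $m=3$, $n=4$, $b=1$ (so $n-b=3$) and $A_1=\{1,2\}$, $A_2=\{1,3\}$, $A_3=\{1,4\}$. Then $\bigcup_k A_k=[n]$, every $|A_i|<n-b$, and the tuple is not info-correctable since $W_1\in\bigcap_k W_{A_k}$, so your procedure stalls with no fresh index available. The repair is simple and does not change your strategy: instead of a fresh $j$, add any $j\in[n]$ with $W_j\notin W_{A_i}+\bigcap_{k\ne i}W_{A_k}$. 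Such a $j$ always exists because that subspace is proper by your choice of $i$ and the columns $W_1,\ldots,W_n$ span $\F^{n-b}$; automatically $j\notin A_i$, and the same one-line computation you sketch shows $\dim_{\mathrm{gen}}(\bigcap_k W_{A_k})$ is unchanged while $\sum_i|A_i|$ strictly increases. With this patch your argument goes through; the paper's shrink-then-grow simply sidesteps the issue by reducing to the zero-intersection case first.
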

Therefore constructing MR tensor codes when $a=1$ is equivalent to constructing higher order MDS codes. Our next result is a lower bound on the field size required for higher order MDS codes. We will be stating it in the regime when the codimension of the code is a constant, which is the regime of interest in practice.

\begin{theorem}\label{thm:main-lower-bound}
	Let $C$ be an $(n,k)$-$\MDS(\ell)$ code over $\F_q$. Then $$q \ge \Omega_{\ell}\lp n^{\min\{\ell,k,n-k\}-1}\rp.$$
\end{theorem}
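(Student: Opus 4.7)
The plan is to reduce to the clean base case of $(n', \ell)$-$\MDS(\ell)$ codes (where the dimension equals the MDS order), then execute a counting argument in $\mathbb{P}^{\ell - 1}(\F_q)$. First, two reductions: (i) since $\MDS(\ell)$ implies $\MDS(\ell')$ for any $\ell' \le \ell$ by setting $A_{\ell'+1} = \cdots = A_\ell = [n]$, we may replace $\ell$ by $\min(\ell, k, n-k)$ from the outset; (ii) using the weak duality for $\MDS(\ell)$ codes established earlier in the paper, we pass to the dual if necessary and assume $k \le n/2$.

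Next, I project down to dimension $\ell$. Pick any $k - \ell$ columns $V_{j_1}, \ldots, V_{j_{k-\ell}}$ (linearly independent by $\MDS(2)$), let $W_0 = \Span(V_{j_1}, \ldots, V_{j_{k-\ell}})$, and let $\pi \colon \F_q^k \twoheadrightarrow \F_q^k / W_0 \cong \F_q^\ell$. Projecting the remaining $n' := n - k + \ell$ columns yields an $(n', \ell)$-code. For any subsets $A_1', \ldots, A_\ell'$ of the remaining index set,
\[
  \bigcap_{i=1}^{\ell} \pi(V_{A_i'}) \;=\; \pi\!\left( \bigcap_{i=1}^{\ell} V_{A_i' \cup \{j_1, \ldots, j_{k-\ell}\}} \right),
\]
since each subspace on the right contains $W_0$. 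Both the actual (by $\MDS(\ell)$) and generic intersection dimensions drop by exactly $\dim W_0 = k - \ell$ under $\pi$, so the projected code inherits $\MDS(\ell)$.

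For the projected $(n', \ell)$-$\MDS(\ell)$ code, assign to each size-$(\ell - 1)$ collection $T = \{A_1, \ldots, A_{\ell-1}\} \subseteq \binom{[n']}{\ell-1}$ the intersection $L_T := V_{A_1} \cap \cdots \cap V_{A_{\ell-1}}$. Each $V_A$ is a hyperplane in $\F_q^\ell$, and distinct $A$'s give distinct hyperplanes (else $V_{A \cup A'}$ would have dimension $< \ell$ despite $|A \cup A'| \ge \ell$, violating $\MDS(2)$). By $\MDS(\ell)$, $L_T$ matches the generic intersection of $\ell-1$ distinct hyperplanes in $\F_q^\ell$, which is 1-dimensional, so $L_T$ is a point of $\mathbb{P}^{\ell - 1}(\F_q)$. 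The key distinctness claim is: if $T \neq T'$, pick $B \in T' \setminus T$; then $L_T = L_{T'} \subseteq V_{A_1} \cap \cdots \cap V_{A_{\ell-1}} \cap V_B$ is an $\ell$-fold intersection of pairwise distinct hyperplanes, whose generic---hence by $\MDS(\ell)$ actual---dimension is $\max(0, \ell - \ell) = 0$, contradicting $L_T$ being 1-dimensional. Hence the number of distinct points is at least $\binom{\binom{n'}{\ell-1}}{\ell-1} = \Omega_\ell((n')^{(\ell-1)^2})$, and since $|\mathbb{P}^{\ell-1}(\F_q)| = \Theta(q^{\ell-1})$, we conclude $q \ge \Omega_\ell((n')^{\ell-1})$. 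Because $k \le n/2$ forces $n' \ge n/2$, this yields $q \ge \Omega_\ell(n^{\ell-1})$; combined with the initial reductions we obtain $q \ge \Omega_\ell(n^{\min(\ell, k, n-k) - 1})$.

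The most delicate step is the distinctness claim: the argument crucially exploits that $\MDS(\ell)$ pins down exactly $\ell$-wise intersections, and the witness $B \in T' \setminus T$ is precisely what keeps us in the $\ell$-fold regime rather than $(\ell+1)$-fold (where $\MDS(\ell)$ would say nothing directly). The projection step is mostly mechanical but requires carefully tracking how generic dimensions transform under the quotient when all subspaces contain a common kernel. Finally, the reduction to $k \le n/2$ leans on the paper's weak duality for $\MDS(\ell)$ codes, which is the one external input to the argument.
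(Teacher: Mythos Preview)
Your counting argument contains a genuine error: you conflate ``$W_{A_i}$ for a generic matrix $W$'' with ``generic hyperplanes.'' The $\MDS(\ell)$ condition guarantees $\dim(\bigcap_i V_{A_i}) = \dim(\bigcap_i W_{A_i})$, but the right-hand side is \emph{not} $\max(0,\ell-\ell)$ in general---it depends on how the sets $A_i$ overlap, since overlapping $A_i$'s force the hyperplanes $W_{A_i}$ to share columns and hence be correlated. Concretely, for $\ell=3$ take $T=\{\{1,2\},\{1,3\}\}$ and $T'=\{\{1,2\},\{1,4\}\}$: then $L_T=L_{T'}=\Span(V_1)$, so distinctness fails. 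Your proposed witness $B=\{1,4\}\in T'\setminus T$ gives $V_{\{1,2\}}\cap V_{\{1,3\}}\cap V_{\{1,4\}}$, and generically $W_{\{1,2\}}\cap W_{\{1,3\}}\cap W_{\{1,4\}}=\Span(W_1)\neq 0$, so $\MDS(3)$ does \emph{not} force this intersection to vanish. For $\ell\ge 4$ even the claim that $L_T$ is one-dimensional fails: with $A_1=\{1,2,3\}$, $A_2=\{1,2,4\}$, $A_3=\{1,2,5\}$ one gets $\dim L_T=2$.

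There is also a secondary gap in the duality step: the paper's weak duality (Lemma~\ref{lem:weak-duality}) only yields cycle-$\MDS(\ell)$ in the dual, not full $\MDS(\ell)$, so after dualizing you cannot directly invoke shortening (Proposition~\ref{prop:reduction}(3)), which is stated for $\MDS(\ell)$ codes. The paper avoids both issues by restricting attention to \emph{disjoint} sets $A_1,\dots,A_\ell$ (the weak-$\MDS(\ell)$ condition): it partitions $[n]$ into $\ell$ blocks, draws each $A_i$ from a distinct block, and---since a direct pigeonhole no longer suffices once the sets are constrained to separate blocks---runs a second-moment argument showing that a random point of $\F_q^\ell$ lands in some $V_{A_i}$ from each block with positive probability when $q\ll n^{\ell-1}$. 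The reductions then only need cycle-$\MDS$ (hence weak-$\MDS$) on the dual side, which weak duality does provide.
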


This immediately implies the following corollary for MR tensor codes.
\begin{corollary}\label{cor:main-lower-bound}
	Let $C$ be an $(m,n,a,b)$-MR tensor code over $\F_q$. Then $$q \ge \Omega_{m}\lp n^{\min\{m-a+1,b,n-b\}-1} \rp.$$
\end{corollary}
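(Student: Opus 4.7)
The plan is to reduce Corollary~\ref{cor:main-lower-bound} to Theorem~\ref{thm:main-lower-bound} through Theorem~\ref{thm:MRtensorcodesa1_MDSm}, by showing that an $(m,n,a,b)$-MR tensor code restricts to an $(m-a+1, n, 1, b)$-MR tensor code on a suitable subset of $m-a+1$ rows. When $a=1$ no restriction is needed: $C_\col$ is already a simple parity check code (an $(m,m-1)$-MDS code), Theorem~\ref{thm:MRtensorcodesa1_MDSm} gives that $C_\row$ is $\MDS(m)$, and Theorem~\ref{thm:main-lower-bound} with $\ell = m$ and $k = n-b$ produces the bound $q \geq \Omega_m(n^{\min\{m, n-b, b\}-1})$, matching the statement.

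For $a > 1$, fix any $S \subseteq [m]$ with $|S| = m-a+1$, and consider the restriction $C|_S := \{c|_S : c \in C\}$. It is a standard consequence of MR-ness that $C_\col$ is $(m, m-a)$-MDS (otherwise a nonzero rank-one codeword $vu^T \in C$, with $v \in C_\col$ of small support and $u \in C_\row$ of minimum weight, is supported on a rectangle that is information-theoretically correctable by a generic MDS tensor code, contradicting MR). Consequently, puncturing to $S$ is a bijection from $C_\col$ onto the $(m-a+1, m-a)$-MDS code $(C_\col)_S$, and a short linearity argument shows $C|_S = (C_\col)_S \otimes C_\row$; this is a tensor code with topology $(m-a+1, n, 1, b)$ whose column code is a simple parity check code.

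The main step is to show that $C|_S$ is MR. I would first establish the equivalence: for any $E' \subseteq S \times [n]$, setting $E := E' \cup (([m] \setminus S) \times [n])$, $C|_S$ corrects $E'$ if and only if $C$ corrects $E$. Indeed, a codeword $c \in C$ supported on $E$ is exactly one vanishing on $(S \times [n]) \setminus E'$, and the MDS property of $C_\col$ forces any $c \in C$ with $c|_S = 0$ to vanish identically (each column would have at least $m-a+1$ zeros, exceeding the $m-a-1$ allowed for a nonzero MDS codeword). To transfer information-theoretic correctability, take a generic MR $(m,n,a,b)$-tensor code $D^*$ over a sufficiently large field (existence follows from the Schwartz--Zippel argument). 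Its restriction $D^*|_S$ is a generic $(m-a+1, n, 1, b)$-tensor code and hence MR. If $E'$ is information-theoretically correctable in topology $(m-a+1, n, 1, b)$, then $D^*|_S$ corrects $E'$, so by the equivalence $D^*$ corrects $E$, showing that $E$ is information-theoretically correctable in topology $(m, n, a, b)$; since $C$ is MR, $C$ corrects $E$, and applying the equivalence once more gives that $C|_S$ corrects $E'$.

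Having shown $C|_S$ is MR, Theorem~\ref{thm:MRtensorcodesa1_MDSm} gives that $C_\row$ is $\MDS(m-a+1)$, and Theorem~\ref{thm:main-lower-bound} with $\ell = m-a+1$, $k = n-b$, $n-k = b$ yields the desired $q \geq \Omega_m(n^{\min\{m-a+1, n-b, b\}-1})$. The main obstacle is the MR-preservation claim under restriction: transferring information-theoretic correctability between the two topologies must route through the genericity argument above, since no clean combinatorial characterization of correctable patterns is known for $(m,n,a,b)$ topology when $a > 1$.
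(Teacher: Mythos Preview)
Your proposal is correct and follows essentially the same route as the paper: puncture the column code to $m-a+1$ positions, show the resulting $(m-a+1,n,1,b)$-tensor code is MR, invoke Theorem~\ref{thm:MRtensorcodesa1_MDSm} to deduce that $C_\row$ is $\MDS(m-a+1)$, and finish with Theorem~\ref{thm:main-lower-bound}. The paper's proof is quite terse at the MR-preservation step (it simply asserts a ``canonical injective map'' between correctable patterns), whereas you spell this out carefully by routing through a generic witness $D^*$; both arguments encode the same content, namely that $E' \mapsto E' \cup (([m]\setminus S)\times[n])$ sends information-theoretically correctable patterns to information-theoretically correctable patterns.
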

This is the first general lower bound on the field size required for MR tensor codes. Prior to our work, only a quadratic lower bound was known in the special case of $(m=4,n,a=1,b=2)$~\cite{kong2021new}, but unfortunately we cannot recover this bound from our general result.

We also show the following upper bound on field size for MR tensor codes, generalizing \cite{kong2021new} where they obtained such a result for $a=1$.
\begin{theorem}\label{thm:upper}
	There exist $(m,n,a,b)$-MR tensor codes over fields of size $$q=O_{m,b}\lp n^{b(m-a)}\rp.$$
\end{theorem}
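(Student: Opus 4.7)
The plan is to prove this via the Schwartz--Zippel lemma with a careful union bound, generalizing the approach of~\cite{kong2021new} (which treats $a=1$) to general $a$. I would first fix the column code: since $m$ is treated as a constant, let $C_\col$ be any explicit $(m, m-a)$-MDS code over a field $\F_p$ of size $p = O(m)$ (e.g.\ Reed--Solomon). The row code $C_\row$ is then parameterized generically: let its parity-check matrix $H_r \in \F^{b \times n}$ have entries $\mathbf{x} = (x_{\ell, j})_{\ell \in [b], j \in [n]}$ treated as formal indeterminates, so the tensor code $C(\mathbf{x}) = C_\col \otimes C_\row(\mathbf{x})$ is a function of $\mathbf{x}$. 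Then $C(\mathbf{x})$ is MR iff some $|E| \times |E|$ determinantal minor $P_E(\mathbf{x})$ of the tensor parity-check matrix is nonzero for every \emph{admissible} (combinatorially possibly-correctable) erasure pattern $E \subseteq [m] \times [n]$. Since correctability is closed under subsets, it suffices to verify this only for maximum admissible patterns, which have $|E| = an + bm - ab$.

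The key analytical ingredient is a degree bound: $\deg(P_E) \leq bm$, independent of $n$. This holds because each indeterminate $x_{\ell, j}$ appears only in the ``row-check'' (Type B) entries of the tensor parity-check matrix, of which there are exactly $bm$ in total; any monomial in the determinant expansion uses at most one entry per selected row, so at most $bm$ of the indeterminates appear in any monomial. The main combinatorial step is to bound the number of maximum admissible patterns by $O_{m,b}(n^{b(m-a)})$. A naive enumeration gives $\binom{mn}{an+bm-ab}$, which is exponential in $n$ and far too weak. Instead, I would extend the admissibility (``regularity'') analysis from~\cite{GHKSWY, kong2021new} to $a > 1$: the rough intuition is that a maximum admissible pattern decomposes into an ``$a$-per-column base'' that is handled locally by the column code plus $b(m-a)$ ``excess'' erasure positions whose placement (together with combinatorial data of size $O_{m,b}(1)$) determines the associated polynomial condition up to equivalence, giving the $O(n^{b(m-a)})$ count.

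Combining the $O_{m,b}(1)$ degree bound with the $O_{m,b}(n^{b(m-a)})$ pattern count, Schwartz--Zippel and a union bound show that a uniformly random $\mathbf{x} \in \F_q^{bn}$ produces an MR code with positive probability whenever $q > C(m,b) \cdot n^{b(m-a)}$; taking $q$ to be the smallest prime power of this order containing $\F_p$ yields the desired $(m,n,a,b)$-MR tensor code. The main obstacle will be establishing the sharp $O_{m,b}(n^{b(m-a)})$ count with the correct exponent $b(m-a)$. The argument in~\cite{kong2021new} leverages the fact that for $a=1$ the column code is a single parity check, which strongly constrains the admissible patterns; generalizing to $a > 1$ requires carefully tracking how multiple column parities interact with row erasures and isolating which portions of a pattern are locally correctable (and hence contribute no new polynomial constraint). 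Once this count is in hand, the rest of the argument is a routine Schwartz--Zippel application.
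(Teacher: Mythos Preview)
Your overall Schwartz--Zippel strategy is the same as the paper's, but there are two substantive differences worth flagging.

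First, fixing $C_\col$ to an explicit MDS code is not obviously safe when $a>1$. For the polynomial $P_E(\mathbf{x})$ (in the $H_r$-variables only) to be nonzero for every admissible $E$, you need that your particular MDS column code, tensored with a generic row code, is already MR. For $a=1$ this holds because the $(m,m-1)$-MDS code is unique up to symmetries, but for $a>1$ there are many inequivalent $(m,m-a)$-MDS codes and no result in the paper guarantees each of them yields an MR tensor code with generic row code. The paper sidesteps this by randomizing both $H_{\mathrm{row}}$ and $H_{\mathrm{col}}$; the extra MDS constraints on $H_{\mathrm{col}}$ cost only $O_{m,a}(1)$ in total degree.

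Second, and more importantly, the paper does not count \emph{maximum} patterns modulo ``polynomial equivalence'' at all. Instead it reduces to \emph{minimal} correctable patterns: those in which every nonempty row has at least $b+1$ erasures and every nonempty column has at least $a+1$. Any non-minimal pattern can be iteratively peeled (correct a row with $\le b$ erasures using the row MDS property, or a column with $\le a$ erasures using the column MDS property) down to a minimal one, so it suffices to impose the rank condition only on minimal patterns. The key combinatorial step is then a two-line calculation: if a minimal correctable $E$ has $u$ nonempty rows and $v$ nonempty columns, regularity gives $|E|\le ub+va-ab$ while minimality gives $|E|\ge v(a+1)$, hence $v\le b(u-a)\le b(m-a)$. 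Thus each minimal pattern spans at most $b(m-a)$ columns and has size $O_{m,a,b}(1)$, giving at once both the count $\binom{n}{\le b(m-a)}\cdot O_{m,a,b}(1)=O_{m,a,b}(n^{b(m-a)})$ and a degree bound of $O_{m,a,b}(1)$ per constraint. Your ``$a$-per-column base plus $b(m-a)$ excess positions'' intuition is pointing toward this reduction, but the clean formalization is the peel-to-minimal argument, not an equivalence relation on polynomials attached to maximum patterns.
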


Finally, we also give an efficient polynomial time algorithm for checking if an erasure pattern is correctable by an MR tensor code when $a=1$.
\begin{theorem}\label{thm:fast}
	There exists an efficient algorithm to check if an erasure pattern is correctable by an $(m,n,a=1,b)$-MR tensor code in time $m(m+n)^3.$
\end{theorem}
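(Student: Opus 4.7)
The plan is to invoke Theorem \ref{thm:MRtensorcodesa1_MDSm} to recast correctability of an erasure pattern $E = (E_1,\ldots,E_m)$ as a linear algebra question on the row code $C_\row$, which is guaranteed to be $\MDS(m)$.

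First, I would reformulate correctability. A codeword of $C_\col \otimes C_\row$ supported in $E$ consists of rows $(c_1,\ldots,c_m)$ with $c_i \in L_i := \{c \in C_\row : \supp(c) \subseteq E_i\}$ and satisfying $\sum_i c_i = 0$ (the column-parity constraint). Hence $E$ is correctable iff the summation map $\bigoplus_i L_i \to C_\row$ is injective, equivalently $\dim(\sum_i L_i) = \sum_i \dim L_i$. Since $C_\row$ is MDS, $\dim L_i = \max(0, |E_i|-b)$, computable directly from $E$.

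Second, I would use Definition \ref{def:higherMDS} to argue that $\dim(\sum_i L_i)$ attains its generic value. Writing $V$ for a generator of $C_\row$ and $S_i := [n]\setminus E_i$, the duality $\sum_i V_{S_i}^\perp = (\bigcap_i V_{S_i})^\perp$ gives $\dim(\sum_i L_i) = k - \dim(\bigcap_i V_{S_i})$, and the latter has generic dimension by $\MDS(m)$. Consequently the correctability answer depends only on the combinatorics of $E$ (given $m,n,b$), and can be computed using \emph{any} sufficiently generic $V$ — an explicit $\MDS(m)$ generator from Theorem \ref{thm:upper}, a symbolic matrix over a rational function field, or a random large-field matrix all suffice.

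Given such a $V$, the algorithm maintains a running subspace $U \subseteq \F^n$, initially $\{0\}$, together with a basis. For each $i = 1,\ldots,m$ it computes a basis $B_i$ of $L_i$ by Gaussian elimination on the left kernel of the $k \times |S_i|$ submatrix $V|_{S_i}$, and then attempts to extend the basis of $U$ by the elements of $B_i$ via incremental linear independence testing. If at any step a vector of $B_i$ already lies in $U$, the pattern is declared \emph{not correctable}; otherwise, after all $m$ iterations, it is declared \emph{correctable}. Each iteration performs $O(1)$ Gaussian eliminations on matrices of size at most $O(m+n) \times O(m+n)$, giving total runtime $O(m(m+n)^3)$.

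The main obstacle I anticipate is the care needed to justify that the algorithm's output is identical for every realization $V$ of an $\MDS(m)$ code; this requires the dual invariance from Definition \ref{def:higherMDS} applied across all the subsets $T \subseteq [m]$ implicitly appearing in the iterative extension. A secondary technical point is to execute the genericity step without assuming access to a pre-constructed $\MDS(m)$ code — handled either by symbolic computation (with the cost absorbed into field operations) or by invoking Theorem \ref{thm:upper} to instantiate $V$ over a polynomial-size field.
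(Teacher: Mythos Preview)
Your linear-algebraic reformulation is correct: correctability of $E$ is equivalent to $\dim\bigl(\sum_i L_i\bigr)=\sum_i\dim L_i$, which via duality is governed by $\dim\bigl(\bigcap_i V_{S_i}\bigr)$, and by the $\MDS(m)$ property the latter is the generic value. So the answer indeed depends only on the combinatorics of $E$. The problem is the step you flag as ``secondary'': obtaining a concrete $V$ on which to run Gaussian elimination. This is not a technicality but the whole difficulty, and none of your three proposed fixes works.

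Theorem~\ref{thm:upper} is purely existential (proved via Schwartz--Zippel); it gives no explicit $\MDS(m)$ generator, and producing one is precisely the paper's main open problem. A random $V$ yields only a randomized algorithm, whereas the claim is deterministic. Finally, working symbolically over $\F(x_{11},\dots,x_{kn})$ does not keep the cost polynomial: Gaussian elimination produces rational functions whose numerators are minors of the block matrix in Claim~\ref{claim:matrix-magic}, and these are multivariate polynomials in $kn$ variables of degree $\Theta(mk)$ with a number of monomials that is in general exponential. Deciding the generic rank of such a structured matrix is a symbolic-rank/PIT-type question with no known deterministic polynomial-time solution in this generality; indeed, the whole point of Section~\ref{sec:regularity} is to give a combinatorial shortcut for exactly this computation.

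The paper's route avoids linear algebra entirely. It uses the theorem of \cite{GHKSWY} that for $a=1$ correctability is equivalent to the combinatorial \emph{regularity} condition, then proves (Theorem~\ref{thm:regular-iff-excess}) that regularity is equivalent to \emph{excess-compatibility}, a Hall-type matching condition. Excess-compatibility for each choice of $V\subset[n]$ of size $n-b$ is a max-flow instance on $O(m+n)$ vertices, and by symmetry one needs only $\binom{m}{a}=m$ such flow computations, giving the stated $O\bigl(m(m+n)^3\bigr)$ bound. Your approach would become valid if an explicit polynomial-size $\MDS(m)$ construction were available, but absent that it does not yield a deterministic algorithm.
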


\subsection{Proof Overview}
\paragraph{Higher order MDS - MR tensor code equivalence.} The proof of the equivalence between $(m,n,a=1,b)$-MR tensor codes and $\MDS(m)$ codes follows from some linear algebra and inductive arguments.

\paragraph{Field size lower bound.} Our lower bound for $\MDS(m)$ is
inspired by lower bounds for Maximally Recoverable Local
Reconstruction Codes from \cite{gopi2020maximally} and works as
follows. We will first prove a lower bound when the code dimension is
small. 
{We use the probabilistic method to show that if the field size is too small, then there will be subspaces which intersect non-trivially, but which shouldn't generically. See the discussion before Lemma~\ref{lem:weak-MDS-lowerbound} for a high-level overview of the proof.}
We then prove {a} weak duality for $\MDS(m)$ codes, which implies the lower
bound when the codimension is small.

\paragraph{Efficient correctability checking when $a=1$.} 
The previous work of \cite{GHKSWY} showed that $E$ is a correctable pattern for a $(m,n,a=1,b)$-MR tensor code if and only if the erasure pattern {satisfies} a combinatorial condition called \emph{regularity}. The regularity condition upper bounds the size of the intersection of $E$ with any subrectangle of $[m] \times [n]$. We show that the inequalities in this regularity condition correspond to capacity constraints in a suitable max-flow problem. We call this property \emph{excess-{compatibility}}. We show that excess-compatibility is {equivalent} to regularity by applying a generalization of Hall's marriage theorem on the existence of matchings \cite{bokal2012generalization}. As a result, we show that checking regularity is equivalent to a polynomial-sized maximum flow problem.

{ The work of Shivakrishna,
  et.al.~\cite{Shivakrishna_MRproduct} also considers a notion similar
  to excess-compatibility. In particular, they show that regularity of
  a pattern implies certain matching conditions on the bipartite graph
  induced by the erasure pattern (i.e., Lemmas II.2 and II.4 of their
  paper), and thus their results can be viewed as an analogue of the
  ``regularity implies excess-compatibility'' half of the proof
  Theorem~\ref{thm:regular-iff-excess}. However, our work appears to
  be the first to show that excess-compatibility is equivalent to regularity and the first to give a polynomial time algorithm for checking regularity.
}

\subsection{Open Questions}
The biggest open question is to obtain constructions of $(n,n-b)$-$\MDS(m)$ codes (or equivalently $(m,n,a=1,b)$-MR tensor codes). Our lower bounds show that we need fields of size at least $\Omega_{m,b}\lp n^{\min\{b,m\}-1} \rp$ whereas the upper bounds (which are not explicit) are $q=O_{m,b}(n^{b(m-1)})$. Closing this gap and getting explicit constructions of $\MDS(m)$ codes over small fields is the main question we leave open. Concretely, our lower bound shows that $(n,n-3)$-$\MDS(3)$ codes require fields of size $q\ge \Omega(n^2)$. We conjecture that this is tight. By $\MDS(3)$ duality (Corollary~\ref{cor:mds3-duality}), this is equivalent to constructing $(n,3)$-$\MDS(3)$ codes.
\begin{conjecture} 
	There exist $(n,3)$-$\MDS(3)$ codes over fields of size $q=O(n^2).$
\end{conjecture}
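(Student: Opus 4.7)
By the geometric interpretation given in the paper's $\MDS(3)$ example, constructing an $(n,3)$-$\MDS(3)$ code over $\F_q$ is equivalent to finding $n$ points $V_1, \ldots, V_n \in \mathbb{P}^2(\F_q)$ such that no three are collinear and no three chords $\overline{V_i V_j}$ are concurrent except at a common endpoint $V_k$. A short case check---using the MDS property to handle chord-triples that share an index---reduces the nontrivial requirement to the following: for every triple of pairwise \emph{disjoint} index-pairs $\{i_1,j_1\}, \{i_2,j_2\}, \{i_3,j_3\}$, the three chords do not share a common point.

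\textbf{Reed--Solomon attempt.} My first attempt would be the Reed--Solomon parametrization $V_i = (1, \alpha_i, \alpha_i^2)$, $\alpha_i \in \F_q$. The chord through $V_i, V_j$ has equation $\alpha_i\alpha_j X - (\alpha_i + \alpha_j) Y + Z = 0$, so three chords are concurrent iff the $3\times 3$ determinant $D := \det\bigl((p_\ell, s_\ell, 1)\bigr)_{\ell=1,2,3}$ vanishes, where $s_\ell = \alpha_{i_\ell}+\alpha_{j_\ell}$ and $p_\ell = \alpha_{i_\ell}\alpha_{j_\ell}$; this is a degree-$3$ polynomial in the six relevant $\alpha$'s. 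Since there are $\Theta(n^6)$ such pair-triples, the naive Schwartz--Zippel union bound over random $\alpha_i \in \F_q$ only yields $q = \Omega(n^6)$. To reach the conjectured $q = O(n^2)$ one must exploit algebraic correlations: $D=0$ says the derived points $(s_{ij}, p_{ij}) \in \F_q^2$ (indexed by pairs $\{i,j\}$) are collinear in the plane, so the task reduces to choosing $\alpha_i$'s forcing the $\binom{n}{2}$ derived points to be in very general position. Natural candidates include Sidon sets (guaranteeing distinct $s_{ij}$) and multiplicative subgroups (giving strong structure on $p_{ij}$).

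\textbf{Elliptic-curve alternative.} A more promising plan is to take $V_1,\ldots,V_n$ on the $\F_q$-rational points of a smooth cubic $E \subset \mathbb{P}^2(\F_q)$ with $|E(\F_q)| \approx q$. By the group law on $E$, three points of $E$ are collinear iff they sum to zero, so the MDS condition becomes a Sidon-type condition in the abelian group $E(\F_q)$. Moreover, three chords meeting at a point \emph{on} $E$ is equivalent to the three endpoint-pairs $(P_i,Q_i)$ having equal sums $P_i+Q_i$, another additive-combinatorics condition that can be arranged by selecting an $n$-element Sidon-type subset of $E(\F_q)$. When $q = \Theta(n^2)$, the group $E(\F_q)$ has order $\sim n^2$ and Sidon subsets of size $\sim n$ are known to exist.

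\textbf{Main obstacle.} The hardest case will be concurrency at an \emph{off-curve} point $X \notin E$. Projection from $X$ gives a degree-$3$ map $E \to \mathbb{P}^1$ whose fibers are collinear triples on $E$; three chords of our subset meeting at $X$ correspond to three such fibers each containing at least two of our selected points (with the ``third intersection'' possibly outside the subset). Since there are $\Theta(q) = \Theta(n^2)$ candidates for $X$, a naive count of bad configurations per $X$ already exhausts the probabilistic budget, so a straightforward deletion argument appears too lossy. I suspect closing the proof at the $q = O(n^2)$ threshold will require either a very specific choice of $E$ (e.g.\ supersingular or with complex multiplication, yielding extra endomorphisms) combined with an alteration argument to remove the few residual bad triples, or else a novel algebraic identity that collapses the $\Theta(n^6)$ instances of $D=0$ into only $\Theta(n^2)$ essentially distinct constraints on the $\alpha_i$'s.
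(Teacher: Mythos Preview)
This statement is a \emph{conjecture} in the paper, not a theorem; the paper does not prove it and explicitly leaves it open. The paper's only evidence is two partial constructions in Appendix~\ref{app:constructions}: a ``weak bipartite $\MDS(3)$'' construction (Lemma~\ref{lem:bipartite-stuff}) and a ``very weak $\MDS(3)$'' construction, each over fields of size $O(n^2)$, but each satisfying only a restricted subfamily of the $\MDS(3)$ intersection conditions (pairs drawn from two or three fixed blocks, respectively). Neither gives a genuine $(n,3)$-$\MDS(3)$ code.

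Your proposal is likewise not a proof, and you say so yourself. Your Reed--Solomon analysis is essentially the starting point the paper also uses: the paper's two partial constructions are Reed--Solomon-type parametrizations $(1,\alpha,\alpha^2)$ with the evaluation points placed in carefully chosen cosets/subgroups of $\F_p^*$ inside $\F_{p^2}$, precisely to force the $3\times 3$ determinant you call $D$ to be nonzero by making one coefficient in its $X$-expansion automatically nonvanishing. So your ``exploit algebraic correlations via structured choices of $\alpha_i$'' idea is exactly what the paper does---but the paper only manages to kill $D$ for restricted families of pair-triples, not all of them. Your elliptic-curve approach is new relative to the paper and is a reasonable line of attack, but your own ``Main obstacle'' paragraph correctly identifies why it does not close: concurrency at off-curve points gives $\Theta(q)=\Theta(n^2)$ projection centers, and a naive union bound or deletion argument is too lossy at the target field size.

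In short: there is no proof to compare against, and your write-up is a research outline with an honestly stated gap, not a proof of the conjecture.
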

We give evidence for this conjecture by giving constructions of codes over fields of size $q=O(n^2)$ which come very close to being $\MDS(3)$, see Appendix~\ref{app:constructions} for these constructions. In these constructions, we relax (\ref{eqn:MDSell}) by restricting which sets $A_1,A_2,A_3$ we consider.  For the first construction, we split $n$ into two halves, and require that each $A_i$ is a two-element set using one element from each set.\footnote{This may seem very restrictive, but by Lemma~\ref{lemma:ultimate-mds-equiv}, we assume that $|A_1|+|A_2|+|A_3| = 6$ and even each $|A_i| = 2$ (as long as the code is MDS).} For the second construction, we split $n$ into three parts and require that $A_i$ is a two-element subset of the $i$th part.

Another important open question is further characterizing correctable patterns for MR tensor codes.

\begin{question}
  Can we efficiently detect which erasure patterns of an $(m,n,a,b)$-MR tensor code are correctable when $a,b>1$?
\end{question}

Currently, there is no ``simple'' condition like that of regularity and no efficient deterministic algorithm which is known for testing correctability when $a,b>1$. We shall investigate this question more deeply in a future work.

{\paragraph{Subsequent Work.} A very recent work of
  Roth~\cite{roth2021higher} defined another notion of higher-order
  MDS codes. This type of higher-order MDS code is motivated by
  applications to list decoding. In a follow-up
  work~\cite{brakensiek2022generic}, we show that Roth's notion and
  our notion of higher-order MDS codes are essentially equivalent, up
  to taking the dual of the code. This surprising connection between
  MR tensor codes and list decodability has a number of applications,
  including resolving in the affirmative the long-standing open
  question of whether there exists Reed-Solomon codes achieving
  list-decoding capacity.}

\subsection*{Organization}

In Section~\ref{sec:prelim}, we formally define MR tensor codes. In Section~\ref{sec:higherMDS}, we develop the theory of $\MDS(m)$ codes, including studying its duality properties. In Section~\ref{sec:lower-bounds}, we shows the field size lower bounds for $\MDS(m)$ and MR-tensor codes. In Section~\ref{sec:regularity}, we show how to efficiently {test} whether an erasure pattern is correctable for $(m,n,a=1,b)$-MR tensor codes.

In Appendix~\ref{app:upper-bound}, we prove Theorem~\ref{thm:upper} on randomized constructions of MR-tensor codes. In Appendix~\ref{app:mds-extra}, we provide a number of proofs omitted from the main exposition. In Appendix~\ref{app:constructions}, we give constructions of codes which partially satisfy the $\MDS(3)$ property. 

\subsection*{Acknowledgements}
We would like to thank Sergey Yekhanin and Venkatesan Guruswami for
helpful discussions and encouraging us to work on this problem. We
also thank anonymous reviewers for helpful feedback on the paper.

\section{Preliminaries}\label{sec:prelim}
Let $\F$ be any field. Let $n > k \ge 1$ be integers. For any $A \subset [n]$ and a matrix $V$ with columns $v_1,v_2,\dots,v_n$, we use $V_A$ to denote the submatrix of $V$ formed by columns $\{v_i : i \in A\}$. 

A $(n,k)$-code $C$ is a $k$-dimensional subspace of $\F^n$. It can either be described using a generator matrix $G_{k\times n}$ such that $C=\set{G^Tx: x\in \F^k}$ or using a parity check matrix $H_{(n-k) \times n}$ such that $C=\set{y\in \F^n: Hy=0}$. Note that $G$ and $H$ have full row rank and $HG^T=0$. The rows of $G$ form a basis for $C$ and the rows of $H$ form a basis for the dual code $C^\perp$. Note that $G,H$ are not uniquely determined by $C$, but they are unique up to basis change.

\begin{proposition}
\label{prop:correctability}
Let $C$ be a $(n,k)$-code with generator matrix $G_{k\times n}$ and parity check matrix $H_{(n-k)\times n}$. Let $E\subset [n]$ be an erasure pattern and let $\barE=[n]\setminus E$. 
The following conditions are equivalent.
\begin{enumerate}
	\item $E$ is correctable i.e. given $x_\barE$ for some unknown $x\in C$, we can recover $x.$
	\item $G_\barE$ has rank $k$.
	\item $H_E$ has full column rank.
\end{enumerate}
\end{proposition}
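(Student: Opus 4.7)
The plan is to prove the chain of equivalences $(1) \Leftrightarrow (2) \Leftrightarrow (3)$ by unpacking each condition through elementary linear algebra on the generator and parity check maps.

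First I would establish $(1) \Leftrightarrow (2)$. Recoverability of $x$ from $x_{\barE}$ is by definition the statement that the restriction map $\pi_{\barE}: C \to \F^{\barE}$, $x \mapsto x_{\barE}$, is injective. Writing $x = G^T y$ for $y \in \F^k$, one has $x_{\barE} = G_{\barE}^T y$, so $\pi_{\barE}$ is injective iff the linear map $y \mapsto G_{\barE}^T y$ has trivial kernel, equivalently $\rank(G_{\barE}^T) = k$, equivalently $\rank(G_{\barE}) = k$.

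Next I would establish $(1) \Leftrightarrow (3)$. Correctability is equivalent to there being no nonzero codeword $x \in C$ supported inside $E$: indeed, if two codewords $x, x'$ agree on $\barE$, then $x - x' \in C$ is supported in $E$, and conversely. A vector $x$ with $x_{\barE} = 0$ lies in $C$ iff $H x = 0$, which reduces to $H_E \, x_E = 0$. Thus the no-nonzero-codeword condition is precisely that $H_E$ has trivial kernel on $\F^E$, i.e.\ $H_E$ has full column rank.

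Finally, I would note the natural dimension-counting sanity check that ties (2) and (3) directly: since $H G^T = 0$ and both matrices have full (row) rank $n-k$ and $k$ respectively, the rank-nullity relation applied to the submatrices $G_{\barE}$ and $H_E$ gives $\rank(G_{\barE}) = k - \dim(\ker H \cap \F^{\barE})$ and $\rank(H_E) = |E| - \dim\{x_E : x \in C, \, x_{\barE} = 0\}$, which can optionally be used to give a direct proof of $(2) \Leftrightarrow (3)$ without going through $(1)$.

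There is no real obstacle here; the whole proposition is a standard reformulation of decodability in terms of generator and parity check matrices, and the argument is a few lines of linear algebra once one writes down the definition of $\pi_{\barE}$ and uses $C = \ker H$.
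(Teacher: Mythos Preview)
Your proof is correct. The paper does not actually provide a proof of this proposition; it is stated without proof in the preliminaries as a standard fact about linear codes, so there is nothing to compare against beyond noting that your argument is the expected elementary one.
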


Therefore a maximal correctable erasure pattern has size $n-k.$ And codes which correct all erasure patterns of size $n-k$ are called MDS codes.
\begin{definition}(MDS Code)
	\label{def:MDS}
	A $(n,k)$-code is called an MDS code if it can correct every erasure pattern of size $n-k.$
\end{definition}
Reed-Solomon codes are explicit MDS codes and they can be constructed for all $k,n$ over fields of size $O(n)$ which is tight. We now present several equivalent properties of MDS codes.
\begin{proposition}
\label{prop:MDS}
Let $C$ be a $(n,k)$-code with generator matrix $G_{k\times n}$ and parity check matrix $H_{(n-k)\times n}$.
The following conditions are equivalent.
\begin{enumerate}
	\item $C$ has distance $n-k+1.$
	\item Every erasure pattern of size at most $n-k$ is correctable.
	\item Every $k\times k$ minor of $G$ is non-zero.
	\item Every $(n-k)\times(n-k)$ minor of $H$ is non-zero.
\end{enumerate}
\end{proposition}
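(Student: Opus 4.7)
The plan is to prove the cycle (2)$\Leftrightarrow$(3)$\Leftrightarrow$(4) and separately (1)$\Leftrightarrow$(3) by reducing everything to Proposition~\ref{prop:correctability}. By that proposition, an erasure pattern $E$ is correctable iff $G_{\bar E}$ has rank $k$ iff $H_E$ has full column rank. Quantifying over all patterns $E$ with $|E| \le n-k$ (equivalently, all $\bar E$ with $|\bar E| \ge k$) immediately yields (2)$\Leftrightarrow$(3)$\Leftrightarrow$(4): the condition that $G_{\bar E}$ has rank $k$ for every $\bar E$ of size at least $k$ is the same as saying every $k$ columns of $G$ are linearly independent, which is the same as every $k \times k$ minor of $G$ being nonzero; analogously, every subset of at most $n-k$ columns of $H$ being linearly independent is equivalent to every $(n-k) \times (n-k)$ minor of $H$ being nonzero. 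One only needs to check the boundary cases $|\bar E| = k$ and $|E| = n-k$, since the relevant rank and independence conditions are monotone under enlarging the index set in the appropriate direction.

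For (1)$\Leftrightarrow$(3), I would use the standard translation between Hamming weight and support: a codeword $c = G^T x$ has support contained in a set $S \subset [n]$ iff $G^T_{\bar S} x = 0$, i.e., $x \in \ker G^T_{\bar S}$. Hence there is a nonzero codeword of weight at most $w$ iff there exists $\bar S$ with $|\bar S| \ge n-w$ such that $G_{\bar S}$ has rank strictly less than $k$. Setting $w = n-k$, the minimum distance is at least $n-k+1$ iff every submatrix $G_{\bar S}$ with $|\bar S| = k$ has rank $k$, which is precisely (3). Combined with the Singleton bound (which itself follows from (3)$\Rightarrow$(2) together with the trivial observation that an erasure pattern of size greater than $n-k$ leaves fewer than $k$ received coordinates and hence cannot be uniquely decodable), this gives (1)$\Leftrightarrow$(3).

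Since all of these steps are standard and follow directly from Proposition~\ref{prop:correctability} together with elementary linear algebra, I do not anticipate any substantive obstacle. The only care required is to keep the quantifiers and complements straight when translating between erasure patterns $E$, the $k$-column submatrices $G_{\bar E}$ of the generator matrix, and the $(n-k)$-column submatrices $H_E$ of the parity check matrix; the transpose conventions for generator versus parity-check matrices can cause minor bookkeeping confusion but do not affect the logic of the argument.
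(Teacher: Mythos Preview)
The paper does not supply a proof of Proposition~\ref{prop:MDS}; it is stated as a standard preliminary fact immediately after the definition of MDS codes. Your argument is correct and is exactly the standard one: reduce (2)$\Leftrightarrow$(3)$\Leftrightarrow$(4) to Proposition~\ref{prop:correctability} by quantifying over erasure patterns of the extremal size, and handle (1)$\Leftrightarrow$(3) by the support/kernel translation for codewords $G^Tx$. There is nothing to compare against, and nothing to fix.
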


\section{Higher order MDS codes}\label{sec:higherMDS}
\subsection{Basic properties of higher-order MDS codes}

In this section, we will prove some properties of higher-{order} MDS codes that we will need. The proofs are given in Appendix~\ref{app:mds-extra}. The following proposition gives an equivalent definition of $\MDS(\ell)$ codes.

\begin{restatable}{lemma}{lemmaone}\label{lemma:ultimate-mds-equiv}
  Let $V \in \mathbb F^{k \times n}$ be an $(n,k)$-MDS code and let $\ell \ge 2.$ Let $W \in \mathbb R^{k \times n}$ be a generic real matrix. Then $V$ is $\MDS(\ell)$ if and only if for all $A_1, \hdots, A_{\ell} \subseteq [n]$ such that $|A_i| \le k$, $|A_1| + \cdots + |A_{\ell}| = (\ell-1)k$ and $A_1  \cap \cdots \cap A_k = \emptyset$, we have that
   \[
     V_{A_1} \cap V_{A_2} \cap \cdots \cap V_{A_\ell} = 0 \iff W_{A_1} \cap \cdots \cap W_{A_{\ell}} = 0,
   \]
\end{restatable}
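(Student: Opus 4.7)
The forward direction is immediate: on the specified family of tuples $(A_1,\ldots,A_\ell)$ with $|A_i|\le k$, $\sum|A_i|=(\ell-1)k$, and $\bigcap A_i=\emptyset$, the generic intersection $\dim(W_{A_1}\cap\cdots\cap W_{A_\ell})$ is always zero, so $\MDS(\ell)$ (equality of $V$- and $W$-dimensions of intersections for all subsets) immediately forces $V_{A_1}\cap\cdots\cap V_{A_\ell}=0$ as well. For the converse I must show that, given the zero-intersection property on this restricted family, $\dim(V_{A_1}\cap\cdots\cap V_{A_\ell})=\dim(W_{A_1}\cap\cdots\cap W_{A_\ell})$ holds for arbitrary $A_1,\ldots,A_\ell\subseteq[n]$. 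I first reduce to $|A_i|\le k$ (if $|A_i|\ge k$, the MDS hypothesis gives $V_{A_i}=W_{A_i}=\F^k$, so $A_i$ can be shrunk to any $k$-subset without changing either intersection). Setting $J:=\bigcap_i A_i$ and $s:=\sum_i|A_i|$, a quotient-by-$W_J$ computation yields the generic value
\[
\dim(W_{A_1}\cap\cdots\cap W_{A_\ell})\;=\;|J|+\max\bigl(0,\,s-(\ell-1)k-|J|\bigr),
\]
and $\dim\bigcap V_{A_i}\ge\dim\bigcap W_{A_i}$ follows from $V_J\subseteq\bigcap V_{A_i}$ together with the codimension bound $\codim\bigcap V_{A_i}\le\ell k-s$. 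The remaining work is the matching upper bound.

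My plan is a two-step reduction to a ``borderline'' configuration $s=(\ell-1)k+|J|$. First, a removal step: when $s>(\ell-1)k+|J|$, I delete elements one at a time from the $A_i$'s outside $J$. This preserves $\bigcap A_i=J$, and since $V$ is MDS with $|A_{i_0}|\le k$, each deletion drops $\dim V_{A_{i_0}}$ by exactly one and hence $\dim\bigcap V_{A_i}$ by at most one; iterating until $s$ reaches $(\ell-1)k+|J|$ reduces this range to the borderline case. At the borderline, and whenever $s\le(\ell-1)k+|J|$, I would combinatorially pad $A_i$ to $C_i\supseteq A_i$ with $|C_i|\le k$, $\bigcap C_i=J$, and $\sum|C_i|=(\ell-1)k+|J|$, then for a fixed index $i_0$ set $D_{i_0}:=C_{i_0}\setminus J$ and $D_i:=C_i$ for $i\ne i_0$. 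One checks $|D_i|\le k$, $\sum|D_i|=(\ell-1)k$, and $\bigcap D_i=\emptyset$, so the hypothesis yields $\bigcap V_{D_i}=0$. Because $V_J\subseteq V_{C_i}$ for every $i$, the modular law gives
\[
\bigcap_i V_{C_i}\;=\;V_J+\Bigl(V_{D_{i_0}}\cap\bigcap_{i\ne i_0}V_{C_i}\Bigr)\;=\;V_J+\bigcap_i V_{D_i}\;=\;V_J,
\]
and since $A_i\subseteq C_i$, this forces $\dim\bigcap V_{A_i}\le|J|$, matching the generic value.

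The main obstacle is the combinatorial padding lemma. Quotienting by $J$ reduces the general case to the $J=\emptyset$ statement: given $|A_i|\le k$, $\bigcap A_i=\emptyset$, and $\sum|A_i|<(\ell-1)k$, some element can be added to an $A_{i_0}$ with $|A_{i_0}|<k$ without creating a common element. I would argue by contradiction. Being ``stuck'' forces $\bigcap_{i'\ne i_0}A_{i'}=[n]\setminus A_{i_0}$ for every $i_0$ in $I:=\{i:|A_i|<k\}$. Classifying each $j\in[n]$ by the index set $M_j:=\{i:j\in A_i\}$, the stuck condition forces $M_j$ either to contain all of $I$ (``type (a)'') or to equal $[\ell]\setminus\{i_0\}$ for a unique $i_0\in I$ (``type (b)''). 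Counting by type gives $|A_{i_0}|=n-b_{i_0}$ for $i_0\in I$, and the number of type-(a) elements is $a=\sum_{i\in I}|A_i|-(|I|-1)n$. The nonnegativity $a\ge 0$ combined with the identity $|A_{i_0}|+|A_{i_0'}|+|A_{i_0''}|=2n$ for any three distinct indices in $I$ (derived from the pairwise consequence $A_{i_0}\cap A_{i_0'}=[n]\setminus A_{i_0''}$) rules out $|I|\ge 4$ by forcing $a=-n/3<0$, while in each remaining case $|I|\in\{0,1,2,3\}$ the inequality $n>k$ yields $\sum|A_i|>(\ell-1)k$ strictly, a contradiction. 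This type-counting identity, especially the step ruling out $|I|\ge 4$, is where I expect the main subtlety to lie.
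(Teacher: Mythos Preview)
Your approach has a genuine gap: the closed-form formula $\dim(W_{A_1}\cap\cdots\cap W_{A_\ell}) = |J| + \max(0,\, s-(\ell-1)k-|J|)$ is false in general. Take $\ell=3$, $k=3$, $A_1=A_2=\{1,2\}$, $A_3=\{3,4\}$: here $J=\emptyset$ and $s=6=(\ell-1)k$, so your formula predicts $0$, but $W_{A_1}\cap W_{A_2}\cap W_{A_3} = W_{\{1,2\}}\cap W_{\{3,4\}}$ is generically $1$-dimensional. The same example shows that the conditions $|A_i|\le k$, $\sum_i|A_i|=(\ell-1)k$, $\bigcap_i A_i=\emptyset$ do \emph{not} force the generic intersection to vanish, contrary to your opening sentence. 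In the forward direction this is harmless (that direction is immediate anyway: $\MDS(\ell)$ asserts equality of the two intersection dimensions, hence one is zero iff the other is). In the reverse direction it is fatal: you pad to a tuple $(D_1,\ldots,D_\ell)$ satisfying those three combinatorial conditions and then invoke the hypothesis to conclude $\bigcap_i V_{D_i}=0$. But the hypothesis is a \emph{biconditional}, so it gives $\bigcap_i V_{D_i}=0$ only when $\bigcap_i W_{D_i}=0$, which the set-theoretic conditions alone do not guarantee. In the example above your $D_i$ would be exactly $A_1,A_2,A_3$, and the hypothesis then says $\bigcap_i V_{D_i}\ne 0$ --- the opposite of what your modular-law step needs. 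Consequently your target upper bound $\dim\bigcap_i V_{A_i}\le |J|$ at the borderline is simply false even for $V=W$.

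The paper avoids this by never attempting a closed-form combinatorial expression for the generic dimension. It instead uses the block-matrix identity $\dim\bigcap_i V_{A_i} = \sum_i\dim V_{A_i} - \rank M$, where $M$ is an explicit $(\ell-1)k$-row matrix built from the $V_{A_i}$. The generic value is then encoded as the generic rank of $M$, and both the removal step (delete a redundant column of $M$) and the padding step (add a column to a row-block of $M$ that is not yet full rank) are carried out while tracking that rank directly. This is precisely what guarantees that the generic intersection stays zero throughout, so the padded tuple genuinely lands in the family where the biconditional hypothesis bites. Your purely set-theoretic padding lemma loses exactly this control; to repair your argument you would need to additionally verify that the padded $D_i$'s have generically trivial intersection, which is essentially the content of the paper's matrix approach.
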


The following proposition shows that $\MDS(\ell)$ property is
{preserved} under puncturing and shortening of codes. If $C$ is
any $(n,k)$ code, the punctured code at position $i$ is an $(n-1,k)$
code obtained given by projecting all the codewords of $C$ onto the
subset $[n]\setminus \{i\}$. The shortened code at position $i$ is an
$(n-1,k-1)$ code obtained by projecting only the codewords {$x
  \in C$ for which $x_i = 0$} onto $[n]\setminus\{i\}$.

\begin{restatable}{proposition}{propone}\label{prop:reduction}
   Let $C$ be an $(n,k)$-$\MDS(\ell)$ code.
   \begin{enumerate}
     \item {If $\ell \ge 3$, then} $C$ is also an $\MDS(\ell-1)$ code.
     \item {If $\ell \ge 2$, then the} code $C_0$ obtained by puncturing $C$ at any position is an $(n-1,k)$-$\MDS(\ell)$ code.
     \item {If $\ell \ge 2$, then the} code $C_1$ obtained by shortening $C$ at any position is an $(n-1,k-1)$-$\MDS(\ell)$ code.
   \end{enumerate}
\end{restatable}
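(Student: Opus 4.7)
My plan is to handle the three parts in order of increasing difficulty, with part (3) carrying the main technical weight.

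For part (1), I would simply take $A_\ell = [n]$ in the definition of $\MDS(\ell)$. Since $V$ has rank $k$ (being the generator matrix of a $k$-dimensional code), $V_{[n]} = \mathbb F^k$, and the same holds for any generic $W$. Thus $V_{A_1} \cap \cdots \cap V_{A_{\ell-1}} \cap V_{[n]} = V_{A_1} \cap \cdots \cap V_{A_{\ell-1}}$, and likewise for $W$, so the $\MDS(\ell)$ condition applied to $(A_1,\ldots,A_{\ell-1},[n])$ gives exactly the $\MDS(\ell-1)$ condition for $(A_1,\ldots,A_{\ell-1})$.

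For part (2), a generator matrix for the punctured code $C_0$ is just $G_0 = V_{[n]\setminus\{i\}}$, so for any $A_1,\ldots,A_\ell \subset [n-1]$ (identified with $[n]\setminus\{i\}$) we have $(G_0)_{A_j} = V_{A_j}$ as subspaces of $\mathbb F^k$. The $\MDS(\ell)$ property of $V$ equates $\dim \bigcap_j V_{A_j}$ to $\dim \bigcap_j W_{A_j}$ for a generic $k \times n$ matrix $W$, and $W_{[n]\setminus\{i\}}$ is itself a generic $k \times (n-1)$ matrix, so it serves as the reference for $\MDS(\ell)$ on $C_0$.

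For part (3), which is the main obstacle, I would build a generator for the shortened code and track column spans through a projection. Let $U$ be a $k \times (k-1)$ matrix whose columns form a basis of $v_i^\perp$; then $G_1 = U^T V_{[n]\setminus\{i\}}$ generates $C_1$. The linear map $\phi : \mathbb F^k \to \mathbb F^{k-1}$, $\phi(y) = U^T y$, is surjective with kernel exactly $\Span(v_i)$. For $A \subset [n]\setminus\{i\}$, $(G_1)_A = \phi(V_A) = \phi(V_A + \Span(v_i)) = \phi(V_{A \cup \{i\}})$. Using the general identity that $\phi$ commutes with intersections whenever each factor contains $\ker \phi$, iteration yields
\[
\bigcap_{j=1}^\ell (G_1)_{A_j} \;=\; \phi\!\left(\bigcap_{j=1}^\ell V_{A_j \cup \{i\}}\right),
\]
and since $\Span(v_i) = \ker\phi$ lies in the right-hand intersection, I get the dimension identity
\[
\dim \bigcap_{j=1}^\ell (G_1)_{A_j} \;=\; \dim \bigcap_{j=1}^\ell V_{A_j \cup \{i\}} \;-\; 1.
\]
Applying the $\MDS(\ell)$ property of $V$ to the enlarged family $\{A_j \cup \{i\}\}$ rewrites the right side as $\dim \bigcap_j W_{A_j \cup \{i\}} - 1$ for a generic $k \times n$ matrix $W$. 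Performing the identical shortening construction on $W$ yields a matrix $W_1$ of size $(k-1)\times(n-1)$ satisfying the analogous identity, and the subtlety to watch is the final step: arguing that $W_1$ is itself generic, so that it is a valid reference matrix for the $\MDS(\ell)$ definition on $C_1$. A clean way to see this is to identify $\mathbb F^{k-1}$ with $\mathbb F^k/\Span(w_i)$; the columns of $W_1$ are the images of the generic columns $\{w_j : j \ne i\}$ under the quotient, and images of independent generic vectors under a surjection onto a fixed quotient space are again generic. This completes the equivalence $\dim \bigcap_j (G_1)_{A_j} = \dim \bigcap_j (W_1)_{A_j}$.
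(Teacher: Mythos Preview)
Your arguments for (1) and (2) are exactly what the paper does.

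For (3) your argument is correct, but it follows a genuinely different route from the paper's. The paper first normalizes so that $V_n=e_k$ and lets $\tV$ be $V$ with the last row and last column deleted; it then invokes Lemma~\ref{lemma:ultimate-mds-equiv} to reduce to families $\tA_1,\ldots,\tA_\ell$ with $\sum|\tA_i|=(k-1)(\ell-1)$ and trivial generic intersection, sets $A_1=\tA_1$ and $A_i=\tA_i\cup\{n\}$ for $i\ge2$ (asymmetrically), and uses the block-matrix rank formula of Claim~\ref{claim:matrix-magic} together with column operations to show $\bigcap_i \tW_{\tA_i}=0 \iff \bigcap_i W_{A_i}=0$, and likewise for $V$. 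Your approach instead works directly with the original Definition~\ref{def:higherMDS}: you adjoin $\{i\}$ to \emph{every} $A_j$, use the lattice-isomorphism fact that a surjection $\phi$ with $\ker\phi\subset B_j$ for all $j$ satisfies $\phi(\bigcap_j B_j)=\bigcap_j \phi(B_j)$, and read off the exact dimension identity $\dim\bigcap_j (G_1)_{A_j}=\dim\bigcap_j V_{A_j\cup\{i\}}-1$. This is cleaner in that it avoids both Lemma~\ref{lemma:ultimate-mds-equiv} and the block-matrix machinery, and it yields the full dimension equality rather than only the zero/nonzero dichotomy. The paper's route, on the other hand, stays closer to the matrix-rank viewpoint that is used elsewhere (e.g., in the weak-duality proof), so it keeps the toolkit uniform. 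Your justification that $W_1$ is again generic is the one nontrivial remaining step; the quotient description you give is fine, and equivalently one can normalize $w_i=e_k$ and observe that deleting the last row and the $i$th column of a generic $k\times n$ matrix yields a generic $(k-1)\times(n-1)$ matrix.
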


\subsection{Equivalence between MR tensor codes with $a=1$ and $\MDS(\ell)$}

The following lemma shows that $\MDS(\ell)$ codes are intimately connected to MR Tensor codes with $a=1$ or $b=1.$
\begin{lemma}
	\label{lem:correctability_intersection_a1}
	 Let $C=C_\col \otimes C_\row$ be an $(m,n,a=1,b)$-Tensor Code where $C_\col$ is a parity check code and $C_\row$ is an $(n-b,n)$-MDS code.  
	 Let $E\subset [m]\times [n]$ be a maximal erasure pattern
         i.e. $|E|=mn-(m-1)(n-b)$ and suppose that each row has at
         least $b$ erasures. Let {$A_1, \hdots, A_m \subset [n]$
         such that $\cup_{i=1}^m \set{i}\times A_i = \barE.$} Then $E$ is correctable iff $\dim(V_{A_1}\cap V_{A_2}\cap \dots \cap V_{A_m})=0.$ 
\end{lemma}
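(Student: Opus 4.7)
The plan is to convert correctability of $E$ into a purely linear-algebraic statement about the subspaces $V_{A_i}$ (equivalently, their annihilators $V_{A_i}^\perp$) inside the ambient space $\F^{n-b}$. The single-parity column code lets one turn a codeword of the tensor code into a tuple of coefficient vectors $(c_1,\dots,c_m)$ with $c_i \in V_{A_i}^\perp$ satisfying $\sum_i c_i = 0$; a dimension count, forced by the maximality of $|E|$, then balances so that injectivity of the summation map becomes surjectivity, and surjectivity dualizes to the desired statement about $V_{A_1} \cap \cdots \cap V_{A_m}$.

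First I would parameterize codewords supported on $E$. Let $V \in \F^{(n-b)\times n}$ be a generator of $C_\row$ with columns $V_1,\dots,V_n$, and let $(\lambda_1,\dots,\lambda_m)$ be the defining single parity check of $C_\col$ (all $\lambda_i \neq 0$). Every codeword $X \in C$ has rows $x_i = c_i^T V$ for unique $c_i \in \F^{n-b}$ (uniqueness since $V$ has full row rank as $C_\row$ is MDS). By Proposition~\ref{prop:correctability}, $E$ is correctable iff the only such $X$ with $X_{i,j}=0$ for every $(i,j)\in\barE$ is $X=0$. The row constraints $c_i^T V_j = 0$ for $j \in A_i$ read $c_i \in V_{A_i}^\perp$, while the column parity $\sum_i \lambda_i X_{i,j} = 0$ for all $j$ translates, using the full row rank of $V$, to $\sum_i \lambda_i c_i = 0$. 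Rescaling each $c_i$ by $\lambda_i$ (a bijection of $V_{A_i}^\perp$ onto itself since $\lambda_i \neq 0$) reduces the problem to showing: $E$ is correctable iff the only $(c_1,\dots,c_m) \in V_{A_1}^\perp \oplus \cdots \oplus V_{A_m}^\perp$ with $\sum_i c_i = 0$ is $(0,\dots,0)$.

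Next I would perform the dimension count. Since row $i$ has at least $b$ erasures, $|A_i| \le n-b$, and the MDS property of $C_\row$ gives $\dim V_{A_i} = |A_i|$, so $\dim V_{A_i}^\perp = (n-b)-|A_i|$. Using $\sum_i |A_i| = |\barE| = (m-1)(n-b)$,
\[
\sum_{i=1}^m \dim V_{A_i}^\perp \;=\; m(n-b) - (m-1)(n-b) \;=\; n-b .
\]
Consider $\phi : V_{A_1}^\perp \oplus \cdots \oplus V_{A_m}^\perp \to \F^{n-b}$ defined by $\phi(c_1,\dots,c_m) = \sum_i c_i$. Its domain and codomain both have dimension $n-b$, so $\phi$ is injective iff it is surjective iff $V_{A_1}^\perp + \cdots + V_{A_m}^\perp = \F^{n-b}$. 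Taking orthogonal complements in $\F^{n-b}$, this last statement is equivalent to $V_{A_1} \cap \cdots \cap V_{A_m} = 0$. Combining with the parameterization step yields the lemma.

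The argument is essentially routine linear algebra once the right parameterization is chosen; the main point to verify carefully is the tightness of the dimension equality. This tightness relies on both hypotheses of the lemma: the maximality $|E| = mn-(m-1)(n-b)$ is what gives $\sum_i |A_i| = (m-1)(n-b)$, and the per-row lower bound on erasures is what ensures $|A_i| \le n-b$ and hence $\dim V_{A_i} = |A_i|$. Without either hypothesis the ``injective iff surjective'' reduction would fail and only one direction of the biconditional would survive, so these two assumptions are precisely what power the clean equivalence stated in the lemma.
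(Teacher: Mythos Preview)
Your proof is correct and follows essentially the same route as the paper's: parameterize a codeword supported on $E$ by vectors $c_i\in V_{A_i}^\perp$ summing to zero, use the maximality of $E$ and the MDS assumption to make the dimension count $\sum_i \dim V_{A_i}^\perp = n-b$ come out exact, and then convert injectivity of the summation map into surjectivity and hence into $\bigcap_i V_{A_i}=0$. The only cosmetic difference is that you carry general nonzero parity weights $\lambda_i$ and rescale them away, whereas the paper assumes the simple parity check code from the outset.
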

\begin{proof}
	The conditions on $E$ translate to the following conditions on $A_1,A_2,\dots, A_m.$
	\begin{enumerate}
		\item $|A_i|\le n-b$
		\item $\sum_i |A_i|= (m-1)(n-b).$ 
	\end{enumerate}
	Since $C_\col$ is a simple parity check code, each column of $C$ sum to zero. Let $V_{\barb \times n}$ be a generator matrix for $C_\row$ where $\barb=n-b.$ The following statements are equivalent.
	\begin{enumerate}
		\item $E$ is not correctable.
		\item {By definition of correctability, there} exists a non-zero codeword of $C$
                  {whose support is a subset of} $E.$ 
		\item {From the code being a tensor,} there exist $r_1,r_2,\dots,r_m\in C_\row$, not all zero, such that 
			\begin{itemize}
				\item $\supp(r_i)\subset \barA_i$ for $i\in [m]$,
				\item $\sum_{i=1}^m r_i = 0.$
			\end{itemize}
			{Since} $r_i=y_i^T V$ for some $y_i\in \F^\barb$, we have the following equivalent statement.
		\item There exist $y_1,y_2,\dots,y_m \in \F^{\barb}$, not all zero, such that 
			\begin{itemize}
				\item $y_i^T V_{A_i} =0$ for $i\in [m]$,
				\item $\sum_{i=1}^m y_i =0.$
			\end{itemize}
			{Since} $y_i \in V_{A_i}^\perp$ for each $i\in [m],$ we have the following equivalent statement.
		\item There exists $y_i\in V_{A_i}^\perp$, not all zero, such that $\sum_{i=1}^m y_i =0$.\\ 
		{Since} $|A_i|\le n-b$, and $V$ is a generator matrix of an MDS code, we have $\dim(V_{A_i}^\perp)=(n-b)-\dim(V_{A_i})=(n-b)-|A_i|$. Therefore $$\sum_{i=1}^m \dim(V_{A_i}^\perp) = \sum_{i=1}^m (n-b-|A_i|)=m(n-b)-(m-1)(n-b)=n-b.$$ So we have the following equivalent statement.
		\item $V_{A_1}^\perp + V_{A_2}^\perp + \dots + V_{A_m}^\perp \ne \F^{\barb}.$
		\item {By taking the dual,} $V_{A_1}\cap V_{A_2}\cap \dots \cap V_{A_m} \ne 0.$
	\end{enumerate}
	This completes the proof.
\end{proof}

We can now prove Theorem~\ref{thm:MRtensorcodesa1_MDSm}.
\begin{corollary}[(Theorem~\ref{thm:MRtensorcodesa1_MDSm})]\label{cor:MDS-MR}
	Let $C=C_\col \otimes C_\row$ be an $(m,n,a=1,b)$ tensor code. Let $C_\col$ be the parity check code. Then $C$ is an MR tensor code iff $C_\row$ is $(n,n-b)-\MDS(m).$
\end{corollary}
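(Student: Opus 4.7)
The plan is to combine Lemma~\ref{lem:correctability_intersection_a1} (translating correctability into vanishing of intersections of column-spans of $C_\row$) with Lemma~\ref{lemma:ultimate-mds-equiv} (translating $\MDS(m)$ into the same kind of vanishing conditions). The strategy is to reduce the MR condition to an intersection-vanishing statement about $C_\row$, and then match it against the $\MDS(m)$ condition.

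First, I would reduce to checking MR only on maximum-size information-theoretically correctable patterns $E\subset[m]\times[n]$ with $|E|=n+(m-1)b$ in which every row has at least $b$ erasures and every column has at least one erasure. Any smaller info-correctable pattern extends to such a max-size one by an elementary rank argument: when $|\barE|$ strictly exceeds $\dim C$, some $(i,j)\in\barE$ can be added to $E$ while preserving $\dim C_\barE=\dim C$, since at most $\dim C_\barE$ coordinates of $\barE$ can be ``isolated'' by single-coordinate projected codewords. At maximum size, a pattern with some row carrying fewer than $b$ erasures cannot be info-correctable (a counting argument using the MDS-ness of $C_\row$), and a max-size pattern with some column $j_0$ having no erasures corresponds to $j_0\in\cap_iA_i$, where $\barE=\cup_i\{i\}\times A_i$; this forces $V_{j_0}\in\cap_iV_{A_i}$ for every generator $V$ of $C_\row$, hence by Lemma~\ref{lem:correctability_intersection_a1} (applied to a generic $V$) such an $E$ is not info-correctable either.

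For structured max patterns the three conditions translate exactly to $|A_i|\le n-b$, $\sum_i|A_i|=(m-1)(n-b)$, and $\cap_iA_i=\emptyset$. Applying Lemma~\ref{lem:correctability_intersection_a1} to the generator $V$ of $C_\row$ yields ``$E$ is correctable by $C$ iff $V_{A_1}\cap\cdots\cap V_{A_m}=0$''; applying the same lemma to a generic matrix $W$ (playing the role of the generator of an $(n,n-b)$-MDS code) yields ``$E$ is info-theoretically correctable iff $W_{A_1}\cap\cdots\cap W_{A_m}=0$'', using the standard fact that info-theoretic correctability coincides with generic correctability since the condition is Zariski-open in the space of generators. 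The MR property is thus the biconditional $\cap_iV_{A_i}=0 \iff \cap_iW_{A_i}=0$ for all admissible $(A_1,\ldots,A_m)$ --- the $(\Leftarrow)$ direction is MR itself, while the $(\Rightarrow)$ direction is automatic by semicontinuity --- which is precisely the characterization of $\MDS(m)$ in Lemma~\ref{lemma:ultimate-mds-equiv}. That $C_\row$ is MDS, needed to apply both lemmas, follows either from $\MDS(m)$ via Proposition~\ref{prop:reduction}, or from MR by considering patterns consisting of $b$ erasures in a single row. The main obstacle is the first step --- the reduction to structured max-size patterns, in particular ruling out info-correctable max patterns with a row carrying fewer than $b$ erasures; the remaining two steps are largely bookkeeping given the earlier lemmas.
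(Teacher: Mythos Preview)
Your proposal is correct and follows the same approach as the paper: both directions are obtained by combining Lemma~\ref{lem:correctability_intersection_a1} with Lemma~\ref{lemma:ultimate-mds-equiv}. The paper's own proof is a two-line citation of these lemmas, so you have in fact filled in considerably more detail than the paper provides --- in particular the reduction to maximum-size patterns satisfying the structural hypotheses of Lemma~\ref{lem:correctability_intersection_a1}, which the paper leaves entirely implicit.

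A couple of minor remarks on your write-up. First, for the step ``a max-size pattern with some row carrying fewer than $b$ erasures is not info-correctable'', the phrase ``MDS-ness of $C_\row$'' is slightly off: info-correctability is a property of the generic tensor code, so you mean MDS-ness of the generic row code $W$. The dimension count does go through: if $|A_{i_0}|>n-b$ then $W_{A_{i_0}}^\perp=0$, and one checks that $\sum_{i\ne i_0}\dim W_{A_i}^\perp>n-b$, forcing a nontrivial kernel in the summing map (step 5 in the proof of Lemma~\ref{lem:correctability_intersection_a1}). Alternatively one can just invoke regularity (Theorem~\ref{thm:correct-to-regular}) with $S=[m]\setminus\{i_0\}$, $T=[n]$. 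Second, the condition $\bigcap_i A_i=\emptyset$ is only needed to match the hypotheses of Lemma~\ref{lemma:ultimate-mds-equiv} in the ``MR $\Rightarrow$ $\MDS(m)$'' direction; for ``$\MDS(m)\Rightarrow$ MR'' you can apply Definition~\ref{def:higherMDS} directly without it.
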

\begin{proof}
	If $C_\row$ is $\MDS(m)$, then by Lemma~\ref{lem:correctability_intersection_a1}, $C$ is MR. The other direction follows from Lemma~\ref{lem:correctability_intersection_a1} and Lemma~\ref{lemma:ultimate-mds-equiv}. 
\end{proof}

In the style of Lemma~\ref{lem:correctability_intersection_a1}, we
prove a similar but more intricate lemma which captures the case $a
\ge 2$. Recall we have $a \le m$ and $b \le n$. {This result
  is used to prove the weak duality of higher-order MDS codes.}

\begin{restatable}{lemma}{lemmatwo}
\label{lem:correctability_intersection}
Let $C=C_\col \otimes C_\row$ be an $(m,n,a,b)$ tensor code and let
$\bar{a}=m-a, \bar{b}=n-b$. Let $E$ be a maximal erasure pattern of
size $|E|=mn-\bar{a}\bar{b}.$ and let $\bar{E}=\cup_{i\in
  [m]}\set{i}\times A_i=\cup_{j\in [n]}B_j \times \set{j}.$ If $U,V$
are generator matrices of $C_\row$ and $C_\col$ and $P,Q$ are their
respective parity check matrices{, then} correctability of $E$ is equivalent to each of the following conditions:
\begin{align}
  \sum_{i=1}^m U_i \otimes V_{A_i} &= \F^{\bar{a}} \otimes \F^{\bar{b}}\label{eq:2}\\
  \sum_{j=1}^n U_{B_j} \otimes V_j &= \F^{\bar{a}} \otimes \F^{\bar{b}}\label{eq:3}\\
  \sum_{i=1}^m P_i \otimes V_{A_i}^\perp &= \F^a \otimes \F^{\bar{b}}\label{eq:4}\\
  \sum_{j=1}^n U_{B_j}^\perp \otimes Q_j &= \F^{\bar{a}} \otimes \F^b\label{eq:5}.
\end{align}
\end{restatable}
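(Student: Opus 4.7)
The plan is to tie each of the four equalities to the non-existence of a nonzero codeword $M\in C$ whose support lies in $E$. Every codeword of $C=C_\col\otimes C_\row$ admits a unique representation $M = U^T X V$ for some $X \in \F^{\bar a\times \bar b}$, and the condition $M|_{\bar E}=0$ unpacks entrywise into $U_i^T X V_j = 0$ for every $(i,j)\in\bar E$, equivalently $\langle X,\, U_i\otimes V_j\rangle = 0$ in $\F^{\bar a}\otimes\F^{\bar b}$. Hence $E$ is correctable exactly when the family $\{U_i\otimes V_j:(i,j)\in\bar E\}$ spans $\F^{\bar a}\otimes\F^{\bar b}$.

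Equations (\ref{eq:2}) and (\ref{eq:3}) then fall out by simply reorganizing this spanning condition. Grouping the pairs in $\bar E$ by their row coordinate produces $\sum_i U_i\otimes V_{A_i} = \F^{\bar a}\otimes\F^{\bar b}$, which is (\ref{eq:2}); grouping instead by column coordinate produces $\sum_j U_{B_j}\otimes V_j = \F^{\bar a}\otimes\F^{\bar b}$, which is (\ref{eq:3}).

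For (\ref{eq:4}) and (\ref{eq:5}) the plan is to dualize through the short exact sequence $0\to \F^{\bar a}\xrightarrow{U^T}\F^m\xrightarrow{P}\F^a\to 0$. I will introduce the surjection $\Psi\colon\F^m\otimes\F^{\bar b}\to \prod_{i=1}^m \F^{\bar b}/V_{A_i}^\perp$ sending an $m$-tuple $(z_i)_i$ to $(z_i\bmod V_{A_i}^\perp)_i$, whose kernel is $\bigoplus_i e_i\otimes V_{A_i}^\perp$. Correctability translates into injectivity of the composite $X\mapsto(U_i^T X\bmod V_{A_i}^\perp)_i$, i.e.\ of $\Psi\circ(U^T\otimes\mathrm{id})$. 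Maximality of $E$, combined with the MDS property of $C_\row$ (which ensures $\dim V_{A_i}=|A_i|$ whenever $|A_i|\le\bar b$), forces $\sum_i\dim V_{A_i}=\sum_i|A_i|=\bar a\bar b$, matching the source dimension. Injectivity of the composite is therefore equivalent to surjectivity of $\Psi|_{C_\col\otimes\F^{\bar b}}$, which in turn is equivalent to $C_\col\otimes\F^{\bar b}+\ker \Psi=\F^m\otimes\F^{\bar b}$. Quotienting both sides by $C_\col\otimes\F^{\bar b}$ via the surjection $P\otimes\mathrm{id}\colon\F^m\otimes\F^{\bar b}\to \F^a\otimes\F^{\bar b}$ carries $\ker\Psi$ to $\sum_i P_i\otimes V_{A_i}^\perp$, so the condition collapses to (\ref{eq:4}). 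Equation (\ref{eq:5}) then follows by running the identical diagram chase with the roles of rows and columns swapped, using $0\to\F^{\bar b}\xrightarrow{V^T}\F^n\xrightarrow{Q}\F^b\to 0$ and the decomposition $\bar E=\bigcup_j B_j\times\{j\}$.

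The main technical obstacle is the dimension bookkeeping in the third paragraph: upgrading injectivity of the composite to surjectivity of the restriction $\Psi|_{C_\col\otimes\F^{\bar b}}$ requires the source and target dimensions to coincide, which is precisely where the maximality hypothesis $|E|=mn-\bar a\bar b$ and the MDS character of the row/column codes enter in an essential way. Once this count is in hand, everything reduces to a short exact sequence chase and standard tensor-product duality.
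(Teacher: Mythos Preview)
Your argument is correct and mirrors the paper's strategy. For (\ref{eq:2}) and (\ref{eq:3}), both of you simply recognize correctability as the spanning condition for $\{U_i\otimes V_j:(i,j)\in\bar E\}$ and regroup by rows or columns. For (\ref{eq:4}), the paper works in coordinates: a codeword supported on $E$ is a tuple of rows $r_i=y_i^TV\in C_\row$ with $y_i\in V_{A_i}^\perp$ satisfying the column parity relation $\sum_iP_i\otimes y_i=0$; the dimension count $\sum_i\dim V_{A_i}^\perp=a\bar b$ then converts ``no nontrivial such relation'' into ``the image $\sum_iP_i\otimes V_{A_i}^\perp$ fills $\F^a\otimes\F^{\bar b}$''. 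Your exact-sequence chase is the same linear algebra viewed globally: you phrase correctability as injectivity of a composite, invoke the same dimension count to flip it to surjectivity of $\Psi|_{C_\col\otimes\F^{\bar b}}$, and then quotient by $C_\col\otimes\F^{\bar b}$ via $P\otimes\mathrm{id}$ to reach (\ref{eq:4}). Neither route is materially shorter; yours is more structural, the paper's more hands-on, but they encode the same argument.

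One caveat that applies equally to both proofs: the identity $\sum_i\dim V_{A_i}=\bar a\bar b$ needs not only maximality of $E$ and the MDS property of $C_\row$, but also $|A_i|\le\bar b$ for every $i$ (and symmetrically $|B_j|\le\bar a$ for (\ref{eq:5})). The paper lists $|A_i|\le n-b$ as if it were a consequence of maximality; your phrase about maximality ``forcing'' the dimension sum does the same. Neither is quite right---one can have $|E|=mn-\bar a\bar b$ with some $|A_i|>\bar b$---and without this hypothesis the equivalence with (\ref{eq:4}) can fail (e.g.\ $m=n=3$, $a=b=1$, $A_1=[3]$, $A_2=\{1\}$, $A_3=\emptyset$: then (\ref{eq:4}) holds because $V_{A_3}^\perp=\F^{\bar b}$, yet $E$ is not correctable). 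In every use of the lemma in the paper this side condition is satisfied, so the gap is harmless, but it should be stated explicitly.
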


Notice that if $a=1$ and $C_\col$ is a parity check code, then $P_i = 1$ for all $i$. So the expression becomes $\sum_{i=1}^m V_{A_i}^\perp = \F^{\bar{b}}$. Taking duals, we get $\bigcap_{i=1}^m V_{A_i}=0$ which is equivalent to Lemma~\ref{lem:correctability_intersection_a1}.

\subsection{Weak duality of $\MDS(\ell)$}

A natural conjecture is that for all $\ell \ge 2$, a code $C$ is $(n,k)$-$\MDS(\ell)$ iff the dual code $C^{\perp}$ is $(n,{n-k})$-$\MDS(\ell)$. This is true for $\ell=2$ because $\MDS(2)$ is equivalent to the usual MDS, and MDS codes satisfy duality. We will later show that duality also holds for $\ell=3$. But {surprisingly}, this fails for $\ell\ge 4.$ We exhibit a counterexample in Appendix~\ref{app:mds-extra}. 

  As duality of higher-order MDS codes is false in general, we instead prove a weaker form of duality--the dual of an $\MDS(\ell)$ code satisfies what we call ``cycle-$\MDS(\ell)$'' property which is a weaker form of $\MDS(\ell).$ This result will also imply that the dual of any $\MDS(3)$ code is indeed $\MDS(3)$.

\subsubsection{Cycle-MDS}

\begin{definition}[cycle-$\MDS(\ell)$]
Call a collection of subsets $S_1,S_2, \hdots, S_{\ell} \subset [n]$ a \emph{cycle family} if for all $j \in [n]$, the set $T_j := \{i : j \in S_i\}$ is an interval modulo $\ell$, i.e., $T_j = \{c_j,c_j+1, \hdots, d_j\} \mod \ell$ for some $c_j$ and $d_j$. If you visualize $S_1,S_2,\dots,S_\ell$ as subsets of rows of an $\ell \times n$ matrix, then $T_1,T_2,\dots,T_n$ are the subsets of columns corresponding to $S_1,S_2,\dots,S_\ell.$

  Say that an $k \times n$ matrix $V$ is $(n,k)$-cycle-$\MDS(\ell)$ if for any cycle family $S_1, \hdots, S_{\ell} \subset [n]$ such that $|S_i| \le k$ for all $i$, $|S_1| + \cdots + |S_{\ell}| \le (\ell-1)k$, and $S_1 \cap \cdots \cap S_{\ell} = \emptyset$, we have that $V_{A_1} \cap \cdots \cap V_{A_\ell} = 0$ if and only if it generically holds.
\end{definition}

\begin{lemma}\label{lem:weak-duality}
Let $C$ be an $(n,k)$-$\MDS(m)$ code. Then the dual code $C^\perp$ is {an} $(n,n-k)$-cycle-$\MDS(m)$ code.
\end{lemma}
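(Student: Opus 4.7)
The plan is to reduce the claim to the MR property of the tensor code $D = C_\col \otimes C$, where $C_\col$ is the simple parity-check code on $m$ coordinates. By Corollary~\ref{cor:MDS-MR}, $D$ is a maximally recoverable $(m,n,1,n-k)$-tensor code. Given any cycle family $S_1, \dots, S_m \subset [n]$ with $|S_i| \le n-k$, $\sum_i |S_i| \le (m-1)(n-k)$, and $\bigcap_i S_i = \emptyset$, I will associate to it the erasure pattern
\[
  E \;=\; \bigcup_{i=1}^m \{i\} \times (S_i \cup S_{i+1 \bmod m})
\]
on the grid $[m]\times[n]$, and argue that $E$ is correctable in $D$ if and only if $\bigcap_i Q_{S_i} = 0$, where $Q$ is the parity-check matrix of $C$ (equivalently, a generator of $C^\perp$).

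The core of the argument is the equivalence between nonzero codewords of $D$ supported on $E$ and nonzero vectors in $\bigcap_i Q_{S_i}$. A codeword of $D$ supported on $E$ is a tuple $(c_1,\dots,c_m)$ with $c_i \in C$, $\supp(c_i)\subset S_i \cup S_{i+1}$, and $\sum_i c_i=0$ (from the column parity-check constraint). In the forward direction, starting with $v \in \bigcap_i Q_{S_i} \setminus\{0\}$, I lift $v$ to $x_i \in \F^n$ with $\supp(x_i)\subset S_i$ and $Qx_i=v$, and set $c_i := x_i - x_{i+1 \bmod m}$; these lie in $C$, telescope to zero, and are not all zero, since otherwise all $x_i$ would coincide and $\supp(x_1)\subset \bigcap_i S_i=\emptyset$ would force $v=0$. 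For the converse, the cycle-family hypothesis---that each $T_j = \{i : j \in S_i\}$ is a cyclic interval---makes it possible to invert this construction coordinate by coordinate: at each $j$, the constraints $c_i(j) = x_i(j)-x_{i+1}(j)$ together with $x_i(j)=0$ for $i \notin T_j$ form a linear system on the interval $T_j$ that is uniquely solvable, and the single cyclic consistency relation is automatically satisfied by $\sum_i c_i(j)=0$. The resulting $v := Qx_1 = \dots = Qx_m$ must be nonzero, for otherwise each $x_i$ would be a codeword of $C$ supported on at most $|S_i| \le n-k$ coordinates, which is below the MDS minimum distance of $C$.

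With this equivalence in hand, the proof closes quickly. Because $D$ is MR, $E$ is correctable if and only if it is generically correctable, i.e., correctable by some (equivalently, a generic) $(m,n,1,n-k)$-tensor code. Applying the same equivalence to a generic $(n-k) \times n$ matrix $W$ in place of $Q$, generic correctability of $E$ is precisely the statement that $\bigcap_i W_{S_i} = 0$ generically, which is guaranteed under our hypotheses by the standard expected-dimension calculation for generic subspaces in $\F^{n-k}$ (using $\sum_i |S_i| \le (m-1)(n-k)$ and $\bigcap_i S_i = \emptyset$). Hence $E$ is correctable in $D$, and therefore $\bigcap_i Q_{S_i}=0$, which is the cycle-$\MDS(m)$ condition for $C^\perp$.

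I expect the main obstacle to be the coordinate-by-coordinate inversion in the converse direction of the equivalence: one must verify carefully that the interval structure of each $T_j$ makes the linear system for $\{x_i(j) : i \in T_j\}$ both uniquely solvable and automatically consistent, so that no additional constraint is introduced by the cyclic wrap-around. Everything else---the generic-dimension formula, the lifting from $v$ to supported $x_i$'s, and the application of MR together with Corollary~\ref{cor:MDS-MR}---is routine once this bijection has been established.
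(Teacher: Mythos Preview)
Your approach is correct in its essentials and genuinely different from the paper's, but there is a small gap in the closing step.

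Your core equivalence---that the erasure pattern $E = \bigcup_i \{i\}\times(S_i\cup S_{i+1})$ is uncorrectable in $D$ if and only if $\bigcap_i Q_{S_i}\neq 0$---is correct, and the coordinate-by-coordinate inversion using the interval structure of each $T_j$ is exactly the right idea (the single cyclic consistency relation at coordinate $j$ is indeed forced by $\sum_i c_i(j)=0$, and nonvanishing of $v=Qx_i$ follows from the MDS minimum distance of $C$). This is more elementary than the paper's route, which passes through the tensor-product formulation of Lemma~\ref{lem:correctability_intersection} and an auxiliary family $W_i := U_{[m]\setminus\{i,i+1\}}^\perp$ built from an $(m,m-1)$-MDS code $U$; your argument stays entirely within the $a=1$ framework and avoids this machinery. (If one unwinds the paper's sets $B_j$, the implicit erasure pattern there is the same as yours up to a cyclic shift, so the two proofs are analyzing the same object from different angles.)

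The gap is your closing claim that $\bigcap_i W_{S_i}=0$ holds generically ``by the standard expected-dimension calculation'' under the hypotheses $\sum_i|S_i|\le (m-1)(n-k)$ and $\bigcap_i S_i=\emptyset$. That calculation applies to \emph{independent} generic subspaces, but when the $S_i$ overlap the column spans $W_{S_i}$ are correlated: for instance with $m=3$, $n-k=3$, $S_1=S_2=\{1,2\}$, $S_3=\{3,4\}$ (a valid cycle family with empty triple intersection and total size $6=(m-1)(n-k)$), the generic intersection $W_{\{1,2\}}\cap W_{\{3,4\}}$ is a line, not zero. The fix is immediate: the definition of cycle-$\MDS(m)$ is a biconditional, and the direction ``$\bigcap_i Q_{S_i}=0 \Rightarrow$ generically zero'' already follows from Claim~\ref{claim:generic-minimizes} (since $Q$, generating the dual of an MDS code, is itself MDS). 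So you should simply \emph{assume} that $\bigcap_i W_{S_i}=0$ generically---as the paper does---and then run your equivalence to deduce $\bigcap_i Q_{S_i}=0$; there is no need to (incorrectly) argue that the generic vanishing always holds.
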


\begin{proof}
  In the proof, we crucially use the fact that the generator matrix of
  any $(m,m-1)$-MDS code is equivalent to a generic $(m,m-1)$ matrix
  {up to} symmetries.\footnote{If $U$ is the generator matrix
    of an $(m,m-1)$-MDS code and $X_{(m-1)\times m}$ is any generic
    matrix, we can scale the columns of $X$ and change basis (left
    multiply with an invertible $(m-1)\times (m-1)$ matrix) to make it
    equal to $U.$} In particular any $(m,m-1)$-MDS code is also
  $\MDS(\ell)$ for all $\ell\ge 2.$ This is because there is a unique
  $(m,1)$-MDS code {up to} symmetries which is the parity check
  code (whose generator matrix has a single row of all ones).

  Let $V_{k\times n}$ be the generator matrix of $C$ and let $Q_{(n-k)\times n}$ be its parity check matrix. Note that $Q$ is also the generator matrix for the dual code $C^\perp.$
  Let $a = 1$, and $b = n-k$. Let $U_{(m-1)\times m}$ be a generator matrix of an $(m,m-1)$-MDS code (and thus also $MDS(m)$). %

  Let $S_1,S_2,\dots, S_{m} \subseteq [n]$ be a cycle family, each of size at most $b=n-k$ and of total size $(m-1)b$ and $\bigcap_{i=1}^m S_i=\emptyset$.
  To show that $Q$ is $(n,b)$-cycle-$\MDS(m)$, it suffices to show that $Q_{S_1} \cap \cdots \cap Q_{S_m} = 0$ whenever it holds generically.

  For all $j \in [n]$, let $T_j := \{i \in [m] : j \in S_i\}$. Recall that since $S_1,S_2,\dots,S_m$ is a cycle-MDS family, we have $T_j = \{c_j, \hdots, d_j\} \mod m$ for some $c_j, d_j$ or $T_j  = \emptyset$. Note that $|T_j|\le m-1$ since $\bigcap_{i=1}^m S_i =\emptyset.$

  Construct $B_1,B_2, \hdots, B_{n}\subset [m]$ as follows (again indices are {considered} modulo $m$):
  \[
    B_i = \begin{cases}
      [m] \setminus \{c_j, \hdots, d_j+1\} & \text{ if $T_j \neq \emptyset$}\\
      [m-1] & \text{ otherwise}
    \end{cases}
  \]
  Note that each $|B_i| \le m-1$ and
  \[
    \sum_{i=1}^{n} |B_i| = (m-1)n - \sum_{j=1}^{m} |S_j| = (m-1)n - (m-1)b = (m-1)k.
  \]

  For all $i \in [m]$, let $W_i := U_{[m] \setminus \{i,i+1\}}^{\perp},$ where indices are taken modulo $m$.
  \begin{claim}\label{claim:wMDS}
    For any MDS $U$, the family $W_1, \hdots, W_{m}$ is MDS.
  \end{claim}
  \begin{proof}
    By symmetry, it suffices to show that $W_1, \hdots, W_{m-1}$ are linearly independent. Consider the following matrix product.
    \begin{align*}
      \begin{bmatrix}
        U_1^T\\
        U_2^T\\
        \vdots\\
        U_{m-1}^T
      \end{bmatrix}\cdot  
      \begin{bmatrix}
        W_1 & W_2 &\cdots &W_{m-1}
      \end{bmatrix}=
      \begin{bmatrix}
        * &  &  & & \\
        * & * &  &  & \\
         & * & * &  & \\
         &  &\ddots &\ddots&\\
         &  &  & * &*\\
      \end{bmatrix}
    \end{align*}
    where $*$ corresponds to a non-zero entry and all the unmarked entries are $0$. Here we used that fact that $\inpro{W_i}{U_i}\ne 0$ and $\inpro{W_i}{U_{i+1}}\ne 0$, but $\inpro{W_i}{U_j}=0$ for all $j\notin \{i,i+1\}$, which follows from the MDS property of $U.$ Since the RHS matrix is clearly full rank, the matrices on the LHS product are both full rank.
    Therefore, $W_1, \hdots, W_{m}$ is indeed MDS.
  \end{proof}

  \begin{claim}
    \label{claim:switchU_W}
    $U_{B_i}^{\perp} = W_{T_j}=\operatorname{span}\{W_{c_j}, \hdots, W_{d_j}\}.$
  \end{claim}
 \begin{proof}
  Clearly $W_{c_j},\dots,W_{d_j}\in U_{B_i}^\perp$. Since they are part of an MDS code of dimension $m-1$ and $|T_j|\le m-1$, $W_{c_j},\dots,W_{d_j}$ are linearly independent. Finally by the MDS property of $U$ and since $|B_i|\le m-1$, $$\dim(U_{B_i}^\perp)=(m-1)-|B_i|=(m-1)-(m-(|T_i|+1))=|T_i|.$$ Therefore we get an equality by dimension counting.
 \end{proof}
By Lemma~\ref{lem:correctability_intersection}, 
  \begin{align*}
    &\bigcap_{i=1}^m Q_{S_i}=0\\
    \iff & \sum_{i=1}^m W_i \otimes Q_{S_m}=\F^{m-1} \otimes \F^b\\
    \iff & \sum_{j=1}^n W_{T_j} \otimes Q_i=\F^{m-1} \otimes \F^b\\
    \iff & \sum_{j=1}^n U_{B_j}^\perp \otimes Q_i=\F^{m-1} \otimes \F^b\tag{By Claim~\ref{claim:switchU_W}}\\
    \iff & \sum_{j=1}^n U_{B_j} \otimes V_i=\F^{m-1} \otimes \F^{n-b}\\
    \iff & \sum_{i=1}^m U_i \otimes V_{A_i} = \F^{m-1}\otimes \F^{n-b} \tag{$A_i=\{j: i\in B_j\}$}\\
    \iff &  \bigcap_{i=1}^m V_{A_i}=0.
  \end{align*}
Since $V$ is $\MDS(m)$, it is enough to show that $\bigcap_{i=1}^m \tQ_{S_i}=0$ for a generic $\tQ_{b\times n}$ implies $\bigcap_{i=1}^m \tV_{A_i}=0$ for a generic $\tV_{(n-b)\times n}$. This follows by Lemma~\ref{lem:correctability_intersection} by the same chain of {equivalences} as above and noting that the dual of a generic code is also generic, i.e., the parity check matrix corresponding to a generic generator matrix is also generic.

\end{proof}

A special case of a cycle family is when the sets $S_1, \hdots,
S_{\ell}$ are all disjoint.

\begin{definition}[weak-$\MDS(\ell)$]
  Say that an $k \times n$ matrix $V$ is $(n,k)$-weak-$\MDS(\ell)$ if for any $\ell$ \emph{disjoint} subsets $S_1, \hdots, S_{\ell} \subset [n]$ such that $|S_i| \le k$ for all $i$ and $|S_1| + \cdots + |S_{\ell}| \le (\ell-1)k$, we have that $V_{S_1} \cap \cdots \cap V_{S_\ell} = 0$.
\end{definition}

{This notion is the minimal assumption needed of the structure
  of the code for our first field size lower bound (Lemma~\ref{lem:weak-MDS-lowerbound}) to hold.}

{
\begin{proposition}
\label{prop:cycleMDS_weakMDS}
Suppose $C$ is a $(n,k)$-cycle-$\MDS(\ell)$ code, then $C$ is also a $(n,k)$-weak-$\MDS(\ell)$ code.
\end{proposition}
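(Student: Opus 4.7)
The plan is that the weak-$\MDS(\ell)$ hypothesis on $(S_1,\ldots,S_\ell)$ is a degenerate special case of the cycle-$\MDS(\ell)$ hypothesis, so we only need to verify the generic side of the intersection and then invoke the cycle-$\MDS(\ell)$ property.

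First, I would check that any family of pairwise disjoint subsets $S_1,\ldots,S_\ell\subset[n]$ is a cycle family. Indeed, for each $j\in[n]$ the set $T_j=\{i:j\in S_i\}$ has cardinality at most one, which is trivially an interval modulo $\ell$. The remaining hypotheses of cycle-$\MDS(\ell)$ --- namely $|S_i|\le k$, $\sum_i|S_i|\le(\ell-1)k$, and $S_1\cap\cdots\cap S_\ell=\emptyset$ --- are inherited directly: the first two are identical to the weak-$\MDS(\ell)$ hypotheses, while the third is automatic as soon as $\ell\ge 2$ and the $S_i$ are disjoint.

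Next, I would show that $W_{S_1}\cap\cdots\cap W_{S_\ell}=0$ for a generic $k\times n$ matrix $W$. Because the $S_i$ are disjoint, the submatrices $W_{S_1},\ldots,W_{S_\ell}$ are built from disjoint sets of generic columns, so each $W_{S_i}$ is a generic subspace of dimension $|S_i|$ and the family is in generic position. The standard generic-dimension formula for an $\ell$-fold intersection then gives
\[
  \dim\bigl(W_{S_1}\cap\cdots\cap W_{S_\ell}\bigr)=\max\Bigl(0,\ \textstyle\sum_{i=1}^\ell|S_i|-(\ell-1)k\Bigr)=0,
\]
using $\sum_i|S_i|\le(\ell-1)k$. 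A concrete justification is to consider the linear map $T:\bigoplus_{i=1}^\ell\F^{|S_i|}\to\bigoplus_{i=1}^{\ell-1}\F^k$ sending $(u_1,\ldots,u_\ell)$ to $\bigl(W_{S_i}u_i-W_{S_{i+1}}u_{i+1}\bigr)_{i=1}^{\ell-1}$; the kernel of $T$ maps onto the intersection, and since $W$ is generic and $\dim\mathrm{dom}(T)=\sum_i|S_i|\le(\ell-1)k=\dim\mathrm{codom}(T)$, the map $T$ is injective, forcing the intersection to be trivial.

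Finally, applying the ``only if'' direction of cycle-$\MDS(\ell)$ to $(S_1,\ldots,S_\ell)$ yields $V_{S_1}\cap\cdots\cap V_{S_\ell}=0$, which is precisely the weak-$\MDS(\ell)$ conclusion. The only nontrivial step is the generic dimension count, but this is a standard fact about generic subspaces of $\F^k$, so no serious obstacle is expected.
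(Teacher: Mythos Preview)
Your proposal is correct and follows essentially the same approach as the paper: verify that disjoint families are cycle families (since each $T_j$ is empty or a singleton), show the generic intersection is trivial, and invoke cycle-$\MDS(\ell)$. The only cosmetic difference is in the generic step: the paper argues dually, observing that the subspaces $W_{S_i}^\perp$ are generic of total dimension $\ell k-\sum_i|S_i|\ge k$ and hence span $\F^k$, whereas you argue directly via injectivity of the block map $T$ (which is exactly the matrix of Claim~\ref{claim:matrix-magic}); both justifications are standard and equivalent.
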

\begin{proof}
Let $V$ be a generator matrix for $C$. Let $S_1,S_2,\dots,S_\ell \subset [n]$ be a mutually disjoint family of subsets such that $|S_1|+|S_2|+\dots+|S_\ell|\le (\ell-1)k.$ To prove that $C$ is weak-$\MDS(\ell)$, it is enough to show that $V_{S_1}\cap \dots \cap V_{S_\ell}=0.$

 Clearly $S_1,S_2,\dots,S_\ell$ is also a \emph{cycle family}; if we imagine $S_1,\dots,S_\ell$ as subsets of rows of an $\ell\times n$ matrix, then the corresponding subsets $T_j$ of columns are just singletons. Since $S_1,S_2,\dots,S_\ell$ are disjoint and $|S_1|+|S_2|+\dots+|S_\ell|\le (\ell-1)k$, for a generic $k\times n$ matrix $W$, we should have $W_{S_1}\cap \dots \cap W_{S_\ell}=0$. This is because $W_{S_1}^\perp,\dots,W_{S_\ell}^\perp$ are generic subspaces (here we are using mutual disjointness of $S_i$) such that $$\sum_{i=1}^\ell \dim(W_{S_i}^\perp) =\sum_{i=1}^\ell (k-|S_i|)= k\ell - \sum_{i=1}^n |S_i|\ge k$$ and so $\sum_{i=1}^\ell W_{S_i}^\perp = \F^k.$ Therefore, since $C$ is cycle-$\MDS(\ell)$, we should have $V_{S_1}\cap \dots \cap V_{S_\ell}=0.$
\end{proof}  

Thus, Lemma~\ref{lem:weak-duality} and Proposition~\ref{prop:cycleMDS_weakMDS} together imply that the dual of any $(n,k)$-$\MDS(m)$ code is a $(n,n-k)$-weak-$\MDS(m)$ code. This observation is useful in proving our field size lower bounds.
}
\subsubsection{$\MDS(3)$ duality}

It turns out that the cycle-$\MDS(m)$ condition is equivalent to the cycle-$\MDS(m)$ condition when $m=3$.

\begin{proposition}
  Let $V$ be a $(n,k)$-cycle-$\MDS(3)$ code. Then, $V$ is an $(n,k)$-cycle-$\MDS(3)$ code.
\end{proposition}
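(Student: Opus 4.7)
The plan is to show that when $\ell = 3$, every triple of subsets $S_1, S_2, S_3 \subseteq [n]$ is automatically a cycle family, so cycle-$\MDS(3)$ and $\MDS(3)$ coincide once we combine this observation with the characterization of $\MDS(3)$ given by Lemma~\ref{lemma:ultimate-mds-equiv}. (The proposition as stated appears to have a typo; the intended conclusion must be that $V$ is $\MDS(3)$, since otherwise it is tautological and would not combine with Lemma~\ref{lem:weak-duality} to give duality at $\ell=3$.)

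First I would verify the key combinatorial fact: every subset of $\{1,2,3\}$ is an interval modulo $3$. The subsets $\emptyset$, $\{1\}$, $\{2\}$, $\{3\}$, $\{1,2\}$, $\{2,3\}$, and $\{1,2,3\}$ are intervals in the usual sense. The only ``wrap-around'' subset is $\{1,3\}$, but $\{1,3\} = \{3,4\} \pmod 3$, which is an interval modulo $3$. Consequently, for any $S_1, S_2, S_3 \subseteq [n]$ and any $j \in [n]$, the set $T_j = \{i : j \in S_i\} \subseteq \{1,2,3\}$ is an interval modulo $3$, so the triple is automatically a cycle family.

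Next I would invoke Lemma~\ref{lemma:ultimate-mds-equiv}. Since $V$ is cycle-$\MDS(3)$, it is in particular MDS (apply the cycle-$\MDS$ condition with two singleton sets and one set of size $k-1$, or equivalently take $\ell=2$ via Proposition~\ref{prop:reduction}-style reductions). Now take arbitrary $A_1, A_2, A_3 \subseteq [n]$ with $|A_i| \le k$, $|A_1|+|A_2|+|A_3| = 2k$, and $A_1 \cap A_2 \cap A_3 = \emptyset$. By the first step, the triple $(A_1,A_2,A_3)$ is a cycle family, and clearly $|A_1|+|A_2|+|A_3| = 2k \le (3-1)k$. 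So the cycle-$\MDS(3)$ hypothesis yields $V_{A_1} \cap V_{A_2} \cap V_{A_3} = 0$ iff this holds generically. By Lemma~\ref{lemma:ultimate-mds-equiv}, $V$ is $\MDS(3)$.

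There is no real obstacle here; the argument is a purely combinatorial unpacking of the definitions, and the heavy lifting is done by Lemma~\ref{lemma:ultimate-mds-equiv}. The point of the proposition is that when combined with Lemma~\ref{lem:weak-duality}, it immediately gives the desired duality at $\ell = 3$: if $C$ is $\MDS(3)$, then $C^\perp$ is cycle-$\MDS(3)$, and hence $C^\perp$ is $\MDS(3)$.
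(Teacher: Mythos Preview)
Your proposal is correct and matches the paper's proof almost exactly: both observe that every subset of $\{1,2,3\}$ is an interval modulo $3$ (so any triple $(A_1,A_2,A_3)$ is automatically a cycle family) and then invoke Lemma~\ref{lemma:ultimate-mds-equiv}; you also correctly identify the typo in the statement. You are slightly more careful than the paper in flagging the MDS hypothesis needed for Lemma~\ref{lemma:ultimate-mds-equiv}---the paper silently assumes it, and in the intended application (via Lemma~\ref{lem:weak-duality}) it holds automatically since the dual of an MDS code is MDS---though your one-line sketch of why cycle-$\MDS(3)$ alone forces MDS is not fully rigorous as written.
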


\begin{proof}
  By Lemma~\ref{lemma:ultimate-mds-equiv}, to prove that $V$ is
  $(n,k)-\MDS(3)$ it suffices to check that for all sets $A_1, A_2,
  A_3$ with $|A_i| \le k$, $|A_1|+|A_2|+|A_3| = 2k$, and $A_1 \cap A_2
  \cap A_3 = \emptyset$, we have that $V_{A_1} \cap V_{A_2} \cap
  V_{A_3} = 0$ if and only if it should happen generically. Since
  every subset of $\{1,2,3\}$ is either an {empty set} or an ``interval'' modulo $3$, we have that $A_1, A_2, A_3$ is a cycle family. Thus, the required condition holds because $V$ is $(n,k)$-cycle-$\MDS(3)$.
\end{proof}

\begin{corollary}\label{cor:mds3-duality}
The dual of an $(n,k)$-$\MDS(3)$ code is a $(n,n-k)$-$\MDS(3)$ code.
\end{corollary}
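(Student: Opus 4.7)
The plan is to combine the two results that immediately precede the corollary: Lemma~\ref{lem:weak-duality} (weak duality) and the preceding Proposition (which upgrades cycle-$\MDS(3)$ to $\MDS(3)$). Concretely, I would start with $C$ an $(n,k)$-$\MDS(3)$ code and apply Lemma~\ref{lem:weak-duality} with $m=3$ to conclude that the dual code $C^\perp$ is an $(n,n-k)$-cycle-$\MDS(3)$ code. Then applying the preceding Proposition to $C^\perp$ upgrades cycle-$\MDS(3)$ to $\MDS(3)$, giving exactly the desired conclusion.

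The only conceptual content is understanding why cycle-$\MDS(3)$ collapses to $\MDS(3)$; this is the heart of why duality (rather than just weak duality) holds for $\ell=3$. The key observation is combinatorial: for any three subsets $A_1, A_2, A_3 \subseteq [n]$ and any $j \in [n]$, the set $T_j = \{i \in [3] : j \in A_i\}$ is automatically either empty or an ``interval modulo $3$,'' because every nonempty subset of $\{1,2,3\}$ (namely $\{1\}, \{2\}, \{3\}, \{1,2\}, \{2,3\}, \{3,1\}, \{1,2,3\}$) is consecutive mod $3$. Thus every triple $(A_1, A_2, A_3)$ is automatically a cycle family, and so the cycle-$\MDS(3)$ hypothesis imposes the same constraints as the full $\MDS(3)$ hypothesis (once we use Lemma~\ref{lemma:ultimate-mds-equiv} to reduce to the case $|A_1|+|A_2|+|A_3| = 2(n-k)$, $|A_i| \le n-k$, and $A_1 \cap A_2 \cap A_3 = \emptyset$).

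Since both ingredients are already available, there is no real obstacle: the proof will just be a two-line chain of implications. The substantive work already lives in Lemma~\ref{lem:weak-duality} (which handles the hard direction via the auxiliary $(m,m-1)$-MDS matrix $U$ and the Claims about $W_i$), and the collapse of cycle-$\MDS(3)$ to $\MDS(3)$ is a purely combinatorial observation about intervals modulo $3$. For $m \ge 4$, this combinatorial collapse fails (e.g.\ $\{1,3\} \subset \{1,2,3,4\}$ is not an interval mod $4$), which is consistent with the fact, noted in the preceding paragraphs, that full $\MDS(\ell)$ duality fails for $\ell \ge 4$.
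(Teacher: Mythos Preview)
Your proposal is correct and matches the paper's approach exactly: the corollary is immediate from Lemma~\ref{lem:weak-duality} applied with $m=3$ (giving that $C^\perp$ is cycle-$\MDS(3)$) followed by the preceding Proposition (which upgrades cycle-$\MDS(3)$ to $\MDS(3)$ via the observation that every nonempty subset of $\{1,2,3\}$ is an interval modulo $3$). Your explanation of why the collapse works for $\ell=3$ but fails for $\ell\ge 4$ is also on point.
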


As mentioned previously, the dual of an $\MDS(4)$ code is not necessarily an $\MDS(4)$ code (but it must be an $\MDS(3)$ code).

\section{Lower bounds on field size for $\MDS(\ell)$}\label{sec:lower-bounds}
{Our lower bound for $\MDS(m)$ is
inspired by lower bounds for Maximally Recoverable Local
Reconstruction Codes from \cite{gopi2020maximally} and works as
follows. We will actually prove the lower bound for weak-$\MDS(m)$ codes which implies a lower bound for $\MDS(m)$ codes.
By a reduction, we can assume that the dimension of the code is equal to $m$. Suppose $V$ be the generator matrix of an $(n,m)$-weak-$\MDS(m)$ code where $m$ is a constant and suppose $m$ divides $n$ for simplicity. We will partition $V_1,V_2,\dots,V_n$ in to $m$ parts of size $n/m$ each, say the partition is given by $[n]=\cP_1 \sqcup \cP_2 \sqcup \dots \sqcup \cP_m$. Now consider arbitrary subsets $A_i\subset \cP_i$ of size $|A_i|=m-1$ for $i\in [m].$ For $V$ to be $\MDS(m)$, it is necessary that $V_{A_1}\cap V_{A_2} \cap \dots \cap V_{A_m}=0$. To see this, note that $x\in V_{A_i}$ imposes $1$ generic linear equation on $x$. And so $x\in V_{A_i}$ for all $i\in [m]$ imposes $m$ equations which should be linearly independent if $V$ behaves generically. Fix any $i\in [m]$ and fix a subset $A_i\subset \cP_i.$ A random point $X$ of $\F_q^m$ lies in $V_{A_i}$ for some $A_i \subset \cP_i$ with high probability. This is because for a fixed subset $A_i\subset \cP_i$, $X$ lies in $V_{A_i}$ with probability $\frac{1}{q}.$ Since there are $\binom{n/m}{m-1}$ subsets $A_i\subset \cP_i$, the expected number of $A_i\subset \cP_i$ such that $X\in V_{A_i}$ is $\binom{n/m}{m-1} \cdot \frac{1}{q}$ which is $\gg 1$ if $q\ll_{m} n^{m-1}$. By using pairwise independence and carefully calculating second moments, we can conclude that if $q\ll_{m} n^{m-1}$, with high probability, there exists some $A_i \subset \cP_i$ such that $X\in V_{A_i}.$ By union bound over $i\in [m]$,  $X\in V_{A_1}\cap V_{A_2} \cap \dots \cap V_{A_m}$ for some $A_i \subset \cP_i$ and $X$ is non-zero with high probability which violates the $\MDS(m)$ property. Therefore $q\gg_{m} n^{m-1}.$ We will now make this argument formal.}

\begin{lemma}\label{lem:weak-MDS-lowerbound}
Assume $n \gg k \ge 3$. Let $V$ be an $(n,k)$-weak-$\MDS(k)$ code over field $\F_q$. Then $q \ge \Omega_{k}(n^{k-1}).$
\end{lemma}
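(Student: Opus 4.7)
The plan is to execute the probabilistic argument outlined in the informal discussion immediately preceding the lemma. For simplicity assume $k\mid n$; the general case costs only a constant factor depending on $k$. Partition $[n]=\cP_1\sqcup\cdots\sqcup\cP_k$ into $k$ blocks of size $n/k$. For any choice of $(k-1)$-subsets $A_i\subseteq \cP_i$, the family $A_1,\ldots,A_k$ is automatically pairwise disjoint and has total size $k(k-1)=(k-1)k$, so the weak-$\MDS(k)$ hypothesis forces
\[
V_{A_1}\cap V_{A_2}\cap\cdots\cap V_{A_k}=0.
\]
The strategy is to violate this identity when $q$ is too small, which will give the desired lower bound on $q$.

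Sample $X$ uniformly from $\F_q^k\setminus\{0\}$, and for each $i\in[k]$ let
\[
N_i \;:=\;\#\bigl\{A\subseteq\cP_i \,:\, |A|=k-1,\; X\in V_A\bigr\}.
\]
If $N_i\ge 1$ for every $i$ simultaneously, then picking any witnessing $A_i$'s produces a non-zero $X\in V_{A_1}\cap\cdots\cap V_{A_k}$, contradicting weak-$\MDS(k)$. Hence it suffices to show that $\Pr[N_i=0]=o_k(1)$ for each $i$ whenever $q\le c_k\, n^{k-1}$ for a small enough constant $c_k=c_k(k)$, since then a union bound over the $k$ blocks gives $\Pr[\exists\,i:N_i=0]<1$.

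The bound on $\Pr[N_i=0]$ will follow from the second-moment method. The required moments, assuming MDS-like behaviour of $V$ restricted to $\cP_i$, are
\[
\E[N_i] \;\approx\; \binom{n/k}{k-1}\cdot\frac{1}{q} \;=\; \Theta_k\!\left(\frac{n^{k-1}}{q}\right),
\qquad
\E[N_i^2] \;\le\; \E[N_i] + (1+o_k(1))\,\E[N_i]^2.
\]
The mean estimate uses $\dim V_A=k-1$ so that $\Pr[X\in V_A]\approx 1/q$. The variance estimate uses that for distinct $(k-1)$-subsets $A\neq A'$ in $\cP_i$ one has $\dim(V_A\cap V_{A'})=k-2$ (again by the MDS property), giving $\Pr[X\in V_A\cap V_{A'}]\approx 1/q^2$ and hence near pairwise independence; the off-diagonal contribution is then $(1+o_k(1))\E[N_i]^2$. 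Chebyshev then yields $\Pr[N_i=0]\le \mathrm{Var}(N_i)/\E[N_i]^2 = O_k(1/\E[N_i])$, which is $o_k(1)$ when $q\le c_k n^{k-1}$.

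The main obstacle I anticipate is justifying the MDS-like behaviour of $(k-1)$-subsets of columns inside each $\cP_i$ purely from the weak-$\MDS(k)$ hypothesis. The cleanest route is to first show that any $(n,k)$-weak-$\MDS(k)$ code with $n\gg k$ is already MDS: a linear dependence among any $k$ columns, combined with the freedom (available when $n\gg k$) to pick disjoint $(k-1)$-subsets elsewhere in $[n]$, should be extendable to a configuration that directly violates the defining intersection identity of weak-$\MDS(k)$. If a clean direct deduction proves subtle, a robust fallback is to restrict $N_i$ to those $A\subseteq\cP_i$ with $\dim V_A=k-1$ and show that these comprise a $(1-o_k(1))$ fraction of all $(k-1)$-subsets of $\cP_i$ (using weak-$\MDS(k)$ to rule out large families of degenerate $(k-1)$-subsets); the mean and variance estimates above then remain of the same order. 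Either way, once this MDS-type property is in hand the probabilistic argument delivers $q\ge \Omega_k(n^{k-1})$.
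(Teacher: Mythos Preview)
Your approach is the same as the paper's: partition $[n]$ into $k$ blocks of size $s\approx n/k$, draw $x\in\F_q^k$ uniformly, and use the second-moment method (the paper phrases it via the Paley--Zygmund inequality $\Pr[X_i>0]\ge\E[X_i]^2/\E[X_i^2]$ rather than Chebyshev) to show that with positive probability $x$ lies in $V_A$ for some $(k-1)$-subset $A$ of every block; this contradicts weak-$\MDS(k)$ unless $q=\Omega_k(n^{k-1})$. Your variance estimate is in fact slightly cleaner than the paper's bookkeeping: since any two distinct $(k-1)$-subsets $A\neq A'$ have $|A\cap A'|\le k-2$, under MDS one gets $\dim(V_A\cap V_{A'})=k-2$ uniformly, so the off-diagonal second moment is exactly $\binom{s}{k-1}\bigl(\binom{s}{k-1}-1\bigr)/q^2$.

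You are right to flag the MDS issue: the paper's proof uses it just as you do, both in writing $\E[X_i]=\binom{s}{k-1}/q$ and in the formula $\dim(V_A\cap V_{A'})=\max\{k-2,|A\cap A'|\}$, but never derives it from weak-$\MDS(k)$. In the paper's applications the code in question is always obtained by dualizing or shortening an $\MDS(\ell)$ code and is therefore MDS automatically, so the issue is moot there. Your first proposed fix (deduce MDS from weak-$\MDS(k)$ when $n\gg k$) is plausible---for $k=3$, if $V_1,V_2,V_3$ are coplanar then $S_1=\{1,2\}$, $S_2=\{3\}$, and $S_3$ any independent triple disjoint from $\{1,2,3\}$ gives $V_{S_1}\cap V_{S_2}\cap V_{S_3}\supseteq\operatorname{span}(V_3)\neq 0$---but extending this to all $k$ (one needs $k-2$ disjoint full-rank $k$-subsets among the remaining columns) is a separate argument neither you nor the paper supplies. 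The simplest resolution is to add MDS to the hypotheses, which is harmless for every use of the lemma in the paper.
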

\begin{proof}
  Let $s = \lfloor n/k \rfloor$. {For} all $i \in [k]$, let $I_i = \{(i-1)s + 1, \hdots, is\}$. Let $k' = k-1$. Let $\mathcal S_i$ be all subsets of $I_i$ of size $k'$.

  Since $V$ is $(n,k)$-weak-$\MDS(k)$ we have that 
\begin{align}
\text{for all $A_1 \in \mathcal S_1, \hdots, A_{k} \in \mathcal S_{k}$,\ \ \ \ \ \ \ \ }\dim(V_{A_1} \cap \cdots \cap V_{A_{k}}) = 0.\label{eq:MDS-partition}
\end{align}

We seek to show that if $q\ll_k n^{k-1}$, then the above condition is violated. Consider the following random process. Sample $x \in \F^k$ uniformly at random, and for all $i \in [k]$, let $X_i$ be the number of $A_i \in \mathcal S_i$ such that $x \in V_{A_i}$. If we can show that with nonzero {probability} all the $X_i$'s are simultaneously nonzero and $x$ is nonzero, then we know that (\ref{eq:MDS-partition}) is violated.

Observe that for all $i \in [k]$ by linearity of expectation
\begin{align}
  \E[X_i] = \sum_{A_i \in \mathcal S_i} \Pr_{x \sim \F^k}[x \in V_{A_i}] = \binom{s}{k'} \frac{1}{q^{k-k'}} = \binom{s}{k-1} \frac{1}{q}.\label{eq:EX}
\end{align}
Note that $\E[X_i]\gg 1$ if $q\ll_k n^{k-1}$. To conclude that $\Pr[X_i>0]\approx 1$, we will show that second moment $\E[X_i^2]\approx \E[X_i]^2$ and use $\Pr[X_i>0]\ge \E[X_i]^2/\E[X_i^2]$. We will use the fact that for $A_i,A_i'\in \mathcal S_i$,
\begin{align*}
\dim(V_{A_i}\cap V_{A_i'})&=\dim(V_{A_i})+\dim(V_{A_i'})-\dim(V_{A_i\cup {A'_i}})\\
&=2k'-\min\{k,2k'-|A_i\cap A_i'|\}\\
&=\max\{2k'-k,|A_i\cap A_i'|\}\\
&=\max\{k-2,|A_i\cap A_i'|\}
\end{align*}

\begin{align*}
  \E[X^2_i] &= \sum_{A_i, A'_i \in \mathcal S_i} \Pr_{x \sim \F^k}[x \in V_{A_i} \cap V_{A'_i}]\\
            &= \sum_{j=0}^{k'}\sum_{\substack{A_i,A'_i \in \mathcal S_i\\|A_i \cap A'_i| = j}}\frac{q^{\max(j, k-2)}}{q^k}\\
            &= \sum_{j=0}^{k'} \binom{s}{k'}\binom{k'}{j}\binom{s-k'}{k'-j}\frac{q^{\max(j, k-2)} }{q^k}\\
            &= \E[X_i]^2 \sum_{j=0}^{k'} \frac{\binom{s}{k'}\binom{k'}{j}\binom{s-k'}{k'-j}}{\binom{s}{k'}^2}\cdot\frac{q^{\max(j, k-2)}\cdot q^{2}}{q^k}\\
            &= \E[X_i]^2 \sum_{j=0}^{k'} \frac{\binom{k'}{j}\binom{s-k'}{k'-j}}{\binom{s}{k'}}\cdot q^{\max(j-(k-2), 0)}\\
            &\le \E[X_i]^2 \lp 1+ O_{k}(1)\sum_{j=1}^{k-1} \frac{q^{\max(j-(k-2),0)}}{n^j}\rp\\
            & =  \E[X_i]^2 \lp 1+ O_k(1)\lp\frac{1}{n}+\frac{1}{n^2}+\dots+\frac{1}{n^{k-2}}+\frac{q}{n^{k-1}} \rp\rp\\
            & \le\E[X_i]^2\lp 1+\frac{1}{k}\rp. \tag{If $q\ll_k n^{k-1}$ and $n\gg_k 1$}
\end{align*}

Thus, for each $i \in [k]$,
\[
  \Pr[X_i > 0] \ge \frac{\E [X_i]^2}{\E [X_i^2]} \ge \frac{1}{1+1/k} \ge \frac{k}{k+1}.
\]
Therefore, by the union bound, all the $X_i$'s are at least $1$ simultaneously and the sampled $x \in \F^k$ is nonzero with probability at least $1-k\cdot \frac{1}{k+1}-\frac{1}{q^k}>0$, a contradiction.
\end{proof}

Combining Lemma~\ref{lem:weak-MDS-lowerbound}, Proposition~\ref{prop:reduction}, and Lemma~\ref{lem:weak-duality} allows us to prove Theorem~\ref{thm:main-lower-bound} and Corollary~\ref{cor:main-lower-bound}.

\begin{corollary}[Theorem~\ref{thm:main-lower-bound}]\label{cor:lower-bound}
  Let $C$ be an $(n,k)$-$\MDS(\ell)$ code over $\F_q$. Then $$q\ge \Omega_{\ell}\lp n^{\min\{k,n-k,\ell\}-1}\rp.$$
\end{corollary}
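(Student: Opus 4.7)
Let $m := \min\{k,n-k,\ell\}$, and assume $m\ge 3$ since smaller values of $m$ give trivial bounds. The plan is to produce, in each case, a weak-$\MDS(m)$ code of dimension exactly $m$ and length $\Omega(n)$, and then invoke Lemma~\ref{lem:weak-MDS-lowerbound}.

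If $m=k$ (so $k\le\ell$), iterating Proposition~\ref{prop:reduction} part 1 downgrades $C$ to an $(n,k)$-$\MDS(k)$ code, which is in particular weak-$\MDS(k)$, and Lemma~\ref{lem:weak-MDS-lowerbound} directly yields $q\ge\Omega_k(n^{k-1})=\Omega_\ell(n^{m-1})$. If $m=n-k$ (so $n-k\le\ell$), Proposition~\ref{prop:reduction} part 1 makes $C$ an $(n,k)$-$\MDS(n-k)$ code; Lemma~\ref{lem:weak-duality} then gives that $C^\perp$ is $(n,n-k)$-cycle-$\MDS(n-k)$ and Proposition~\ref{prop:cycleMDS_weakMDS} downgrades this to weak-$\MDS(n-k)$, so applying Lemma~\ref{lem:weak-MDS-lowerbound} to $C^\perp$ delivers $q\ge\Omega_{n-k}(n^{n-k-1})=\Omega_\ell(n^{m-1})$.

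The interesting case is $m=\ell$, where both $k$ and $n-k$ strictly exceed $\ell$. We split on whether $k\le n/2$ or $k>n/2$. When $k\le n/2$, we shorten $C$ via Proposition~\ref{prop:reduction} part 3 a total of $k-\ell$ times to obtain an $(n-k+\ell,\ell)$-$\MDS(\ell)$ code; since $n-k+\ell\ge n/2$, Lemma~\ref{lem:weak-MDS-lowerbound} gives $q\ge\Omega_\ell((n/2)^{\ell-1})=\Omega_\ell(n^{m-1})$. When $k>n/2$, we instead puncture $C$ via Proposition~\ref{prop:reduction} part 2 a total of $n-k-\ell$ times to obtain a $(k+\ell,k)$-$\MDS(\ell)$ code, then pass to its dual via Lemma~\ref{lem:weak-duality} and Proposition~\ref{prop:cycleMDS_weakMDS} to obtain a $(k+\ell,\ell)$-weak-$\MDS(\ell)$ code; since $k+\ell>n/2$, Lemma~\ref{lem:weak-MDS-lowerbound} finishes.

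The main subtlety lies in this last case: shortening $C$ blindly down to dimension $\ell$ collapses the length whenever $k$ is close to $n$, while Proposition~\ref{prop:reduction} does not directly give us shortening of the (merely cycle-$\MDS$, not $\MDS$) dual code. Choosing shortening when $k\le n/2$ and puncture-then-dualize when $k>n/2$ ensures the final code fed to Lemma~\ref{lem:weak-MDS-lowerbound} has dimension exactly $\ell$ and length $\Omega(n)$ simultaneously, as required by the lemma's hypothesis that the weak-$\MDS$ order equals the dimension.
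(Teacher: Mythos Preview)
Your proof is correct and follows essentially the same approach as the paper. The paper organizes the argument by first splitting on $k > n/2$ versus $k \le n/2$ and then setting $b = \min(n-k,\ell)$ or $b = \min(k,\ell)$ respectively, which collapses your three top-level cases into two; your $m=k$ and $m=n-k$ cases correspond to the sub-cases $b=k$ (no shortening needed) and $b=n-k$ (no puncturing needed) in the paper's presentation, while your $m=\ell$ case with its $k\le n/2$ vs.\ $k>n/2$ split matches the paper's $b=\ell$ sub-cases exactly.
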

\begin{proof}
\textbf{Case 1: $k>n/2$.}
Let $b = \min(n-k, \ell)$. By Proposition~\ref{prop:reduction}, we can puncture the code $C$ at $n-k-b$ locations to get a code $C_0$ which is $(k+b,k)$-$\MDS(b)$ code. By Lemma~\ref{lem:weak-duality}, $C_0^{\perp}$ is $(k+b, b)$-cycle-$\MDS(b)$. By Lemma~\ref{lem:weak-MDS-lowerbound}, we have that the field size of $C_0^{\perp}$ (and thus $C$) is $$q\ge \Omega_{b}(k^{b-1})=\Omega_{\ell}\lp n^{\min\{\ell,n-k\}-1}\rp.$$\\

\textbf{Case 2: $k\le n/2$.}
Let $b = \min(k, \ell)$. By Proposition~\ref{prop:reduction}, we can shorten the code $C$ at $k-b$ locations to get a code $C_0$ which is $(n-k+b,b)$-$\MDS(b)$ code (also trivially cycle-$\MDS(b)$). By Lemma~\ref{lem:weak-MDS-lowerbound}, we have that the field size of $C_0^{\perp}$ (and thus $C$) is $$q\ge \Omega_{b}(n^{b-1})=\Omega_{\ell}\lp n^{\min\{\ell,k\}-1}\rp.$$
\end{proof}

\begin{corollary}[Corollary~\ref{cor:main-lower-bound}]
  Let $C=C_{\col} \otimes C_{\row}$ be an $(m,n,a,b)$-MR tensor code. The minimum field size of $C$ is at least $\Omega_{m}(n^{\min\{b-1,n-b-1, m-a\}}).$
\end{corollary}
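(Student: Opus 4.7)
The plan is to reduce to the $a=1$ case (Corollary~\ref{cor:lower-bound}) by puncturing the MR tensor code along $a-1$ rows to obtain an $(m-a+1,n,1,b)$-MR tensor code, and then applying Theorem~\ref{thm:MRtensorcodesa1_MDSm} to get an $\MDS(m-a+1)$ row code. First I would observe that both $C_\col$ and $C_\row$ must be MDS: any erasure pattern confined to a single column with at most $a$ erasures is information-theoretically correctable for the $(m,n,a,b)$-topology (a generic MDS $C_\col$ handles such a pattern), so MR of $C$ forces $C_\col$ to correct all patterns of size $\le a$, making it $(m,m-a)$-MDS; symmetrically $C_\row$ is $(n,n-b)$-MDS.

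Next, puncture $C$ along the rows $[a-1]$. Since $C_\col$ has minimum distance $a+1$, puncturing $C_\col$ at $a-1$ positions is injective (any codeword vanishing on $\{a,\ldots,m\}$ has weight at most $a-1$ and so must be zero), hence the punctured column code $\widetilde{C}'_\col$ is an $(m-a+1,m-a)$-MDS code, which is a simple parity check code up to coordinate-wise scaling. Because every codeword of $\widetilde{C}'_\col$ lifts uniquely to a codeword of $C_\col$, the punctured tensor code
$$C' := \lc X|_{\{a,\ldots,m\}\times[n]} \st X \in C \rc$$
equals $\widetilde{C}'_\col \otimes C_\row$, an $(m-a+1,n,1,b)$-tensor code. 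After applying the row-scaling that turns $\widetilde{C}'_\col$ into the standard parity check code (which preserves the MR property), $C'$ has the form required by Theorem~\ref{thm:MRtensorcodesa1_MDSm}.

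The main step is to show $C'$ is itself MR. Suppose, for contradiction, that some information-theoretically correctable pattern $E' \subseteq [m-a+1]\times[n]$ is not corrected by $C'$: then there is a nonzero $X' \in C'$ supported on $E'$. Any lift $X \in C$ is then supported on $E := \bar{E} \cup ([a-1]\times[n])$, where $\bar{E}$ is $E'$ embedded into rows $\{a,\ldots,m\}$. I would then argue that $E$ is information-theoretically correctable for the $(m,n,a,b)$-topology: for a generic $\widetilde{C} = \widetilde{C}_\col \otimes \widetilde{C}_\row$, the puncturing $\widetilde{C}'$ is a generic $(m-a+1,n,1,b)$-tensor code, which by hypothesis corrects $E'$; hence any codeword $Y \in \widetilde{C}$ supported on $E$ restricts on rows $\{a,\ldots,m\}$ to a codeword of $\widetilde{C}'$ supported in $E'$ and must vanish there, leaving $Y$ supported in $[a-1]\times[n]$, making every column of $Y$ a codeword of $\widetilde{C}_\col$ of weight at most $a-1$, which must be zero since $\widetilde{C}_\col$ is MDS with minimum distance $a+1$. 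So $\widetilde{C}$ corrects $E$, making $E$ information-theoretically correctable; by MR of $C$, then $C$ corrects $E$, contradicting the existence of $X$.

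Finally, Theorem~\ref{thm:MRtensorcodesa1_MDSm} applied to $C'$ shows that $C_\row$ is an $(n,n-b)$-$\MDS(m-a+1)$ code over $\F_q$, and Corollary~\ref{cor:lower-bound} gives
$$q \ge \Omega_m\!\lp n^{\min\{m-a+1,\,b,\,n-b\}-1} \rp = \Omega_m\!\lp n^{\min\{m-a,\,b-1,\,n-b-1\}} \rp.$$
The main obstacle I expect is the MR-preservation step under puncturing; the MDS property of $\widetilde{C}_\col$ is essential here to ensure that the augmented erasure pattern $E$ remains generically correctable in the larger topology, allowing the contradiction to close.
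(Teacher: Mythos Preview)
Your proposal is correct and follows essentially the same route as the paper: puncture the column code at $a-1$ positions to reduce to the $a=1$ case, show the punctured tensor code is still MR, apply Theorem~\ref{thm:MRtensorcodesa1_MDSm} to conclude $C_\row$ is $(n,n-b)$-$\MDS(m-a+1)$, and then invoke the $\MDS(\ell)$ field-size lower bound. The paper's proof is much terser---it handles your MR-preservation step in one line (``there is a canonical injective map from correctable erasure patterns'') without spelling out the lift-and-contradict argument you give, and it omits the preliminary observation that $C_\col$ must be MDS and the row-scaling to the standard parity check code---but the underlying logic is identical.
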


\begin{proof}
 Like in Proposition~\ref{prop:reduction}, let $C'_{\row}$ be an $(m-a+1, m-a)$-MDS code formed by puncturing $C_{\row}$ at any $a-1$ positions. Observe that $C_{\col} \otimes C'_{\row}$ must be an $(m-a+1,n,1,b)$-tensor {code}.
 This is because the generator matrix of $C_{\col}\otimes C'_{row}$ is $U \otimes V_{[n]\setminus A}$ where $A$ is the set of punctured positions which is a part of $U\otimes V$, the generator matrix for $C_{\col}\otimes C_{\row}$. Moreover, there is a canonical injective map from correctable erasure patterns of $C_{\col}\otimes C'_{\row}$ and those of $C_{\col}\otimes C_{\row}$.
 By Corollary~\ref{cor:MDS-MR}, $C_{\col}$ is an $(n, n-b)$-$\MDS(m-a+1)$ code. Thus, by Theorem~\ref{thm:main-lower-bound} the field size of $C_{\col}$ is at least $\Omega_{m}(n^{\min\{b,n-b,m-a+1\}-1})$, as desired.
\end{proof}

\section{Efficient regularity testing for MR Tensor Codes when $a=1$}\label{sec:regularity}
In this section, we will assume that $a=1$. Therefore WLOG, we can assume that $C_\col$ is a parity check code. In this case, we have a neat characterization of the generically correctable patterns in terms of \emph{regularity}.

\subsection{Characterizing correctable patterns: Regularity}

\begin{definition}[Regular pattern{, \cite{GHKSWY}}]
	An erasure pattern $E\subset [m]\times [n]$ is regular for an $(m,n,a,b)$-Tensor Code if for every $S\subset [m]$ of size at least $a$ and $T\subset [n]$ of size at least $b$, we have 
	\begin{equation}
	\label{eqn:regularity_E}
	|E \cap (S \times T)| \le sb + ta - ab
	\end{equation} 
	where $s=|S|$ and $t=|T|.$
\end{definition}
\begin{remark}
	We can rewrite the regularity condition (\ref{eqn:regularity_E}) as:	
	\begin{equation}
	\label{eqn:regularity_barE}
	|\barE \cap (S \times T)| \ge (s-a)(t-b)
	\end{equation} 
\end{remark}

We will first prove that correctability implies regularity for any value of $a,b$. We will need the following key lemma.
\begin{lemma}
	\label{lem:intersection_tensor_product}
	If $U_1,U_2$ are subspaces of $U$ and $V_1,V_2$ are subspaces of $V$, then $$\dim(U_1\otimes V_1 + U_2 \otimes V_2)=\dim(U_1)\cdot\dim(V_1)+\dim(U_2)\cdot\dim(V_2)-\dim(U_1 \cap U_2)\cdot \dim(V_1\cap V_2).$$
\end{lemma}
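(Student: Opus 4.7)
The plan is to combine the standard inclusion-exclusion for dimensions with the key identity
\[
(U_1 \otimes V_1) \cap (U_2 \otimes V_2) = (U_1 \cap U_2) \otimes (V_1 \cap V_2),
\]
which would reduce the claim to the routine facts $\dim(U_i \otimes V_i) = \dim(U_i)\dim(V_i)$ and $\dim(U_1 \otimes V_1 + U_2 \otimes V_2) = \dim(U_1 \otimes V_1) + \dim(U_2 \otimes V_2) - \dim((U_1 \otimes V_1) \cap (U_2 \otimes V_2))$. Substituting the key identity then gives exactly the claimed formula.

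So the real work is to prove the key identity. The inclusion $(U_1 \cap U_2) \otimes (V_1 \cap V_2) \subseteq (U_1 \otimes V_1) \cap (U_2 \otimes V_2)$ is immediate. For the reverse inclusion, I would choose \emph{coordinated} bases: pick a basis $e_1,\dots,e_p$ of $U_1 \cap U_2$ and extend it to a basis $e_1,\dots,e_p,f_1,\dots,f_q$ of $U_1$ and to a basis $e_1,\dots,e_p,g_1,\dots,g_r$ of $U_2$. A standard fact (following from the inclusion-exclusion dimension formula applied to $U_1 \cap U_2 \subseteq U_1, U_2$) is that $\{e_i\} \cup \{f_j\} \cup \{g_k\}$ is linearly independent, and hence can be completed to a basis of $U = U_1 + U_2$ (or any ambient space containing $U_1, U_2$). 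Perform the analogous construction on the $V$-side with bases $e'_{i'}, f'_{j'}, g'_{k'}$.

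Then the pure tensors of pairs of basis vectors form part of a basis of $U \otimes V$, and in this basis $U_1 \otimes V_1$ is spanned by those pure tensors whose first factor lies in $\{e_i, f_j\}$ and whose second factor lies in $\{e'_{i'}, f'_{j'}\}$, while $U_2 \otimes V_2$ is spanned by those whose first factor lies in $\{e_i, g_k\}$ and whose second factor lies in $\{e'_{i'}, g'_{k'}\}$. Since the spans of two disjoint subsets of a basis intersect in the span of their common elements, the intersection $(U_1 \otimes V_1) \cap (U_2 \otimes V_2)$ is exactly the span of $\{e_i \otimes e'_{i'}\}$, which is $(U_1 \cap U_2) \otimes (V_1 \cap V_2)$.

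The main obstacle is really just the coordinated-basis step: one has to be careful that extending bases of $U_1 \cap U_2$ inside $U_1$ and inside $U_2$ gives a jointly linearly independent family, and then to argue that the intersection of spans of disjoint basis subsets is the span of the common subset. Once that is in place, the computation
\[
\dim(U_1 \otimes V_1 + U_2 \otimes V_2) = \dim(U_1)\dim(V_1) + \dim(U_2)\dim(V_2) - \dim(U_1 \cap U_2)\dim(V_1 \cap V_2)
\]
follows immediately from inclusion-exclusion, finishing the proof.
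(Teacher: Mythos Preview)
Your proposal is correct, and it reduces the lemma to the same key identity the paper uses, namely $(U_1\otimes V_1)\cap(U_2\otimes V_2)=(U_1\cap U_2)\otimes(V_1\cap V_2)$; from there both you and the paper finish by inclusion--exclusion. Where you diverge is in how you establish that identity. You give a direct, elementary argument via coordinated bases: extend a basis of $U_1\cap U_2$ inside each $U_i$ (and similarly on the $V$ side), note the resulting family is independent by the dimension formula for $U_1+U_2$, and then read off the intersection in the induced tensor basis. The paper instead passes to orthogonal complements in the ambient spaces $U\otimes V$, using $(A\cap B)^\perp=A^\perp+B^\perp$ together with $(U_1\otimes V_1)^\perp=U_1^\perp\otimes V+U\otimes V_1^\perp$, and shows the duals of both sides agree. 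Your route is more self-contained and avoids any appeal to the (itself nontrivial) formula for the dual of a tensor subspace; the paper's route is shorter once that formula is granted and sidesteps the bookkeeping of coordinated bases. Either is a complete proof.
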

\begin{proof}
	It is enough to show that $(U_1\otimes V_1)\cap (U_2 \otimes V_2)=(U_1 \cap U_2)\otimes (V_1 \cap V_2).$ The rest follows from the fact that for any two spaces $A,B$, $\dim(A+B)=\dim(A)+\dim(B)-\dim(A\cap B)$ and $\dim(A\otimes B)=\dim(A)\cdot \dim(B)$. %
		We will show that the duals of both sides are
                equal. {We use the fact that for any subspaces $A$ and
                $B$, we have that $A^{\perp} + B^{\perp} = (A \cap B)^{\perp}$.}
		\begin{align*}
			\left((U_1\otimes V_1)\cap (U_2 \otimes V_2)\right)^\perp &= (U_1\otimes V_1)^\perp +(U_2 \otimes V_2)^\perp\\
			&=  ( U_1^\perp \otimes V + U \otimes V_1^\perp ) + (U_2^\perp \otimes V + U \otimes V_2^\perp)\\
			&=  (U_1^\perp \otimes V + U_2^\perp \otimes V) + (U \otimes V_1^\perp + U \otimes V_2^\perp)\\
			&= (U_1^\perp + U_2^\perp) \otimes V + U \otimes (V_1^\perp + V_2^\perp)\\
			&= (U_1 \cap U_2)^\perp \otimes V + U \otimes (V_1\cap V_2)^\perp\\
			&= ((U_1\cap U_2)\otimes (V_1\cap V_2))^\perp\qedhere.
		\end{align*}
\end{proof}

\begin{theorem}[Correctability $\Rightarrow$ Regularity~\cite{GHKSWY}]\label{thm:correct-to-regular}
	If an erasure pattern $E$ is correctable by an $(m,n,a,b)$-tensor code, then it is regular.
\end{theorem}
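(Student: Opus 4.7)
The plan is to prove the contrapositive: if $E$ fails regularity, exhibit a nonzero codeword of $C$ supported in $E$, which by Proposition~\ref{prop:correctability} means $E$ is not correctable. So suppose there exist $S \subset [m]$, $T \subset [n]$ with $s = |S| \ge a$, $t = |T| \ge b$, and $|E \cap (S \times T)| > sb + ta - ab$. I will search for the witness codeword inside the subspace of codewords supported on the rectangle $S \times T$.

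The first step is to identify this subspace. Viewing $C = C_\col \otimes C_\row \subset \F^m \otimes \F^n$ and writing $\F^S \subset \F^m$, $\F^T \subset \F^n$ for the coordinate subspaces supported on $S$ and $T$, the codewords of $C$ that vanish outside $S \times T$ are exactly
\[
  C \cap (\F^S \otimes \F^T) = (C_\col \otimes C_\row) \cap (\F^S \otimes \F^T) = L \otimes R,
\]
where $L := C_\col \cap \F^S$ and $R := C_\row \cap \F^T$; the middle equality is an instance of Lemma~\ref{lem:intersection_tensor_product}. Since $C_\col$ has codimension $a$ in $\F^m$ and $\F^S$ has codimension $m-s$, the intersection satisfies $\dim L \ge (m-a) + s - m = s - a$, and symmetrically $\dim R \ge t - b$. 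Therefore
\[
  \dim(L \otimes R) = \dim(L)\cdot\dim(R) \ge (s-a)(t-b).
\]

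The second step is to impose vanishing on $(S \times T) \setminus E$, i.e., intersect with $\F^E$. Each coordinate constraint drops the dimension by at most one, so
\[
  \dim\bigl((L \otimes R) \cap \F^E\bigr) \ge (s-a)(t-b) - \bigl(st - |E \cap (S \times T)|\bigr) = |E \cap (S \times T)| - (sb + ta - ab).
\]
The failure of regularity makes the right-hand side strictly positive, which produces a nonzero codeword of $C$ supported inside $E \cap (S \times T) \subset E$; this contradicts correctability of $E$.

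The only part that requires any care is step one, establishing that $C \cap (\F^S \otimes \F^T)$ really equals $L \otimes R$ rather than merely containing it; but this is precisely what Lemma~\ref{lem:intersection_tensor_product} delivers, so the rest is bookkeeping. The dimension bounds on $L$ and $R$ in step two use only the codimensions of $C_\col$ and $C_\row$, so the argument needs no MDS hypothesis on the row or column code and works for arbitrary $a, b$.
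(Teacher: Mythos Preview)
Your proof is correct. It is essentially the dual of the paper's argument: the paper works on the generator-matrix side, showing that if $E$ is correctable then $\dim((U\otimes V)_{\barE})=(m-a)(n-b)$, splitting $\barE$ into its part inside and outside $S\times T$, and using Lemma~\ref{lem:intersection_tensor_product} to evaluate $\dim((U\otimes V)_{([m]\times[n])\setminus(S\times T)})$. You instead work on the codeword side, using (the intersection identity inside the proof of) Lemma~\ref{lem:intersection_tensor_product} to identify $C\cap(\F^S\otimes\F^T)=(C_\col\cap\F^S)\otimes(C_\row\cap\F^T)$ and then imposing the $(S\times T)\setminus E$ vanishing constraints by hand. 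Both routes hinge on the same tensor-intersection lemma and yield the same inequality; your version has the minor expository advantage of directly producing a witness codeword supported in $E$, while the paper's version makes more transparent the role of the generator matrix rank condition from Proposition~\ref{prop:correctability}.
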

\begin{proof}
By Proposition~\ref{prop:correctability}, $E$ is correctable iff $\dim((U\otimes V)_\barE)=(m-a)(n-b).$ Let $S\subset[m]$ be of size at least $a$ and $T\subset[n]$ be of size at least $b$. We can upper bound $\dim((U\otimes V)_\barE)$ as:
\begin{align*}
	\dim((U\otimes V)_\barE) &\le \dim((U\otimes V)_{\barE \cap (S\times T)}) + \dim((U\otimes V)_{\barE\setminus (S\times T)})\\
	&\le |\barE \cap (S\times T)| + \dim((U\otimes V)_{([m]\times [n])\setminus (S\times T)}).
\end{align*}
We now use Lemma~\ref{lem:intersection_tensor_product} to calculate $\dim((U\otimes V)_{([m]\times [n])\setminus (S\times T)})$.
\begin{align*}
\dim((U\otimes V)_{([m]\times [n])\setminus (S\times T)})&= \dim((U\otimes V)_{\barS\times [n]}+(U\otimes V)_{[m]\times \barT})\\
&=\dim(U_\barS \otimes V_{[n]} + U_{[m]}\otimes V_\barT)\\
&=\dim(U_\barS)\cdot \dim(V_{[n]}) + \dim(U_{[m]})\cdot \dim(V_\barT)-\dim(U_\barS)\cdot \dim(V_\barT)\\
&=(m-s)(n-b)+(m-a)(n-t)-(m-s)(n-t).
\end{align*}
Combining the above, we get $|\barE \cap (S\times T)|\ge (s-a)(t-b).$
\end{proof}

The following theorem shows that every regular pattern is correctable if $a=1$ or $b=1.$ 
\begin{theorem}[\cite{GHKSWY}] \label{Gop:Reg}
	\label{thm:regularity_correctability_a=1}
	{An} erasure pattern $E\subset [m]\times [n]$ is generically correctable for an $(m,n,a=1,b)$-tensor code iff $E$ is regular.
\end{theorem}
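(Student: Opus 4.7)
The plan is to prove the converse direction (regular $\Rightarrow$ generically correctable), since the forward direction is already Theorem~\ref{thm:correct-to-regular}. My first step would be to reduce to the case where $E$ is a \emph{maximal} regular pattern of size $n + (m-1)b$ in which every row contains at least $b$ erasures. This reduction is valid because any sub-pattern of a correctable pattern is correctable, so it suffices to extend $E$ to a maximal regular pattern. The extension is a Hall-type greedy argument: if $E$ is regular but not maximal, some $(i,j) \notin E$ can be added without violating~\eqref{eqn:regularity_barE}; otherwise every candidate addition $(i,j)$ would be blocked by some tight rectangle $S \times T$, and combining these tight rectangles would contradict the original regularity of $E$ via an inclusion-exclusion count.

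Having reduced to the maximal case, I would apply Lemma~\ref{lem:correctability_intersection_a1} to translate correctability of $E$ into the linear-algebraic condition
\[
V_{A_1} \cap V_{A_2} \cap \cdots \cap V_{A_m} = 0,
\]
where $A_i \subset [n]$ is the set of unerased columns in row $i$, with $|A_i| \le n-b$ and $\sum_i |A_i| = (m-1)(n-b)$, and $V$ is a generator matrix of the row code, which by Corollary~\ref{cor:MDS-MR} may be taken to be a generic $\MDS(m)$ code for the purposes of generic correctability. By Lemma~\ref{lemma:ultimate-mds-equiv}, this is in turn equivalent to $W_{A_1} \cap \cdots \cap W_{A_m} = 0$ for a generic $(n-b) \times n$ matrix $W$. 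The remaining task is therefore purely combinatorial: show that regularity of $E$ characterizes exactly those set systems $\{A_i\}$ for which this generic intersection is trivial.

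The main obstacle is this last combinatorial step. My approach would be induction on $m$, exploiting the puncturing/shortening stability of the higher-order MDS property from Proposition~\ref{prop:reduction}. The base case $m = 2$ is the ordinary MDS property. For the inductive step, I would use the matching/flow reformulation of regularity developed later in this section (Theorem~\ref{thm:regular-iff-excess}) to produce a certificate identifying a column (or row) whose deletion leaves a smaller pattern that is again regular and whose intersection condition corresponds, via the inductive hypothesis, to trivial intersection. The dimension accounting that combines the smaller instance with the deleted coordinate would be assembled using Lemma~\ref{lem:intersection_tensor_product}. The most delicate part of this program is verifying that the matching certificate produced by excess-compatibility actually corresponds to the generic non-vanishing of the polynomial cutting out the intersection condition, rather than to a strictly weaker structural property; this is where the genericity of $W$ (as opposed to mere MDS-ness) is essential, since regular patterns can genuinely fail to be correctable for specific MDS row codes.
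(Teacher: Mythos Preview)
The paper does not prove this theorem; it is quoted from \cite{GHKSWY} without proof, so there is no in-paper argument to compare your proposal against. Note also that the $\MDS(m)$ language and the excess-compatibility reformulation are innovations of the present paper, so the proof in \cite{GHKSWY} necessarily proceeds by different (more direct combinatorial/algebraic) means.

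That said, your outline has a genuine gap. Your invocation of Corollary~\ref{cor:MDS-MR} is unnecessary and slightly circular-looking: for \emph{generic} correctability one simply takes $V$ to be a generic matrix, which is $\MDS(m)$ by definition, so Lemma~\ref{lem:correctability_intersection_a1} already reduces the question to showing that regularity of $E$ implies $W_{A_1}\cap\cdots\cap W_{A_m}=0$ for generic $W$. But that implication \emph{is} the entire content of the theorem, and your inductive sketch does not establish it. You propose to use the excess-compatibility flow to locate a row or column whose deletion yields a smaller regular instance, yet you never say which coordinate to delete, why the residual pattern is again regular of the correct shape, or how the flow certificate connects to the non-vanishing of the block determinant in Claim~\ref{claim:matrix-magic}. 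An excess flow certifies that certain capacity constraints are satisfiable; it does not by itself certify that a particular polynomial is generically nonzero. You explicitly flag this as ``the most delicate part,'' and indeed it is the whole argument --- everything else is packaging. The reduction to maximal patterns in which every row carries at least $b$ erasures also needs a real proof; your ``combine tight rectangles via inclusion-exclusion'' gesture is not one.
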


\begin{remark}
	\cite{GHKSWY} conjecture that regularity is equivalent to correctability for arbitrary $a,b$. This conjecture was recently disproved by \cite{holzbaur2021correctable}. For their counterexample, they consider $m = n = 5, a = b = 2$ and let the complement of the erasure pattern be
\[\barE  = \{(1, 1), (2, 2), (2, 3), (3, 2), (3, 3), (4, 4), (4, 5), (5, 4), (5, 5)\}.\]
$E$ is a regular pattern which is not correctable. Our analysis gives
a short proof of noncorrectability.  By
Proposition~\ref{prop:correctability}, $E$ is correctable iff
$\dim((U\otimes V)_\barE)=(m-a)(n-b)=9.$ {Because
  $\dim(U_{\set{2,3}} \cap U_{\set{4,5}}) \ge \dim(U_{\set{2,3}}) +
  \dim(U_{\set{4,5}}) - 3 = 1$ and likewise
  $\dim(V_{\set{2,3}} \cap V_{\set{4,5}}) \ge 1$, by
    Lemma~\ref{lem:intersection_tensor_product}, we have that}
\[{\dim(U_{\set{2,3}}\otimes
V_{\set{2,3}} + U_{\set{4,5}} \otimes V_{\set{4,5}}) \le 4 + 4 - 1 =
7,}\]
{Thus, $\dim((U\otimes V)_{\barE}) \le 8$, proving that $E$ is regular
but not correctable.}
\end{remark}

\subsection{Efficiently Checking Regularity}

Let $E \subset [m]\times [n]$ be any pattern for a $(m,n,a,b)$ tensor code. For all $i \in [m]$, let $\deg_E(i)$ be the number of $j \in [n]$ such that $(i,j) \in E$. Likewise, for all $j \in [n]$, define $\deg_E(j)$ to be the number of $i \in [m]$ such that $(i, j) \in E$. Let $f : [n] \to \mathbb Z_{\ge 0}$ be \emph{supply constraints} and $g : [m] \to \mathbb Z_{\ge 0}$ be \emph{demand constraints}. We define a $(g,f)_E$-\emph{quasi-matching} (c.f., Definition 2 of \cite{bokal2012generalization}) to be a subset of the edges $E' \subset E$ such that all supply/demand constraints are met:
\begin{enumerate}
\item For all $i \in [m]$, $\deg_{E'}(i) \ge g(i)$.
\item For all $j \in [n]$, $\deg_{E'}(j) \le f(j)$. 
\end{enumerate}
We shall use the following Hall-like condition for testing if a quasi-matching exists:

\begin{lemma}[Theorem 20 of \cite{bokal2012generalization}]\label{lem:hall}
  There does not exist a $(g,f)_E$-quasi-matching if and only if there exists a \emph{Hall-blocker}, that is a $U \subseteq [m]$ with
  \[
    \sum_{i \in U} g(i) > \sum_{j \in [n]} \min(f(j), \deg_{E \cap (U \times [n])}(j))
  \]
\end{lemma}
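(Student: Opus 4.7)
The plan is to prove this Hall-like characterization via the max-flow/min-cut theorem on a suitably constructed bipartite flow network.

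First, I would set up the flow network $N$ with a source $s$ and sink $t$: an edge $s \to i$ of capacity $g(i)$ for each $i \in [m]$, an edge $j \to t$ of capacity $f(j)$ for each $j \in [n]$, and an edge $i \to j$ of capacity $1$ for each $(i,j) \in E$. Because all capacities are integral, the max-flow min-cut theorem yields an integral optimal flow, and $0/1$ capacity on the middle edges means such a flow corresponds to choosing $E' \subseteq E$. The degree constraints then translate precisely into the $(g,f)_E$-quasi-matching conditions: supply $g(i) \le \deg_{E'}(i)$ is captured by saturation of the source edge out of $i$, while demand $\deg_{E'}(j) \le f(j)$ is enforced by the capacity of the edge $j \to t$. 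Hence a $(g,f)_E$-quasi-matching exists if and only if the max $s$-$t$ flow equals $\sum_{i \in [m]} g(i)$.

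Next, I would compute cut capacities. Any $s$-$t$ cut is determined by specifying a set $U \subseteq [m]$ and $V' \subseteq [n]$ that stay on the source side. A direct computation shows the cut capacity is
\[
  \sum_{i \notin U} g(i) \;+\; |E \cap (U \times V')| \;+\; \sum_{j \notin V'} f(j).
\]
By max-flow/min-cut, no quasi-matching exists iff some cut has capacity strictly less than $\sum_{i \in [m]} g(i)$, which after cancellation becomes: there exist $U \subseteq [m]$, $V' \subseteq [n]$ with
\[
  \sum_{i \in U} g(i) \;>\; |E \cap (U \times V')| \;+\; \sum_{j \notin V'} f(j).
\]

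Finally, I would fix $U$ and optimize over $V'$ to recover the stated Hall-blocker. Since each $j \in [n]$ contributes either $|E \cap (U \times \{j\})| = \deg_{E \cap (U\times [n])}(j)$ (if $j \in V'$) or $f(j)$ (if $j \notin V'$), the minimum over $V'$ of the right-hand side is exactly $\sum_{j \in [n]} \min\bigl(f(j), \deg_{E \cap (U \times [n])}(j)\bigr)$. Substituting gives the Hall-blocker inequality in the statement, completing the equivalence.

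There is no real obstacle here: once the flow network is written down correctly, each step is mechanical. The only place one must be careful is in matching up the combinatorics of the quasi-matching with the flow network (in particular verifying that integrality allows recovery of a subset $E'$ from a max flow) and in the per-coordinate optimization over $V'$, which is what transforms the two-set minimization into the one-set Hall-blocker criterion.
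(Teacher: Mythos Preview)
The paper does not prove this lemma; it is quoted as Theorem~20 of \cite{bokal2012generalization} and used as a black box. Your max-flow/min-cut argument is the standard proof and is correct.

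One small slip worth fixing: with your stated convention that $U$ and $V'$ are the vertices on the \emph{source} side of the cut, the middle-edge contribution should be $|E\cap(U\times([n]\setminus V'))|$ and the sink-edge contribution should be $\sum_{j\in V'}f(j)$; your displayed formula has $V'$ and $[n]\setminus V'$ interchanged. Since you then minimize over all subsets $V'\subseteq[n]$, this has no effect on the conclusion, and the per-coordinate optimization yielding $\sum_{j\in[n]}\min\bigl(f(j),\deg_{E\cap(U\times[n])}(j)\bigr)$ is correct either way.
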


\begin{definition}
  Fix $V \subset [n]$ and let $E_V := E \cap ([m] \times V)$. For all $i \in [m]$ define the \emph{excess} $e(i) = \max (\deg_{E}(i) - b, 0)$. For all $j \in [n]$ define $a(j) := a$. Define a $V$-\emph{excess flow} $E' \subseteq E$ to be an $(e, a)_{E_V}$-quasi-matching.

  We say that a $E$ is \emph{excess-compatible} if for all $V \subset [n]$ of size $\bar{b} = n - b$, there exists a $V$-excess flow $E'_V \subset E_V$.
\end{definition}

{See also Construction II.3 in \cite{Shivakrishna_MRproduct} for a similar notion in the literature.}

\begin{theorem}\label{thm:regular-iff-excess}
$E$ is regular if and only if $E$ is excess-compatible.
\end{theorem}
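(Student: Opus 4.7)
The plan is to prove both directions via Lemma~\ref{lem:hall}, which recasts excess-compatibility as the statement that for every $V \subseteq [n]$ with $|V|= n-b$ and every $U \subseteq [m]$,
\[
    \sum_{i \in U} e(i) \;\le\; \sum_{j \in V} \min\bigl(a,\; \deg_{E \cap (U \times V)}(j)\bigr). \tag{$\star$}
\]

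For the forward direction (regularity $\Rightarrow$ excess-compatibility), I fix $V$ and suppose for contradiction some $U$ violates $(\star)$. The first step is to prune $U$ down to rows with $\deg_E(i) > b$: discarding a row with $\deg_E(i) \le b$ leaves the LHS unchanged (since $e(i)=0$) and can only decrease the RHS (column degrees shrink), so the blocker property survives. After this reduction the LHS equals $|E \cap (U \times [n])| - |U|b$. Partition $V = V_1 \sqcup V_2$ with $V_1 := \{j \in V : \deg_{E\cap(U\times V)}(j) > a\}$, so the RHS equals $a|V_1| + |E \cap (U \times V_2)|$. Rearranging, the blocker inequality becomes $|E \cap (U \times T)| > |U|b + a|V_1|$, where $T := \overline{V}\cup V_1$ has size $b + |V_1| \ge b$. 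When $|U| \ge a$ this directly contradicts the regularity bound for $(U,T)$, whose right-hand side is exactly $|U|b + a|V_1|$. The remaining case $|U| < a$ forces $V_1 = \emptyset$ (since every column degree is at most $|U|$), collapsing the inequality to the absurd $|E \cap (U \times \overline{V})| > |U|b$ with $|\overline{V}|=b$.

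For the reverse direction, I argue contrapositively. Suppose $(S,T)$ violates regularity with $s:=|S| \ge a$, $t:=|T| \ge b$. First rule out $s=a$: that would demand $|E \cap (S \times T)| > at = st$, which is impossible, so $s > a$. Writing $d_j := \deg_{E \cap (S \times \{j\})}$, the violation reads $\sum_{j \in T}(d_j - a) > (s-a)b$; since each positive summand is at most $s - a$, a pigeonhole argument shows at least $b+1$ indices $j \in T$ satisfy $d_j > a$. Pick any $b$ of these to form $\overline{V}$, let $V := [n]\setminus \overline{V}$, and set $U := S$. Then every $j \in \overline{V}$ contributes $\min(a,d_j)=a$, so $\sum_{j \in \overline{V}}\min(a,d_j) = ab$. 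Using the identities $\sum_{i \in S} e(i) \ge |E\cap(S\times[n])| - sb$ and $\sum_{j \in [n]}\min(a,d_j) = |E\cap(S\times[n])| - \sum_j (d_j - a)^+$, the required strict inequality in $(\star)$ reduces after cancellation to $\sum_{j \in [n]}(d_j - a)^+ > (s-a)b$, which holds because $\sum_{j \in T}(d_j-a)^+ \ge \sum_{j \in T}(d_j-a) > (s-a)b$.

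The main obstacle is identifying the right auxiliary sets on each side of the equivalence: $T = \overline{V} \cup V_1$ for the forward direction, and $\overline{V}$ comprising $b$ ``heavy'' columns of $T$ for the reverse direction. With the correct ansatz the remaining manipulations are mechanical, but without this pairing between the regularity rectangle $(S,T)$ and the excess-compatibility data $(V,U)$ the two inequalities do not visibly line up term by term.
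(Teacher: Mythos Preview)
Your proof is correct. The forward direction (regularity $\Rightarrow$ excess-compatibility) is essentially the paper's argument: both prune the Hall-blocker $U$ to rows with positive excess, split $V$ into heavy and light columns, and apply the regularity bound to the rectangle $U \times (\overline{V} \cup \{\text{heavy columns}\})$. Your treatment of the residual case $|U| < a$ via the trivial bound $|E \cap (U \times \overline{V})| \le |U|b$ is slightly more explicit than the paper, but the logic is the same.

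The reverse direction is where you genuinely diverge. The paper proves excess-compatibility $\Rightarrow$ regularity directly by induction on $|U|$: it reduces to rows of degree $\ge b+1$, argues that the worst-case $S$ is either the set of all columns with $\deg_{E \cap (U \times [n])}(j) > a$ (if there are at least $b$ of them) or the top-$b$ columns by degree, takes $V$ to be the $n-b$ lowest-degree columns, and then applies the contrapositive of Lemma~\ref{lem:hall} to bound $\sum_i e(i)$ from above, followed by a two-case computation. You instead run the contrapositive: from a regularity violation at $(S,T)$ you extract via pigeonhole at least $b+1$ columns $j \in T$ with $d_j > a$, take $b$ of them as $\overline{V}$, and directly verify the Hall-blocker inequality for $U = S$ using the identities $\sum_i e(i) \ge |E \cap (S \times [n])| - sb$ and $\min(a,d_j) = d_j - (d_j - a)^+$. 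Your route avoids both the induction and the case split on the structure of the worst-case $S$, at the cost of the small pigeonhole lemma; it is shorter and makes the correspondence between the regularity witness $(S,T)$ and the excess-compatibility witness $(U,V)$ more transparent.
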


\begin{proof} We proceed by showing both directions.

\paragraph{Regularity implies excess-compatibility.} First, assume $E$
is regular but not excess-compatible. Thus, there exists $V \subset
[n]$ of size $\bar{b}$ such that $E_V := E \cap ([m] \times V)$ lacks
a $V$-excess flow. {For any $U \subseteq [m]$, let $N(U)$ be
  the set of $j \in [n]$ for which there is $i \in U$ for which $(i,j)
\in E$.}
By Lemma~\ref{lem:hall}, there exists a Hall-blocker $U_1 \subset [m]$
and neighborhood $V_1 := N(U_1)\cap V$ for which demand exceeds supply. That is, if we let $E_1 = E \cap (U_1 \times V_1)$ then
\begin{align}
    \sum_{i \in U_1} e(i) > \sum_{j \in V_1} \min(a, \deg_{E_1}(j)).\label{eq:block}
\end{align}
We may assume without loss of generality that $e(i) \ge 1$ for all $i
\in U_1$--deleting any exceptions would keep the LHS the same and
perhaps decrease the RHS. In particular, $e(i) + b = \deg_{E}(i)$ for
all $i \in U_1$. {We seek to show a contradiction by proving
  that $n - |V| \ge b$.}

Let $V_2 \subset V_1$ be the vertices $j \in V_1$ for which
$\deg_{E_1}(j) < a$. Let $V_3 = V_1 \cup ([n] \setminus
V)$. {Note that $V_3 \setminus V_1 = [n] \setminus V.$} Thus, since $E$ is regular,
\begin{align*}
  |U_1| b + |V_3 \setminus V_2| a - ab &\ge |E \cap (U_1 \times (V_3 \setminus V_2))|\\
                                       &= |E \cap (U_1 \times V_3)| - |E \cap (U_1 \times V_2)|\\
                                       &= \sum_{i \in U_1} \deg_{E} (i) - \sum_{j \in V_2} \deg_{E \cap (U_1 \times V_2)} (j)\tag{no edges from $U_1$ to $V \setminus V_1 = [n] \setminus V_3$}\\%
                                       &= \sum_{i \in U_1} \deg_{E} (i) - \sum_{j \in V_2} \deg_{E_1} (j) \tag{extra vertices on right side do not change right-degrees}\\
                                       &= \sum_{i \in U_1} (b+e(i)) - \sum_{j \in V_2} \deg_{E_1} (j) \\
                                       &> |U_1| b + \sum_{j \in V_1} \min(a, \deg_{E_1}(j)) - \sum_{j \in V_2} \deg_{E_1} (j) \tag{by (\ref{eq:block})}\\
                                       &= |U_1| b + |V_1 \setminus V_2| a\tag{by definition of $V_2$}
\end{align*}
 	Therefore {since $V_2 \subset V_1$, $a(|V_3 \setminus
        V_1| - b) = a(n - |V| - b) > 0$}, which contradicts that $n - |V| = b${.}

\paragraph{Excess-compatibility implies regularity.} Second, assume $E$ is excess-compatible. We show by induction on the size of $U \subseteq [m]$ that for all $S \subseteq [n]$, with $|S| \ge b$, we have {that}
\[|E \cap (U \times S)| \le |U|b + |S|a - ab.\]
The base case of $U = \emptyset$ is trivial as the LHS equals $0$ for all $S$. For nontrivial $U$ we may assume for all $i \in U$ that $\deg_{{E}}(i) \ge b+1$. Otherwise, let $U' \subset U$ be the set of all $i$ with $\deg_{{E}}(i) \ge b+1$ and note that for all $S \subseteq [n]$ of size at least $b$, we have by the induction hypothesis that 
\begin{align*}
  |E \cap (U \times S){|} &\le |E \cap (U' \times S)| + b|U \setminus U'|\\
                       &\le |U'|b + |S|a - ab + b|U \setminus U'|\\
                       &\le |U|b + |S|a - ab,
\end{align*}
as desired.

Thus, we may now assume that $\deg_{{E}}(i) \ge b+1$ for all
$i \in U$.  {Observe that the condition $|E \cap (U \times S)| \le |U|b
+ |S|a - ab.$  is equivalent to.}

\[
  {\sum_{j \in S} (\deg_{E \cap (U \times [n])}(j) -a) \le |U|b - ab.}
\]

{Note that the RHS is independent of $S$, and each term on the
  LHS is an independent contribution for each $j \in S$. Therefore,
  the worst-case choice of $S$ is one of the following:}
{
\begin{enumerate}
\item $S$ is the set of all $j \in [n]$ with $\deg_{E \cap (U \times
    [n])}(j) > a$ \emph{if} this set has size at least $b$.
\item Otherwise, $S$ is the set of $b$ vertices $j \in [n]$ which
  are the largest with respect to $\deg_{E \cap (U \times [n])}(j)$.
\end{enumerate}
}

{In either case, to} apply the excess compatibility condition,
let $V \subset [n]$ be the $n-b$ vertices with \emph{lowest} degree
with respect to $E \cap (U \times [n])$. {Let $T$ be the set of
  vertices $j \in [n]$ such that
  $\deg_{{E \cap (U \times [n]})}(j) > a$. Observe that in case
  (1), we have that $S = T$ and in case (2), we have that
  $S = [n] \setminus V.$}

By the contrapositive of Lemma~\ref{lem:hall}, we have that
\begin{align*}
  \sum_{i \in U} e(i) &\le \sum_{j \in V} \min(a, \deg_{E \cap (U
                        \times V)}(j))\\
&{= \sum_{j \in V} \min(a, \deg_{E \cap (U
                        \times [n])}(j))}\\
&{= \sum_{j \in V \setminus T} \deg_{{E \cap (U \times
                                              [n]})}(j) + a |V
                                              \cap T|.}\\
&{= |E \cap (U \times (V \setminus T))| + a |V \cap T|.}
\end{align*}

Also note that
\begin{align*}
  \sum_{i \in U} e(i) &= \sum_{i \in U} (\deg_{{E}}(i) - b)\\
                      &= |E \cap (U \times [n])| - b|U|{.}
\end{align*}

 Combining the two equation blocks, we have that 
{
\begin{align*}
|E \cap (U \times S)| &= |E \cap (U \times [n])| - b|U| - |E \cap (U
                        \times ([n] \setminus S))| + b|U|\\
                      &= \sum_{i \in U} e(i) - |E \cap (U
                        \times ([n] \setminus S))| + b|U|\\
                     &\le |E \cap (U \times (V \setminus T))| + a|V
                       \cap T| - |E \cap (U \times ([n] \setminus S)| + b|U|.
\end{align*}}


{We split the remaining analysis into the two cases.
\begin{enumerate}
\item In this case, $S = T$. Since $|S| \ge b$ and consists of the
  largest degrees, we have that $|V
  \setminus T| = [n] \setminus S$ and $|S \setminus V| = b$. Therefore,
\begin{align*}
|E \cap (U \times S)| &\le |E \cap (U \times (V \setminus T))| + a|V
                       \cap T| - |E \cap (U \times (V \setminus
                       T))| + b|U|\\
                     &= a|V \cap S| + b|U|\\
                     &= a|S| + b|U| - a|S \setminus V|\\
                     &= a|S| + b|U| - ab.
\end{align*}

\item In this case, $S = [n] \setminus V$ and $|S| = b$. Therefore,
\begin{align*}
  |E \cap (U \times S)| &\le |E \cap (U \times (V \setminus T))| + a|V
                       \cap T| - |E \cap (U \times ([n] \setminus S)|
                          + b|U|\\
  &= a|V \cap T| - |E \cap (U \times (V \cup T))| + b|U|\\
  &= \sum_{j \in V \cap T} (a - \deg_{E \cap (U \times
       [n])} (j)) + b|U|\\
     &\le b|U|\\
     &= b|U| + ab - ab\\
     &= b|U| + a|S| - ab,
\end{align*}
where the fourth line follows from the definition of $T$. \qedhere
\end{enumerate}

}




%
%
%
%
%
%
%
%
%
%
\end{proof}

By {the regularity theorem of \cite{GHKSWY} (Theorem~\ref{Gop:Reg})}, regularity is equivalent to correctability of erasure patterns when $a=1$. Since Theorem~\ref{thm:regular-iff-excess} shows that regularity is equivalent to excess-compatibility, we have the equivalence between all three notions. %

\begin{corollary}\label{thm:excess-compat}
If $a=1$, an erasure pattern $E$ is excess-compatible iff $E$ is generically correctable.
\end{corollary}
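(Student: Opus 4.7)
The plan is to obtain this corollary by simply chaining the two equivalences that have already been established in the paper. By Theorem~\ref{thm:regularity_correctability_a=1} (the regularity characterization of~\cite{GHKSWY}), when $a=1$, an erasure pattern $E \subset [m] \times [n]$ is generically correctable for the $(m,n,a=1,b)$-MR tensor code if and only if $E$ is regular. By Theorem~\ref{thm:regular-iff-excess}, which is proved in full generality (for all $a,b$), $E$ is regular if and only if $E$ is excess-compatible. Transitivity of ``iff'' immediately yields the corollary.

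Because both ingredients are already in hand, there is no real obstacle—the only thing to double-check is that the hypotheses line up. Theorem~\ref{thm:regularity_correctability_a=1} explicitly requires $a=1$, which is the hypothesis of the corollary. Theorem~\ref{thm:regular-iff-excess} is stated for arbitrary $a,b$, so in particular it applies when $a=1$. The definitions of regularity and excess-compatibility depend on $E$, $m$, $n$, $a$, $b$ in a uniform way (the excess function $e(i) = \max(\deg_E(i) - b, 0)$ and the capacity function $a(j) = a = 1$), so no specialization or adjustment is needed.

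Hence the proof is a single sentence: $E$ generically correctable $\iff$ $E$ regular (by Theorem~\ref{thm:regularity_correctability_a=1}) $\iff$ $E$ excess-compatible (by Theorem~\ref{thm:regular-iff-excess}). As a consequence, Theorem~\ref{thm:fast} also follows, since checking excess-compatibility reduces to solving $\binom{n}{\bar b}$ quasi-matching feasibility problems (each a polynomial-sized max-flow), and in fact can be done more cleverly in time $m(m+n)^3$ by a single aggregated flow computation.
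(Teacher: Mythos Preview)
Your proof is correct and identical to the paper's: both simply chain Theorem~\ref{thm:regularity_correctability_a=1} (regularity $\iff$ generic correctability when $a=1$) with Theorem~\ref{thm:regular-iff-excess} (regularity $\iff$ excess-compatibility). Your closing aside about Theorem~\ref{thm:fast} is slightly off, though tangential to the corollary---the $m(m+n)^3$ bound arises from $\binom{m}{a}=m$ separate max-flow instances when $a=1$, not from a single aggregated flow.
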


We will now prove that excess-compatibility of an erasure pattern can be reduced to a max flow problem which can be solved in polynomial time. This implies that correctability of an erasure pattern by an $(m,n,a=1,b)$-MR tensor code can be checked in polynomial time. This proves Theorem~\ref{thm:fast}.

\begin{figure}

\begin{center}
\includegraphics[width=4in]{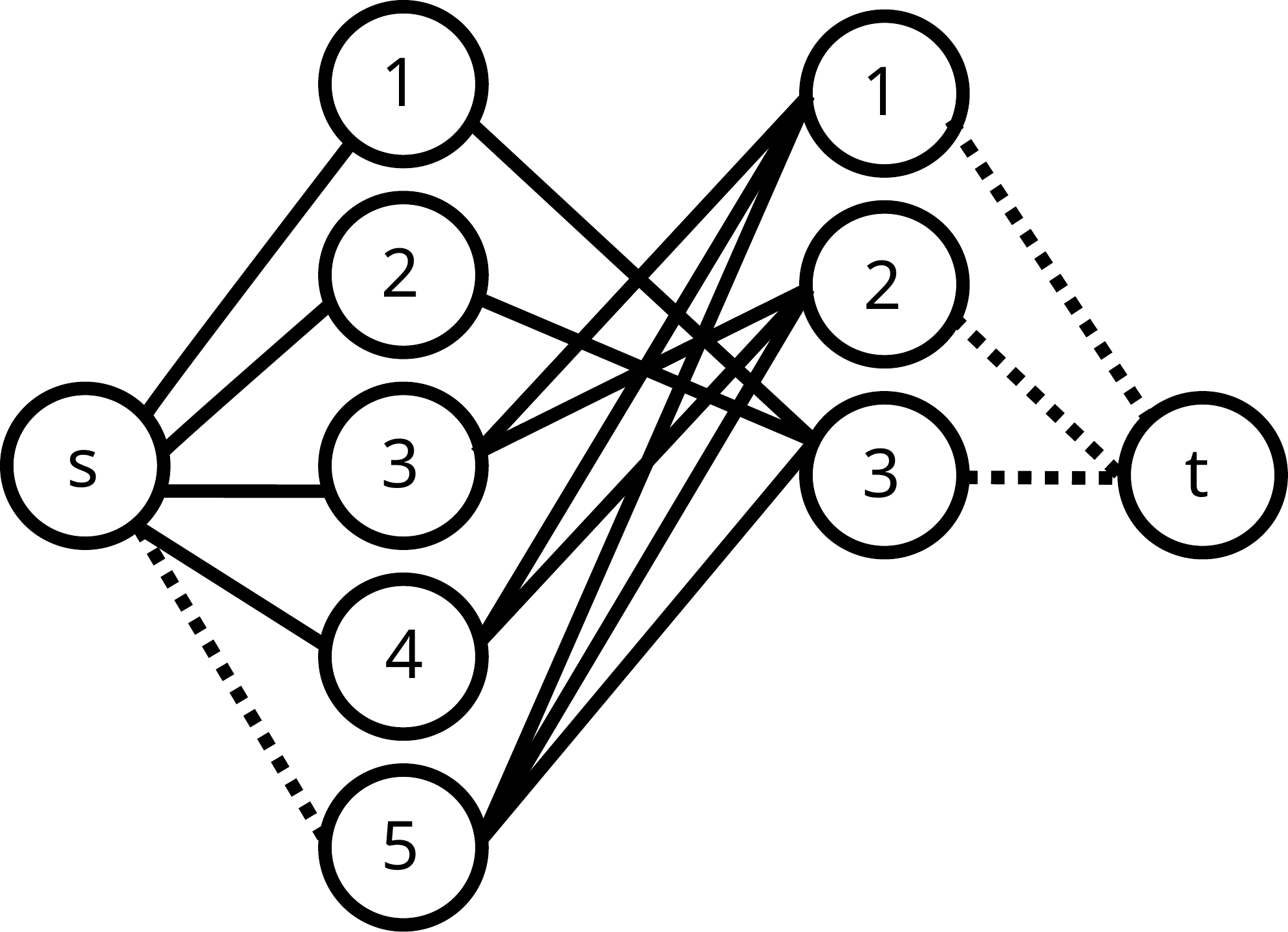}
\end{center}

\caption{{An example of the flow graph for $(m=5,n=5,a=2,b=2)$,
  $E =
  \{(1,3),(1,4),(1,5),(2,3),(2,4),(2,5),(3,1),(3,2),(3,5),(4,1),(4,2),(4,5),(5,1),(5,2),(5,3),(5,4)\}$
with $V = \{1,2,3\}$. Each solid edge has capacity $1$ and each dotted
edge has capacity $2$. In this case, a full-capacity flow exists.}}

\label{fig:flow}

\end{figure}

\begin{proposition}
Excess-compatibility of an erasure pattern is testable in time $$O(\min(\tbinom{m}{a},\tbinom{n}{b}) \cdot (m+n)^{3}).$$
\end{proposition}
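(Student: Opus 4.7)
The plan is to reduce the excess-compatibility test to at most $\min\bigl(\binom{m}{a}, \binom{n}{b}\bigr)$ bipartite max-flow computations, each solvable in $O((m+n)^3)$ time.

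First, for a fixed $V \subset [n]$ of size $n-b$, I would encode the existence of a $V$-excess flow as a standard $s$-$t$ max-flow problem. Construct a directed network with a source $s$, a sink $t$, ``left'' vertices $[m]$, and ``right'' vertices $V$, and with arcs: $s \to i$ of capacity $e(i) = \max(\deg_E(i)-b,\,0)$ for each $i \in [m]$; a unit-capacity arc $i \to j$ for each $(i,j) \in E_V$; and $j \to t$ of capacity $a$ for each $j \in V$. The standard flow-to-subset argument shows that an integer $s$-$t$ flow of value $\sum_{i \in [m]} e(i)$ exists iff a $V$-excess flow $E' \subseteq E_V$ exists: starting from an integer flow saturating every source arc, take $E'$ to be the set of middle arcs carrying a unit of flow, then $\deg_{E'}(i) = e(i)$ for each $i$ by saturation and $\deg_{E'}(j) \le a$ for each $j \in V$ by flow conservation and sink capacity; conversely, given any quasi-matching $E'$, thin it so that each $i$ has exactly $e(i)$ incident edges and route one unit along each. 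Since the network has $O(m+n)$ vertices and $O(mn)$ arcs, push-relabel (or any standard max-flow algorithm) solves it in $O((m+n)^3)$ time.

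Next, iterating this check over all $\binom{n}{b}$ subsets $V \subset [n]$ of size $n-b$ tests excess-compatibility in $O\bigl(\binom{n}{b}(m+n)^3\bigr)$ time. To pick up the $\binom{m}{a}$ factor in the $\min$, I would invoke the transpose symmetry of regularity: the inequality $|E \cap (S \times T)| \le sb + ta - ab$ characterizing regularity of $E$ for parameters $(m,n,a,b)$ is unchanged upon transposing $E$ and swapping $(m,n,a,b) \leftrightarrow (n,m,b,a)$. Applying Theorem~\ref{thm:regular-iff-excess} to the transposed pattern gives an equivalent excess-compatibility condition that enumerates subsets $U \subset [m]$ of size $m-a$, testable by the dual flow network in $O\bigl(\binom{m}{a}(m+n)^3\bigr)$ time. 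Running whichever variant has the smaller enumeration gives the claimed bound.

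The one step deserving genuine care is the max-flow encoding---specifically, verifying that the lower-bound constraint ``$\deg_{E'}(i) \ge e(i)$'' in the quasi-matching definition is captured exactly by requiring all $s \to i$ arcs to be saturated (so that the saturation value, not just feasibility, of max-flow is what is checked), and that integrality of max-flow on integer-capacity networks yields a $\{0,1\}$-valued flow on the middle bipartite layer from which a valid $E' \subseteq E_V$ can be read off directly. Once this reduction is set up cleanly, the runtime bound and the symmetry argument are both routine.
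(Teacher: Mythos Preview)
Your proposal is correct and matches the paper's proof essentially line for line: the same $s$-$t$ flow network (source arcs of capacity $e(i)$, unit middle arcs on $E_V$, sink arcs of capacity $a$), the same saturation criterion, the same $O((m+n)^3)$ per-instance max-flow bound, and the same enumeration over $\binom{n}{b}$ choices of $V$. Your justification of the $\binom{m}{a}$ branch via the manifest transpose symmetry of regularity combined with Theorem~\ref{thm:regular-iff-excess} is actually slightly more careful than the paper, which just asserts ``swap $m$ and $n$ (and $a$ and $b$)'' without noting that excess-compatibility as defined is not obviously symmetric.
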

\begin{proof}
We will prove that excess compatibility can be checked in the given time using a max flow algorithm.

Let $E \subset [m]\times [n]$ be the pattern we wish to test
excess-compatibility. Fix a subset $V \subset [n]$ of size
$\bar{b}=n-b$. Construct a directed graph on $m+\bar{b}+2$ nodes: a
source $s$, a sink $t$, $c_i$ for $i \in [m]$ and $d_j$ for $j \in
V$. For each $i \in [m]$, have a (directed) edge from $s$ to $c_i$
with capacity $e(i)$. For all $(i, j) \in E \cap ([m] \times V)$, have
an edge from $c_i$ to $d_j$ with capacity $1$. For all $j \in V$, have
an edge from $d_j$ to $t$ with capacity $a$. {See
  Figure~\ref{fig:flow} for an example.} Then, by definition of
$V$-excess flow, a $V$-excess flow $E' \subset E\cap ([m] \times V)$
exists if and only if the maximum flow in our constructed directed
graph saturates every edge from $s$ to $c_i$. This maximum flow can be
computed in $O(m+n)^3$ time {(e.g., the
  \textsf{Relabel-To-Front} algorithm in \cite{CLRS} runs in
  $O((\text{number of vertices})^3)$ time)}. Since we need to perform this check for all $V \subset [n]$ of size $\bar{b}$, the total running time is $O(\tbinom{n}{b} (m+n)^3).$ Since we could have also performed this test by swapping $m$ and $n$ (and $a$ and $b$), we can also test in $O(\binom{m}{a} (m+n)^3)$ time, as desired.
\end{proof}

\bibliographystyle{alpha}
\bibliography{references}

\appendix

\section{MR Tensor Code: Upper bound on field size}\label{app:upper-bound}

\begin{theorem}
There exists a $(m,n,a,b)$-MR Tensor Code with field size $O_{a,b,m}(n^{b(m-a)})$. 
\end{theorem}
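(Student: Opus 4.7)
The plan is to extend the probabilistic construction of~\cite{kong2021new} (which handles $a=1$) to general $a$ via a Schwartz--Zippel union bound. Fix the column code $C_\col$ by taking its generator matrix $U \in \F_{q_0}^{\bar{a} \times m}$ to be an explicit $(m, \bar{a})$-MDS matrix (e.g., Reed--Solomon) over a field of size $q_0 = O(m)$. Parameterize the row code by taking $V \in \F_q^{\bar{b} \times n}$ to be a generalized Reed--Solomon generator matrix with evaluation points $\alpha_1,\ldots,\alpha_n$ and nonzero weights $\beta_1,\ldots,\beta_n$ drawn uniformly at random from $\F_q$; this ensures $C_\row$ is automatically MDS whenever the $\alpha_j$ are distinct.

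By Proposition~\ref{prop:correctability}, the tensor code corrects a maximal pattern $E$ iff $(U \otimes V)_{\barE}$ has full column rank $\bar{a}\bar{b}$. Set $B_j = \{i : (i,j) \in \barE\}$. The key structural step is a \emph{column reduction}: by the $(a+1)$-minimum distance of $C_\col$, any codeword supported on $E$ must vanish on every column $j$ with $|B_j| \ge \bar{a}$, so only the ``exceptional'' columns $J_{\mathrm{exc}} := \{j : |B_j| < \bar{a}\}$ can carry nonzero entries. Using the maximality-forced bound $|B_j| \le \bar{a}$, the identity
\[
\sum_{j \in J_{\mathrm{exc}}}(\bar{a} - |B_j|) = \bar{a} n - |\barE| = (m-a)b
\]
yields $|J_{\mathrm{exc}}| \le (m-a)b$, independent of $n$. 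A symmetric argument via the $(b+1)$-minimum distance of $C_\row$ produces a constant-size set $I_{\mathrm{exc}}$ of exceptional rows, reducing correctability of $E$ to a correctability condition on the \emph{core} $E' := E \cap (I_{\mathrm{exc}} \times J_{\mathrm{exc}})$ in the sub-tensor code $\mathrm{Sh}(C_\col, I_{\mathrm{exc}}) \otimes \mathrm{Sh}(C_\row, J_{\mathrm{exc}})$, whose dimensions are $O_{a,b,m}(1)$.

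The number of distinct triples $(I_{\mathrm{exc}}, J_{\mathrm{exc}}, E')$ is at most $\binom{m}{|I_{\mathrm{exc}}|}\binom{n}{|J_{\mathrm{exc}}|} \cdot 2^{|I_{\mathrm{exc}}|\,|J_{\mathrm{exc}}|} = O_{a,b,m}(n^{b(m-a)})$, dominated by $\binom{n}{(m-a)b}$, and for each core the correctability condition is a polynomial $P_{E'}$ that---after exploiting the Reed--Solomon structure of the shortened row code (which itself is a generalized Reed--Solomon code on $J_{\mathrm{exc}}$ with rescaled weights)---has degree $O_{a,b,m}(1)$ in the local parameters. Multiplying all $P_{E'}$ together with the MDS conditions $\prod_{i<j}(\alpha_i - \alpha_j)\prod_j \beta_j \neq 0$ gives a polynomial of total degree $O_{a,b,m}(n^{b(m-a)})$, and Schwartz--Zippel produces parameters realizing an MR tensor code whenever $q = \Omega_{a,b,m}(n^{b(m-a)})$. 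The main obstacle is the degree accounting: a naive substitution unwinding the shortening $\gamma_j = \beta_j \prod_{j' \notin J_{\mathrm{exc}}}(\alpha_j - \alpha_{j'})$ introduces extra $n$-degree factors, so one must either refine the parameterization or group core polynomials sharing common factors so that the exponent comes out to exactly $b(m-a)$ rather than $b(m-a)+1$.
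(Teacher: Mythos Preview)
Your outline matches the paper's strategy---reduce to constant-size core patterns, count them as $O_{a,b,m}(n^{b(m-a)})$, and union-bound via Schwartz--Zippel---but two steps need fixing.

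The bound $|B_j|\le\bar a$ is not ``maximality-forced'': a maximal \emph{non}-correctable pattern can have $|B_{j_0}|>\bar a$, and then your displayed identity fails. What actually pins it down is \emph{regularity} (Theorem~\ref{thm:correct-to-regular}): applying the regularity inequality with $S=[m]$ and $T=[n]\setminus\{j_0\}$ to a maximal pattern gives exactly $|B_{j_0}|\le\bar a$. Since you only care about correctable (hence regular) maximal patterns, your identity and the bound $|J_{\mathrm{exc}}|\le(m-a)b$ are then valid---just invoke regularity rather than bare maximality. (The ``symmetric argument'' for $I_{\mathrm{exc}}$ is unnecessary: $|I_{\mathrm{exc}}|\le m$ is already $O_{a,b,m}(1)$.)

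The degree-accounting obstacle you flag at the end is genuine for the GRS parameterization, and it is precisely what the paper's proof avoids. Rather than parameterizing $C_\row$ as generalized Reed--Solomon and then shortening (which forces the substitution $\gamma_j=\beta_j\prod_{j'\notin J_{\mathrm{exc}}}(\alpha_j-\alpha_{j'})$ and its degree-$n$ blow-up), the paper takes the parity-check matrices $H_\row,H_\col$ to have independent random entries. Correctability of each minimal core $E'$ is then simply the nonvanishing of some $|E'|\times|E'|$ minor of the tensor parity-check matrix $H$, whose entries are degree-$1$ in the random variables; each core therefore contributes a polynomial of degree $|E'|=O_{a,b,m}(1)$, and the total degree comes out to $O_{a,b,m}(n^{b(m-a)})$ with no extra factor of $n$. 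Randomizing $H_\col$ as well also sidesteps the question---not obvious for $a>1$---of whether a \emph{fixed} MDS column code remains compatible with every generically correctable pattern.
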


\begin{proof}
  Let $C_{row} \otimes C_{col}$ be the code we seek to {construct}. We shall exhibit a system of equations, the sum of whose degrees is $O_{a,b,m}(n^{b(m-a)})$. By the Schwartz-Zippel Lemma \cite{DL78, Zip79, Sch80}, this will imply the existence of a code over a field size of $O_{a,b,m}(n^{b(m-a)})$.

  Let $H_{row}$ be the parity check matrix of $C_{row}$ of size $b\times n$, $H_{col}$ be the parity check matrix of $C_{col}$ of size $a\times m$. And let $$H = \begin{bmatrix}I_m \otimes H_{row}\\ H_{col}\otimes D_{(n-b)\times n}\end{bmatrix}$$ be the parity check matrix of $C_{row} \otimes C_{col}$. Here $I_m$ is the $m\times m$ identity matrix and $D$ is the $(n-b)\times n$ matrix formed by the first $n-b$ rows of $I_n$. Note that the number of columns of $H$ is $mn$ and the number of rows of $H$ is $mb+(n-b)a=mb+na-ab$ which is the codimension of the tensor code.

  Call an erasure pattern $E \subset [m] \times [n]$ \emph{minimal} if each nonempty row of $E$ has size least $b+1$ and each nonempty column of $E$ has size at least $a+1$.

  We impose the following three constraints on $H$,
  \begin{enumerate}
  \item $H_{row}$ is an MDS code.
  \item $H_{col}$ is an MDS code.
  \item For every $E \subset[m] \times [n]$ which is minimal and correctable, we impose that the minor $H|_{E}$ has rank $|E|$.
  \end{enumerate}

  First, we show these conditions are sufficient to ensure that $C_{row} \otimes C_{col}$ is an MR Tensor Code. Let $E \subseteq [m] \times [n]$ be any correctable pattern. Since $H_{row}$ is MDS, if any row of $E$ has at most $b$ entries, we can correct that row just by using $H_{row}$. Likewise, since $H_{col}$ is MDS, if and row of $E$ has at most $a$ entries, we can correct that row just by using $H_{col}$. By removing any such rows and columns iteratively, it suffices to correct some $E' \subseteq E$ which is minimal and correctable (as $E$ is correctable). The correctability of $E'$ follows from condition 3.

  To ensure the first condition, we sample the entries of $H_{row}$ randomly. To check MDS, we need to ensure that each $a \times a$ minor of the $[m, a]$ code has nonzero determinant. This is a system of equations of total degree $a\binom{m}{a} = O_{a,m}(1)$.

  Likewise, for the second condition, we can ensure $H_{col}$ is MDS with a system of equations of total degree $b\binom{n}{b} = O_b(n^{b}).$

  The last condition is a bit more tricky to analyze. First, we show that there are at most $O_{m,a,b}(n^{b(m-a)})$ minimal patterns, and each has $O_{m,a,b}(1)$ entries.

  Assume that a minimal correctable pattern $E$ has $u$ nonempty rows and $v$ nonempty columns. Because $E$ is correctable, it is regular and so $|E| \le ub + va - ab$. Also, since $E$ is minimal, we have that $|E| \ge v(a+1)$. Therefore,
  \[
    ub + va - ab \ge |E|\ge  v(a+1) \implies v\le b(u-a)\le b(m-a).
  \]
  Thus, $E$ spans at most $b(m-a)$ columns and has at most $ub + va - ab \le mb + b(m-a)a - ab = O_{a,b,m}(1)$ entries. Thus, the number of such $E$ is at most
  \[
    \binom{n}{\le b(m-a)} \cdot \binom{m\cdot b(m-a)}{\le mb+b(m-a)a-ab} = O_{a,b,m}(n^{b(m-a)}). 
  \]
  Ensuring that $H|_{E}$ has full rank is equivalent to some $|E|\times |E|$ minor of $H|_{E}$ having nonzero determinant. If $E$ is correctable, {then} one of these minors has at least one symbolically nonzero determinant. The constraint for this determinant has degree $|E| = O_{a,b,m}(1)$. Thus, we can specify all the necessary constraints with total degree $O_{a,b,m}(n^{b(m-a)})$.
\end{proof}

\section{{Properties of higher order MDS codes}}\label{app:mds-extra}

\subsection{Proof of Lemma~\ref{lemma:ultimate-mds-equiv}}

To prove this lemma, we start with some foundational claims.

\begin{claim}\label{claim:matrix-magic}
  For any $V \in \F^{k \times n}$ and for all $A_1, \hdots, A_\ell \subset [n]$,
  \[
    \dim(V_{A_1} \cap \cdots \cap V_{A_\ell}) = \sum_{i=1}^{\ell} \dim(V_{A_i}) - \rank \begin{bmatrix}V_{A_1} & V_{A_2} & & & \\
      V_{A_1} & & V_{A_3} & & \\
      \vdots & & & \ddots & &\\
    V_{A_1} & & & & & V_{A_\ell}\end{bmatrix}.
  \]
\end{claim}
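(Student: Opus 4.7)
The plan is to establish the identity by a direct rank–nullity analysis of the displayed block matrix, which I call $M$. First I observe that $M$ has $(\ell-1)k$ rows---one block row of size $k$ for each $i \in \{2,\ldots,\ell\}$---and $\sum_{i=1}^\ell |A_i|$ columns. Writing a candidate nullvector as $y = (y_1, y_2, \ldots, y_\ell)$ with $y_i \in \F^{|A_i|}$, the block structure shows that $My = 0$ is equivalent to the system $V_{A_1} y_1 + V_{A_i} y_i = 0$ for every $i = 2, \ldots, \ell$. In other words, the single vector $v := V_{A_1} y_1$ must equal $-V_{A_i} y_i$ for every $i \ge 2$.

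The key observation is that such a $v$ automatically lies in the column span of $V_{A_1}$, and the equation $V_{A_i} y_i = -v$ is solvable if and only if $v$ lies in the column span of $V_{A_i}$. Hence $v$ must lie in $W := V_{A_1} \cap V_{A_2} \cap \cdots \cap V_{A_\ell}$, and conversely every $v \in W$ can be realized. The linear map $\varphi : \ker M \to W$ defined by $(y_1,\ldots,y_\ell) \mapsto V_{A_1} y_1$ is therefore surjective, and its kernel consists of those nullvectors with $V_{A_i} y_i = 0$ for every $i$, i.e., $\ker \varphi = \ker(V_{A_1}) \oplus \cdots \oplus \ker(V_{A_\ell})$, which has dimension $\sum_{i=1}^\ell (|A_i| - \dim(V_{A_i}))$.

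Combining the two facts yields
\[
\dim(\ker M) \;=\; \dim(W) \;+\; \sum_{i=1}^\ell \bigl(|A_i| - \dim(V_{A_i})\bigr),
\]
and applying rank–nullity to $M$ (whose column count is $\sum_i |A_i|$) then gives $\rank(M) = \sum_i \dim(V_{A_i}) - \dim(W)$, which is the claim. The argument is essentially routine linear algebra; the only point requiring care is verifying the surjectivity of $\varphi$, namely that for any target $v \in W$ one can simultaneously choose $y_1$ with $V_{A_1} y_1 = v$ and each $y_i$ with $V_{A_i} y_i = -v$, but this is immediate from the definition of $W$ as the intersection of the column spans.
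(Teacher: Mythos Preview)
Your proof is correct and follows essentially the same approach as the paper: both relate $\ker M$ to the intersection $W$ via the map $(y_1,\ldots,y_\ell)\mapsto V_{A_1}y_1$. The only cosmetic difference is that the paper first prunes each $A_i$ so that the columns of $V_{A_i}$ are linearly independent (making the map a bijection), whereas you handle possible column dependencies directly by computing $\ker\varphi=\bigoplus_i\ker(V_{A_i})$ and applying rank--nullity; both routes are equally valid.
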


\begin{remark}
This formula has previously appeared in the literature. For instance, see~\cite{tian2019formulas}.
\end{remark}

\begin{proof}
  We may assume that $\dim(V_{A_i}) = |A_i|$ for all $i$. Otherwise, some of the columns of $V_{A_i}$ are linear combinations of other columns. Thus, we can remove elements of $A_i$ corresponding to the redundant columns without changing the rank on either side of the main expression.

  In particular, the RHS is equal to the dimension of the kernel of
  \[
    X := \begin{bmatrix}V_{A_1} & V_{A_2} & & & \\
      V_{A_1} & & V_{A_3} & & \\
      \vdots & & & \ddots & &\\
    V_{A_1} & & & & & V_{A_\ell}\end{bmatrix}.
  \]
  It suffices to exhibit a linear bijection between $\ker X$ and $V_{A_1} \cap \cdots \cap V_{A_{\ell}}$. For any $x \in \ker {X}$, let $x^i$ be the entries of $x$ corresponding to $A_i$. In particular, we must have that
  \[
    -V_{A_1} x^1 = V_{A_2}x^2 = V_{A_3}x^3 = \cdots V_{A_\ell}x^\ell.
  \]
  Thus, $y := V_{A_2}x^2$ is in $V_{A_1} \cap \cdots V_{A_\ell}$. This map has an inverse. For any $y \in V_{A_1} \cap \cdots V_{A_\ell}$, there exists unique $x^1, \hdots, x^\ell$ (because $\dim(V_{A_i}) = |A_i|)$ such that $y = V_{A_1}x^1 = \cdots V_{A_\ell}x^{\ell}$. In that case $x = (-x^1, x^2, \hdots, x^{\ell})$ is in $\ker X$. This establishes the bijection.
\end{proof}

\begin{claim}\label{claim:generic-minimizes}
  Let $V \in \mathbb F^{k \times n}$ be MDS and $W \in \R^{k \times n}$ be generic. Then, for all $\ell \ge 2$ and $A_1, \hdots, A_{\ell} \subset [n],$
  \[
    \dim(V_{A_1} \cap \cdots \cap V_{A_{\ell}}) \ge \dim(W_{A_1} \cap \cdots \cap W_{A_{\ell}}). 
  \]
\end{claim}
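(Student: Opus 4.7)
The plan is to apply Claim~\ref{claim:matrix-magic} to both sides, and then compare the two expressions term by term. Define the block matrices
\[
X_V := \begin{bmatrix} V_{A_1} & V_{A_2} & & & \\ V_{A_1} & & V_{A_3} & & \\ \vdots & & & \ddots & \\ V_{A_1} & & & & V_{A_\ell} \end{bmatrix}, \qquad X_W := \begin{bmatrix} W_{A_1} & W_{A_2} & & & \\ W_{A_1} & & W_{A_3} & & \\ \vdots & & & \ddots & \\ W_{A_1} & & & & W_{A_\ell} \end{bmatrix},
\]
so that Claim~\ref{claim:matrix-magic} gives
\[
\dim(V_{A_1} \cap \cdots \cap V_{A_\ell}) = \sum_{i=1}^\ell \dim(V_{A_i}) - \rank(X_V)
\]
and similarly for $W$. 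The claim then reduces to two separate comparisons: that $\dim(V_{A_i}) = \dim(W_{A_i})$ for each $i$, and that $\rank(X_V) \le \rank(X_W)$.

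The first comparison is immediate from the MDS property of $V$: since every square submatrix of $V$ of size $\le k$ is nonsingular (Proposition~\ref{prop:MDS}), we have $\dim(V_{A_i}) = \min(|A_i|, k)$, which matches the generic value $\dim(W_{A_i}) = \min(|A_i|, k)$.

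The second comparison is the standard semicontinuity-of-rank principle. The block matrix $X_W$ has entries that are generic over the base field, so its rank equals the largest $r$ for which some $r \times r$ minor is a nonzero polynomial in the generic entries. Since $X_V$ is obtained from $X_W$ by specializing those entries to specific elements of $\F$, every $r \times r$ minor of $X_V$ is the specialization of the corresponding minor of $X_W$, and so $X_V$ cannot have larger rank than $X_W$. Hence $\rank(X_V) \le \rank(X_W)$.

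Combining the two comparisons,
\[
\dim(V_{A_1} \cap \cdots \cap V_{A_\ell}) = \sum_{i=1}^\ell \dim(V_{A_i}) - \rank(X_V) \ge \sum_{i=1}^\ell \dim(W_{A_i}) - \rank(X_W) = \dim(W_{A_1} \cap \cdots \cap W_{A_\ell}),
\]
which completes the proof. There is no substantial obstacle here: the only mild subtlety is making precise the ``generic $\Rightarrow$ maximal rank'' statement, but it follows cleanly from viewing the entries of $W$ as indeterminates and noting that minors are polynomials in those indeterminates.
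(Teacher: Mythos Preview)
Your proposal is correct and essentially identical to the paper's own proof: both apply Claim~\ref{claim:matrix-magic}, use the MDS property to match the $\dim(V_{A_i})$ terms with the $\dim(W_{A_i})$ terms, and then invoke the fact that the generic block matrix maximizes rank to conclude $\rank(X_V)\le \rank(X_W)$. The paper states this last point in a single phrase (``generic matrices maximize rank''), whereas you spell out the specialization-of-minors argument, but the logic is the same.
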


\begin{proof}
  Since $V$ is MDS, $\dim(V_{A_i}) = \dim(W_{A_i})$ for all $i \in [\ell].$  Thus,
  \begin{align*}
    &\dim(V_{A_1} \cap \cdots \cap V_{A_{\ell}}) - \dim(W_{A_1} \cap \cdots \cap W_{A_{\ell}})\\
    &= \rank \begin{bmatrix}W_{A_1} & W_{A_2} & & & \\
      W_{A_1} & & W_{A_3} & & \\
      \vdots & & & \ddots & &\\
    W_{A_1} & & & & & W_{A_\ell}\end{bmatrix} - \rank \begin{bmatrix}V_{A_1} & V_{A_2} & & & \\
      V_{A_1} & & V_{A_3} & & \\
      \vdots & & & \ddots & &\\
    V_{A_1} & & & & & V_{A_\ell}\end{bmatrix},
  \end{align*}
  which is nonnegative because generic matrices maximize rank.
\end{proof}

\begin{claim}\label{claim:generic-removal}
  Let $V \in \mathbb \F^{k \times n}$ and $W \le \F^k$ be a subspace. For any $A \subseteq B \subseteq [n]$,
  \[
    0 \le \dim(W \cap V_{B}) - \dim(W \cap V_{A}) \le |B \setminus A|.
  \]
\end{claim}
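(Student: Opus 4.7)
The lower bound is immediate: since $A \subseteq B$, we have $V_A \subseteq V_B$, hence $W \cap V_A \subseteq W \cap V_B$, so the dimensions satisfy $\dim(W \cap V_A) \le \dim(W \cap V_B)$.

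For the upper bound, my plan is to exhibit a linear map whose domain has the ``large'' dimension $\dim(W \cap V_B)$, whose kernel has the ``small'' dimension $\dim(W \cap V_A)$, and whose codomain has dimension at most $|B \setminus A|$. The natural candidate is the restriction to $W \cap V_B$ of the quotient projection $\pi : V_B \to V_B / V_A$. Call this restricted map $\phi : W \cap V_B \to V_B / V_A$. Its kernel is exactly $(W \cap V_B) \cap V_A = W \cap V_A$ because $A \subseteq B$. By rank--nullity,
\[
\dim(W \cap V_B) - \dim(W \cap V_A) = \dim(\img \phi) \le \dim(V_B / V_A) = \dim(V_B) - \dim(V_A).
\]
Finally, $V_B = V_A + V_{B \setminus A}$, so $\dim(V_B) - \dim(V_A) \le \dim(V_{B \setminus A}) \le |B \setminus A|$, completing the chain.

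There is no real obstacle here — the whole argument is one application of rank--nullity together with the trivial bound $\dim(V_{B \setminus A}) \le |B \setminus A|$. The only thing to be slightly careful about is making sure the codomain of $\phi$ is chosen so that its kernel is literally $W \cap V_A$ (rather than something larger), which is why I project to $V_B / V_A$ rather than to $\F^k / V_A$; both would work, but the former makes the dimension count transparent.
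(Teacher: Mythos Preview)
Your proof is correct and follows essentially the same route as the paper's: both arguments reduce the upper bound to $\dim(V_B) - \dim(V_A) \le |B \setminus A|$. The only cosmetic difference is that you reach this via the quotient map $W \cap V_B \to V_B/V_A$ and rank--nullity, whereas the paper expands $\dim(W \cap V_B) - \dim(W \cap V_A)$ using the formula $\dim(X \cap Y) = \dim X + \dim Y - \dim(X + Y)$ and then observes that $\dim(W + V_A) - \dim(W + V_B) \le 0$.
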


\begin{proof}
  The left inequality is trivial. For the right, observe
  \begin{align*}  
    \dim(W \cap V_{B}) &- \dim(W \cap V_{A})\\
                       &= \dim(W) + \dim(V_B) - \dim(W + V_B) - \dim(W) - \dim(V_A) + \dim(W + V_A)\\
                       &= (\dim(V_B) - \dim(V_A)) + (\dim(W + V_A) - \dim(W + V_B))\\
                       &\le (\dim(V_A + V_{B\setminus A}) - \dim(V_A)) + (0)\\
                       &= \dim(V_{B\setminus A}) - \dim(V_A \cap V_{B\setminus A})\\
                       &\le |B \setminus A|.
  \end{align*}
\end{proof}

\begin{lemma}[Padding Lemma]\label{lem:generic-padding}
  Let $V \in \mathbb \F^{k \times n}$ be any MDS matrix. Let $\ell \ge 2$. Consider $A_1, \hdots, A_\ell \subset [n]$ of size at most $k$. Then the following statements are true:
  \begin{enumerate}
     \item If $\sum_{i=1}^\ell |A_i| > (\ell-1)k$, then $V_{A_1} \cap \cdots \cap V_{A_{\ell}}\ne 0.$
     \item If $\sum_{i=1}^\ell |A_i| \le (\ell-1)k$ and $n$ is sufficiently large, then $V_{A_1} \cap \cdots \cap V_{A_{\ell}} = 0$ iff there exist $A_1' \supseteq A_1, \hdots, A_\ell' \supseteq A_\ell$ such that
     \begin{enumerate}
     \item $A_1'\setminus A_1, A_2'\setminus A_2,\dots,  A_\ell'\setminus A_\ell$ and $A_1\cup A_2\cup \dots \cup A_\ell$ are mutually disjoint,
     \item $|A_i'| \le k$ for all $i \in [\ell]$. 
     \item  $|A_1'| + \cdots + |A_{\ell}'| = (\ell-1)k$,
     \item $V_{A'_1} \cap \cdots \cap V_{A'_{\ell}} = 0$.
      \end{enumerate} 
  \end{enumerate}
\end{lemma}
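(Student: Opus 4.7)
My plan is to prove Part 1 by a quick dual-space dimension count, and Part 2's nontrivial direction by induction on $(\ell-1)k - \sum_i |A_i|$, growing the $A_i$'s one element at a time while preserving the zero-intersection property. For Part 1, since $V$ is MDS and each $|A_i| \le k$, the dual subspace satisfies $\dim V_{A_i}^\perp = k - |A_i|$, so
\[
\dim\lp \sum_i V_{A_i}^\perp \rp \le \ell k - \sum_i |A_i| < k,
\]
which forces $\bigcap_i V_{A_i} = (\sum_i V_{A_i}^\perp)^\perp \ne 0$.

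For Part 2, the ``if'' direction is immediate since $V_{A_i'} \supseteq V_{A_i}$. For the ``only if'' direction, the base case $\sum_i |A_i| = (\ell-1)k$ is trivial (take $A_i' = A_i$), so the inductive step reduces to the following claim: whenever $\bigcap_i V_{A_i} = 0$ and $\sum_i |A_i| < (\ell-1)k$, there is an index $i$ with $|A_i| < k$ and a fresh $e \in [n] \setminus \bigcup_j A_j$ such that replacing $A_i$ by $A_i \cup \{e\}$ preserves the zero intersection. Iterating this will produce the required $A_i'$; the disjointness of all added elements from each other and from the original $\bigcup_j A_j$ comes for free, because at each step I pick $e$ outside the current union, and the size bound $|A_i'|\le k$ is automatic since I only ever append to an $A_i$ with $|A_i|<k$.

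The key characterization for the inductive step is that, setting $W_i := \bigcap_{j \ne i} V_{A_j}$, appending $e$ to $A_i$ preserves the zero intersection if and only if $V_e \notin V_{A_i} + W_i$ -- a short calculation using $V_{A_i} \cap W_i = \bigcap_j V_{A_j} = 0$. The existence of a good index $i$ will follow from a direct-sum counting argument: if no such $i$ existed, then $V_{A_i}^\perp \cap \sum_{j \ne i} V_{A_j}^\perp = 0$ for all $i$, making $\{V_{A_i}^\perp\}_i$ an internal direct sum and forcing $\dim\lp \sum_i V_{A_i}^\perp \rp = \ell k - \sum_i |A_i| > k$, which is impossible inside $\F^k$. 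Moreover this index must have $|A_i| < k$, since otherwise $V_{A_i} = \F^k$ and $V_{A_i} + W_i = \F^k$.

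For the final step -- finding a safe new $e$ -- I will use the standard consequence of MDS that any proper subspace $U \subsetneq \F^k$ of dimension $d$ contains at most $d$ columns of $V$ (otherwise $d+1 \le k$ columns would be linearly dependent, violating MDS). Hence for our chosen $i$, at most $k-1$ values of $e \in [n]$ are ``bad,'' and since $|\bigcup_j A_j| \le (\ell-1)k$ throughout the induction, a safe new $e$ exists once $n \ge \ell k$ -- this quantifies ``$n$ sufficiently large.'' The main obstacle is locating a good index $i$ at each inductive step; the direct-sum/dimension counting argument above is what makes this transparent, while everything else is routine dimension bookkeeping together with standard MDS genericity.
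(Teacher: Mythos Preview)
Your proof is correct and follows essentially the same strategy as the paper: both arguments grow the $A_i$'s one fresh element at a time while preserving $\bigcap_i V_{A_i}=0$, identifying at each step an index $i$ for which the relevant obstruction subspace is proper. The paper works with the block matrix of Claim~\ref{claim:matrix-magic} and its column-space slices $U_j$, whereas you work on the dual side with $V_{A_i}+W_i$; in fact one checks that $U_j$ (the part of the column space supported on row-block $j$) is canonically identified with $V_{A_{j+1}}+W_{j+1}$, so the two viewpoints coincide.

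One minor difference worth noting: your argument is slightly more explicit at the final step. The paper asserts that ``one of the subspaces $U_j$ increases in dimension by padding $A_{j+1}$'' with a fresh element, but does not spell out why an arbitrary fresh $e$ suffices (in fact one needs $V_e\notin U_j$). Your use of the MDS consequence that a proper subspace contains at most $k-1$ columns, together with the bound $|\bigcup_j A_j|<(\ell-1)k$, makes the existence of a safe fresh $e$ (and hence the sufficiency of $n\ge \ell k$) fully transparent. Your direct-sum contradiction on the $V_{A_i}^\perp$'s is the dual of the paper's observation that $U=\F^{(\ell-1)k}$ iff every $\dim U_j=k$.
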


\begin{proof}

  \begin{enumerate}
  \item By Claim~\ref{claim:matrix-magic}, we know there exists a matrix $M$ of $(\ell-1)k$ rows and $\sum_{i=1}^{\ell} |A_i|$ columns such that
    \begin{align*}
      \dim(V_{A_1} \cap \cdots \cap V_{A_{\ell}}) &= \sum_{i=1}^\ell \dim(V_{A_i}) - \rank(M)\\
                                             &\ge \sum_{i=1}^{\ell} |A_i| - (\ell-1)k\\
      & > 0,
    \end{align*}
    Thus, $V_{A_1} \cap \cdots \cap V_{A_{\ell}} \neq 0$.

  \item The `if' direction follows from $V_{A_1} \cap \cdots \cap V_{A_\ell} \subseteq V_{A'_1} \cap \cdots \cap V_{A'_\ell} = 0$.

    For the `only if' direction, assume $n \ge (\ell-1)k$. By Claim~\ref{claim:matrix-magic},

    \[0 = \dim(V_{A_1} \cap \cdots \cap V_{A_\ell})= \sum_{i=1}^{\ell} \dim(V_{A_i}) - \rank \begin{bmatrix}V_{A_1} & V_{A_2} & & & \\
        V_{A_1} & & V_{A_3} & & \\
        \vdots & & & \ddots & &\\
        V_{A_1} & & & & & V_{A_\ell}\end{bmatrix}.
    \]
    Let $U(A_1, \hdots, A_{\ell}) \subset \mathbb F^{(\ell-1)k}$ be the column space of the block matrix in the above expression. We let $U_i \subset U$ be the subspace of $U$ which is supported on the $i$th block of $k$ coordinates of $F^{(\ell-1)k}$. Note that $U_1 \oplus U_2 \oplus \cdots \oplus U_{\ell-1} \subseteq U$. Therefore, $U = \mathbb F^{(\ell-1)k}$ if and only if $\dim U_i = k$ for all $i \in [\ell-1]$.

    We now `grow' $A_1, \hdots, A_{\ell}$ into the desired $A'_1, \hdots, A'_{\ell}$ through the following inductive process. Let $t = (\ell-1)k - \sum_{i=1}^{\ell} |A_i|$. Let $A^{(0}_1, \hdots, A^{(0)}_{\ell}$ be $A_1, \hdots, A_{\ell}$. 

    \begin{itemize}
    \item For $i \in \{0, \hdots, t-1\}$.
    \item Since $\sum_{a=1}^{\ell} |A_a| < (\ell-1)k$, we have $U(A_1, \hdots, A_{\ell}) \neq \mathbb F^{(\ell-1)k}$. Thus, we can identify $j \in [\ell-1]$ such that $\dim(U_j) < k$.
    \item Add an element in $[n] \setminus \bigcup_{a=1}^\ell A^{(i)}_a$ to $A^{(i)}_{j+1}$. Call this new family $A^{(i+1)}_1, \hdots, A^{(i+1)}_\ell$. 
    \item Repeat these steps.
    \end{itemize}

    We let $A'_i = A^{(t)}_i$ for all $i \in [\ell]$. Clearly $A'_1, \hdots, A'_\ell$ are supersets of $A_1, \hdots A_\ell$. We claim they also satisfy properties (a)-(d). Property (a) is satisfied because a new element of $[n]$ is added at each step of the algorithm. Property (b) is satisfied because if no elements are added to $A_1$ and for all $j \ge 2$ an element is added to $A^{(i)}_j$ only if $\dim(U_j) < k$, which cannot happen if $|A^{(i)}_j| = k$; therefore no set will exceed $k$ in size at any point. Property (c) is satisfied because the algorithm runs for $t = (\ell-1)k - \sum_{i=1}^{\ell} |A_i|$ steps.

    For property $d$, we claim by induction for all $i \in \{0, 1, \hdots, t\}$, $V_{A^{(i)}_1} \cap \cdots \cap V_{A^{(i)}_\ell} = 0$. The base case $i=0$ follows by assumption.
    Note that at each stage, $\dim U(A^{(i+1)}_1, \hdots, A^{(i+1)}_\ell) \ge \dim U(A^{(i)}_1, \hdots, A^{(i)}_\ell) +1$ because one of the subspaces $U_j$ increases in dimension by padding $A_{j+1}$. Thus,
    \begin{align*}
      \dim(V_{A^{(i+1)}_1} \cap \cdots \cap V_{A^{(i+1)}_\ell})&= \sum_{j=1}^{\ell} \dim(V_{A^{(i+1)}_j}) - \dim (U(A^{(i+1)}_1, \hdots, A^{(i+1)}_\ell))\\
                                                               &\le \left(1+\sum_{j=1}^{\ell}\dim (V_{A^{(i)}_j})\right) - (\dim (U(A^{(i)}_1, \hdots, A^{(i)}_\ell)) +1)\\
                                                               &= 0.
      \end{align*}

      Thus, $V_{A^{(i)}_1} \cap \cdots \cap V_{A^{(i)}_\ell} = 0$, completing the induction.
  \end{enumerate}
  
\end{proof}

Now we prove Lemma~\ref{lemma:ultimate-mds-equiv}.

\lemmaone*

\begin{proof}
  Observe that ``only if'' direction follows immediately. Now we seek to show that ``if'' direction. We do this by showing the contrapositive.

  Fix $A_1, \hdots, A_{\ell}$ such that $\dim(V_{A_1} \cap V_{A_2} \cap \cdots \cap V_{A_\ell}) \neq \dim(W_{A_1} \cap \cdots \cap W_{A_{\ell}}),$ Because $\dim(W_{A_1} \cap \cdots \cap W_{A_{\ell}})$ is the rank of a generic matrix, we must have that

  \[\dim(V_{A_1} \cap V_{A_2} \cap \cdots \cap V_{A_\ell}) > \dim(W_{A_1} \cap \cdots \cap W_{A_{\ell}}) =: d,\]
  Note that $|A_i| \ge d$ for all $i \in [\ell]$. We claim there exists subsets $A'_1 \subset A_1, A'_2 \subseteq A_2, \hdots, A'_{\ell} \subseteq A_{\ell}$ with $|A'_1| + |A'_2| + \cdots + |A'_{\ell}| = d$ such that
  \[W_{A_1\setminus A'_1} \cap W_{A_2 \setminus A'_2} \cap \cdots \cap W_{A_{\ell} \setminus A'_{\ell}} = 0.\]
  This follows from Claim~\ref{claim:matrix-magic}, as the block matrix has rank $d$ less than the number of columns, so $d$ columns can be removed without changing the rank, which decreases the dimension of the intersection by $d$. Observe that we must have
  \[\dim(V_{A_1 \setminus A'_{1}} \cap V_{A_2 \setminus A'_2} \cap \cdots \cap V_{A_\ell \setminus A'_{\ell}}) > 0\]
  because the dimension can decrease by at most $d$.

  Let $B_i = A_i \setminus A'_i$ for all $i \in [\ell]$. If any $|B_i| > k$, then $V_{B_i} = \mathbb F^k$ and $W_{B_i} = \mathbb R^k$. We can replace both with an arbitrary subset of size $k$ without changing anything. Thus, $\sum_i |B_i| \le \ell k$. In fact since
  \[
    0 = \dim(W_{B_1} \cap \cdots \cap W_{B_{\ell}}) \le k - \sum_{i} (k - |B_i|). 
  \]
  we have that $\sum_{i} |B_i|\le (\ell-1)k.$ Now, if the inequality is strict, we can add an element to one of the $B_i$s without changing that the generic intersection of the $W$s is nonempty (this is by looking at the matrix view and noting that some row must be less than full rank). this can only increase the intersection of the $V$s. This finishes the argument that $\neg 2 \Rightarrow \neg 3.$ 
  
\end{proof}

\subsection{Proof of Lemma~\ref{lem:correctability_intersection}}

\lemmatwo*
\begin{proof}
  Note that by Proposition~\ref{prop:correctability}, correctability is equivalent to (\ref{eq:2}).
  Observe that (\ref{eq:2})$\iff$(\ref{eq:3}) because the addition of vector spaces is commutative.

  We now show that (\ref{eq:2})$\iff$(\ref{eq:4}). By symmetric argument, we may show that (\ref{eq:3})$\iff$(\ref{eq:5}).

   The conditions on $E$ translate to the following conditions on $A_1,A_2,\dots, A_m.$
   \begin{enumerate}
   \item $|A_i|\le n-b$
   \item $\sum_i |A_i|= (m-a)(n-b).$ 
   \end{enumerate}
   The following statements are equivalent.
   \begin{enumerate}
   \item $E$ is not correctable.
   \item There exists a non-zero codeword of $C$ which is supported on $E.$ 
   \item There exist $r_1,r_2,\dots,r_m\in C_\row$, not all zero, such that 
     \begin{itemize}
     \item $\supp(r_i)\subset \barA_i$ for $i\in [m]$,
     \item $$P\cdot \begin{bmatrix}
       r_1\\
       r_2\\
       \vdots\\
       r_m
     \end{bmatrix}=\sum_{i=1}^m P_i \otimes r_i^T = 0.$$
     \end{itemize}
     (Since $r_i=y_i^T V$ for some $y_i\in \F^\barb$ and $I \otimes V^T$ is an injective linear map, we have the following equivalent statement.)
   \item There exist $y_1,y_2,\dots,y_m \in \F^{\barb}$, not all zero, such that 
     \begin{itemize}
     \item $y_i^T V_{A_i} =0$ for $i\in [m]$,
     \item $\sum_{i=1}^m P_i \otimes y_i =0.$
     \end{itemize}
     (Since $y_i \in V_{A_i}^\perp$ for each $i\in [m],$ we have the following equivalent statement.)
   \item There exists $y_i\in V_{A_i}^\perp$, not all zero, such that $\sum_{i=1}^m P_i \otimes y_i =0$.\ \ 
     (Since $|A_i|\le n-b$, and $V$ is a generator matrix of an MDS code, we have $\dim(V_{A_i}^\perp)=(n-b)-\dim(V_{A_i})=(n-b)-|A_i|$. Therefore $$\sum_{i=1}^m \dim(V_{A_i}^\perp) = \sum_{i=1}^m (n-b-|A_i|)=m(n-b)-(m-a)(n-b)=a(n-b)=a\bar{b}.$$ Since $P_i$ is one-dimensional, we also have $\sum_{i=1}^m \dim(P_i \otimes V_{A_i}^\perp)=a\bar{b}.$ So we have the following equivalent statement.)
   \item $P_1 \otimes V_{A_1}^\perp + P_2 \otimes V_{A_2}^\perp + \dots + P_m \otimes V_{A_m}^\perp \ne \F^a \otimes \F^{\barb}.$
   \end{enumerate}
   
   This completes the proof.
   
 \end{proof}

\subsection{Counterexample to $\MDS(\ell)$ duality for $\ell \ge 4$}

Consider the following matrices over $\F_{13}$.
\begin{align*}
  V &= \begin{pmatrix}
    1 & 1 & 1 & 1 & 1 & 1 & 1 &1\\
    0 & 1 & 2 & 3 & 4 & 5 & 6 & 7
  \end{pmatrix}\\
  V^{\perp} &= \begin{pmatrix}
    1 & 0 & 0 & 0 & 0 & 0 & 6 & 6\\
    0 & 1 & 0 & 0 &0 & 0 & 7 & 5\\
    0 & 0 & 1 & 0 & 0 & 0 & 8 & 4\\
    0 & 0 & 0 & 1 & 0 & 0 & 9 & 3\\
    0 & 0 & 0 & 0 & 1 & 0 & 10 & 2\\
    0 & 0 & 0 & 0 & 0 & 1 & 11 & 1
  \end{pmatrix}                                   
\end{align*}
One can check that $V$ is MDS(4) (this follows from $V$ being MDS) but $V^{\perp}$ is not MDS. For example, let $A_1 = \{1, 2, 3, 4, 5\}$, $A_2 = \{1,2,3,6,7\}$, $A_3 = \{1,2,4,6,8\}$, and $A_4 = \{5,7,8\}$. If $V^{\perp}$ were a generic matrix, then  $V^{\perp}_{A_1} \cap V^{\perp}_{A_2} \cap V^{\perp}_{A_3} \cap V^{\perp}_{A_4} = 0$, but one can verify that $(5,4,3,2,8,0) \in V^{\perp}_{A_1} \cap V^{\perp}_{A_2} \cap V^{\perp}_{A_3} \cap V^{\perp}_{A_4}$.

A computer search can also find a number of other counterexamples, such as $\MDS(8,4,4)$ codes whose duals are not $\MDS(4)$. Another way to see the failure of duality of $\MDS(4)$ is by the lower bound of \cite{kong2021new} where they proved that $(m=4,n,a=1,b=2)$-MR tensor codes require fields of size $\Omega(n^2)$. Constructing $(m=4,n,a=1,b=2)$-MR tensor codes is equivalent to constructing $(n,n-2)-\MDS(4)$ code by Theorem~\ref{thm:MRtensorcodesa1_MDSm}. If $\MDS(4)$ duality is true, this is equivalent to {constructing} $(n,2)-\MDS(4)$ codes. It is easy to see any $(n,2)$-MDS code is also an $\MDS(4)$ code. Since there are MDS codes over linear size fields (Reed-Solomon codes), this would violate the lower bound of \cite{kong2021new}.

\subsection{Proof of Proposition~\ref{prop:reduction}}

\propone*
\begin{proof}

  (1)   Follows trivially by taking $A_\ell$ in (\ref{eqn:MDSell}) to be the entire set.\\
  (2) Let $V_{k\times n}$ be a generator matrix for $C$. Then dropping the $i^{th}$ column of $V$, we get the generator matrix for $C_1$, the puncturing of $C$ at $i$. Therefore the condition (\ref{eqn:MDSell}) still holds.\\
  (3) WLOG, let's assume that the code is shortened at position $n$. Let $V_{k\times n}=[V_1 V_2 \cdots V_n]$ be the generator matrix of $C$. By a basis change, we can assume that $V_n=e_k$, the $k^{th}$ coordinate vector in $\F^k$. Let $\tV_i\in \F^k$ be the vector formed by dropping the last coordinate of $V_i$. It is easy to see that $\tV=[\tV_1 \tV_2 \cdots \tV_{n-1}]$ is the generator matrix for the shortened code $C_1.$ We now want to prove that $C_1$ is $\MDS(\ell).$ 
  Let $\tA_1,\tA_2,\dots,\tA_{\ell}\subset [n-1]$ with $\sum_{i=1}^\ell |\tA_i|=(k-1)(\ell-1)$ such that $\bigcap_{i=1}^\ell \tW_{\tA_i}=0$ for a generic $(k-1)\times n$ matrix $\tW$. By Lemma~\ref{lemma:ultimate-mds-equiv}, it is enough to show that $\bigcap_{i=1}^\ell \tV_{\tA_i}=0$. Define $A_1=\tA_1$ and $A_i=\tA_i \cup \{n\}$ for $i\ge 2.$ 
  \begin{claim}
    \label{claim:Wintersection_shortening}
    For $W_{k\times n}$ and $\tW_{(k-1)\times n}$ are generic matrices, then $$\bigcap_{i=1}^\ell \tW_{\tA_i}=0 \iff \bigcap_{i=1}^\ell W_{A_i}=0.$$
  \end{claim}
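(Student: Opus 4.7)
The plan is to prove the claim by setting up an explicit linear bijection between the two intersections after normalizing $W$. First, since $W$ is a generic $k\times n$ matrix, its last column $W_n$ is generically nonzero, so by a change of basis of $\F^k$ (which preserves all intersections and the genericity of the other columns) I may assume $W_n = e_k$. For each $j \le n-1$ I then write $W_j = \iota(\tW_j) + w_j e_k$, where $\iota:\F^{k-1}\hookrightarrow \F^k$ embeds into the first $k-1$ coordinates, $\tW_j \in \F^{k-1}$ is the projection of $W_j$, and $w_j \in \F$ is its last entry. In this form the matrix $[\tW_1~\cdots~\tW_{n-1}]$ is a generic $(k-1)\times (n-1)$ matrix, which plays the role of $\tW$, and the $w_j$ are independent generic scalars.

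Next I would establish the structural identity that for each $i \ge 2$,
$$W_{A_i} \;=\; \iota(\tW_{\tA_i}) + \Span\{e_k\}.$$
This holds because $e_k = W_n \in W_{A_i}$ allows each generator $W_j = \iota(\tW_j) + w_j e_k$ to be replaced by $\iota(\tW_j)$ modulo $e_k$. Using the direct sum decomposition $\F^k = \iota(\F^{k-1}) \oplus \Span\{e_k\}$, a short direct check then yields
$$\bigcap_{i=2}^\ell W_{A_i} \;=\; \iota\!\Bigl(\bigcap_{i=2}^\ell \tW_{\tA_i}\Bigr) + \Span\{e_k\}.$$

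To finish, I would intersect with $W_{A_1}$. Any $x \in W_{A_1}$ has the form $x = \sum_{j \in \tA_1}\alpha_j W_j = \iota(u) + c\,e_k$, where $u := \sum_{j}\alpha_j \tW_j \in \tW_{\tA_1}$ and $c := \sum_{j}\alpha_j w_j$; then $x$ lies in $\bigcap_{i=2}^\ell W_{A_i}$ exactly when $u \in \bigcap_{i=2}^\ell \tW_{\tA_i}$, hence when $u \in \bigcap_{i=1}^\ell \tW_{\tA_i}$. Conversely, given any such $u$, since $|\tA_1| \le k-1$ (the sets $\tA_i$ index columns relevant to a code of dimension $k-1$) and $\tW$ is generic, the columns $\{\tW_j : j \in \tA_1\}$ are linearly independent, so the $\alpha_j$ with $\sum_j \alpha_j \tW_j = u$ are uniquely determined, and setting $x := \iota(u) + (\sum_j \alpha_j w_j)e_k$ gives a well-defined linear lift into $\bigcap_{i=1}^\ell W_{A_i}$. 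This yields a linear bijection between $\bigcap_{i=1}^\ell \tW_{\tA_i}$ and $\bigcap_{i=1}^\ell W_{A_i}$, so the two intersections have equal dimension and in particular one vanishes iff the other does.

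The main obstacle is verifying the lift $u \mapsto x$ in the last step: well-definedness and injectivity both reduce to the generic linear independence of $\{\tW_j : j \in \tA_1\}$, which uses $|\tA_1|\le k-1$, and the property that nonzero $u$ maps to nonzero $x$ follows from the direct sum $\F^k = \iota(\F^{k-1}) \oplus \Span\{e_k\}$: if $u \ne 0$ then $\iota(u) \ne 0$ in the $\iota(\F^{k-1})$ summand, forcing $x \ne 0$.
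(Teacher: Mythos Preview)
Your argument is correct. Both you and the paper begin by normalizing $W_n = e_k$ via a basis change, but then proceed differently. The paper invokes Claim~\ref{claim:matrix-magic} to translate each condition $\bigcap_i W_{A_i}=0$ (resp.\ $\bigcap_i \tW_{\tA_i}=0$) into full rank of the associated $(\ell-1)k\times(\ell-1)k$ block matrix $X$ (resp.\ the $(\ell-1)(k-1)\times(\ell-1)(k-1)$ block matrix $\tX$), and then observes that the $\ell-1$ columns of $X$ corresponding to $W_n=e_k$ in each block $W_{A_i}$, $i\ge 2$, can be used via column operations to reduce $X$ to $\tX$ plus $\ell-1$ pivot columns, so that $X$ is full rank iff $\tX$ is.

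Your approach is instead geometric: you show directly that $W_{A_i}=\iota(\tW_{\tA_i})\oplus\Span\{e_k\}$ for $i\ge2$, deduce $\bigcap_{i\ge2}W_{A_i}=\iota\bigl(\bigcap_{i\ge2}\tW_{\tA_i}\bigr)\oplus\Span\{e_k\}$, and then use the generic independence of $\{\tW_j:j\in\tA_1\}$ (valid since $|\tA_1|\le k-1$, which holds in context by Lemma~\ref{lemma:ultimate-mds-equiv}) to build an explicit linear isomorphism between $\bigcap_i\tW_{\tA_i}$ and $\bigcap_i W_{A_i}$. This is self-contained and avoids appealing to Claim~\ref{claim:matrix-magic}; it also yields the slightly stronger statement that the two intersections have equal dimension, not merely that one vanishes iff the other does. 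The paper's matrix-rank argument is more mechanical and leverages machinery already in place, whereas yours is more conceptual about the subspace structure; both are equally valid proofs of the claim.
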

  \begin{proof}
    WLOG, by a basis change we can assume $W_n=e_k$, the $k^{th}$ coordinate vector. Let $\tW_i$ to be the vector formed by dropping the last coordinate of $W_i$, clearly $\tW$ is also generic. Now
      $\bigcap_{i=1}^\ell W_{A_i}=0$ iff
    \begin{align*}
      X=\begin{bmatrix}W_{A_1} & W_{A_2} & & & \\
      W_{A_1} & & W_{A_3} & & \\
      \vdots & & & \ddots & &\\
    W_{A_1} & & & & & W_{A_\ell}\end{bmatrix}
    \end{align*}
    is full rank. Doing column operations, we can conclude that $X$ is full rank iff
        \begin{align*}
      \tX=\begin{bmatrix}\tW_{\tA_1} & W_{\tA_2} & & & \\
      W_{\tA_1} & & W_{\tA_3} & & \\
      \vdots & & & \ddots & &\\
    W_{\tA_1} & & & & & W_{\tA_\ell}\end{bmatrix}
    \end{align*}
    is full rank. This is equivalent to $\bigcap_{i=1}^\ell \tW_{\tA_i}=0$.
  \end{proof}
  Since $V$ is the generator for an $\MDS(\ell)$ code, we have $\bigcap_{i=1}^\ell W_{A_i}=0 \Rightarrow \bigcap_{i=1}^\ell V_{A_i}=0$. So therefore we will be by proving that $\bigcap_{i=1}^\ell V_{A_i}=0 \Rightarrow \bigcap_{i=1}^\ell \tV_{\tA_i}=0$, the proof of which is essentially same as that of Claim~\ref{claim:Wintersection_shortening}.
  \end{proof}

\section{Near constructions of $(n,3)$-$\MDS(3)$ codes}\label{app:constructions}

In this appendix, we exhibit some partial progress toward constructing $(n,3)-\MDS(3)$ codes. By Theorem~\ref{thm:main-lower-bound}, we know such a construction needs $\Omega(n^2)$ field size. We construct a couple of different codes with field size $O(n^2)$ {which} have some of the properties of $\MDS(3)$ codes.

The constructions are inspired by the fact that Reed-Solomon codes produce the (nearly) optimal field size for $\MDS$ codes.

\subsection{A weak bipartite $\MDS(3)$ construction}

Let $p$ be a prime, and $q=p^2$. Assume that $\mathbb F_q = \mathbb F_p[X]/\langle p(X)\rangle$, where $p(X)$ is a degree-2 irreducible in $\mathbb F_p[X]$.

\begin{lemma}\label{lem:bipartite-stuff}
  There are explicit $u_0,u_1, \hdots, u_{p-1} \in \F_q^3$ and $v_0,v_1, \hdots, v_{p-1} \in \F_q^3$ with the following property. Let $W_{\alpha,\beta} = \operatorname{span}(u_\alpha v_\beta)$. Then $W_{\alpha_1,\beta_1} \cap W_{\alpha_2,\beta_2} \cap W_{\alpha_3,\beta_3} \neq 0$ if they generically should.%
\end{lemma}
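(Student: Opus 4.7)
The plan is to give an explicit construction of the $u_\alpha$'s and $v_\beta$'s and reduce the desired concurrency property to a $3 \times 3$ determinantal condition that splits via the $\F_p$-structure of $\F_q = \F_p[\omega]$.

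\textbf{Construction and geometric reformulation.} A natural candidate is to place all $2p$ vectors on a rational normal (moment) curve in $\mathbb{P}^2(\F_q)$: for instance, $u_\alpha = (1,\alpha,\alpha^2)$ for $\alpha\in\F_p$ and $v_\beta = (1,\beta+\omega,(\beta+\omega)^2)$ for $\beta\in\F_p$ (possibly with a multiplicative twist $\beta \mapsto \beta\omega$ in place of the additive shift; the precise variant may need tuning). With such a choice, the code is Reed--Solomon-like and hence automatically MDS, and $W_{\alpha,\beta} = \Span\{u_\alpha, v_\beta\}$ is the projective line in $\mathbb{P}^2(\F_q)$ through $u_\alpha$ and $v_\beta$. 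So $\bigcap_{i=1}^{3} W_{\alpha_i,\beta_i} \neq 0$ exactly when the three lines are \emph{concurrent}, which by standard projective duality is the vanishing of the $3\times 3$ determinant $D \in \F_q$ whose $i$-th row is a hyperplane normal to $W_{\alpha_i,\beta_i}$.

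\textbf{Decomposition over $\F_p$.} Since all $\alpha_i, \beta_i \in \F_p$, the determinant $D$ is a polynomial in $\omega$ of low degree; using multilinearity of the determinant together with the minimal polynomial $\omega^2 = a\omega + b$ to reduce higher powers of $\omega$, one can write $D = D_0 + \omega D_1$ with $D_0, D_1 \in \F_p$. The equation $D=0$ in $\F_q$ is therefore equivalent to the pair of $\F_p$-equations $D_0 = D_1 = 0$. Each $D_i$ is itself a $3\times 3$ determinant over $\F_p$ whose entries are drawn from $\{1, \alpha_j, \beta_j, \alpha_j\beta_j\}$, and can be read off as an affine collinearity condition on a triple of auxiliary points in $\F_p^2$.

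\textbf{Case analysis and main obstacle.} The final and most delicate step is to show that the simultaneous vanishing $D_0 = D_1 = 0$ forces one of the trivial configurations under which the triple intersection is already nonzero generically: two of the pairs $(\alpha_i,\beta_i)$ coincide, all three $\alpha_i$'s agree, or all three $\beta_i$'s agree. The strategy is to use $D_0 = 0$ to deduce affine collinearity of the points $(\alpha_i,\beta_i) \in \F_p^2$, parameterize the line as $\beta_i = m\alpha_i + c$, substitute into $D_1 = 0$, and show that it factors as a product of ``slope factors'' (expressions in $m$ such as $m$, $m\pm 1$) times a Vandermonde in the $\alpha_i$'s. If the $\alpha_i$'s are distinct the Vandermonde does not vanish, so one of the slope factors must; the resulting values of $m$ should correspond exactly to vertical or horizontal lines in $\F_p^2$, i.e.\ to the trivial configurations. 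The main obstacle is that the most naive moment-curve constructions tend to leave one ``extra'' bad slope (for example, $\alpha+\beta$ constant when $v_\beta$ is placed on the curve via an additive $\omega$-shift), so the heart of the argument is selecting the construction---the embedding of the $v_\beta$'s and the choice of $\omega$---carefully enough that every bad slope collapses into a trivial case, and verifying this collapse through direct calculation.
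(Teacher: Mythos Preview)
Your overall framework---reduce the concurrency of $W_{\alpha_1,\beta_1},W_{\alpha_2,\beta_2},W_{\alpha_3,\beta_3}$ to a $3\times 3$ determinant and split it as $D_0+\omega D_1$ over $\F_p$---is exactly the mechanism the paper uses. The gap is that you do not produce a construction for which the case analysis closes, and the moment-curve candidates you propose indeed do not work. For $u_\alpha=(1,\alpha,\alpha^2)$ and $v_\beta=(1,\beta+\omega,(\beta+\omega)^2)$ one computes (in your notation, with $a_i=\alpha_i-\alpha_1$, $b_i=\beta_i-\beta_1$) that $D_1=a_2b_3-a_3b_2$, and when $D_1=0$ with slope $b_i=ma_i$ one gets $D_0=m(m+1)a_2a_3(a_2-a_3)$. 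The factor $m+1$ is precisely the ``extra bad slope'' $\alpha_i+\beta_i=\text{const}$ you anticipated, and it is genuinely bad: three distinct pairs on that line give concurrent $W$'s that should not be concurrent generically. The multiplicative twist $v_\beta=(1,\beta\omega,\beta^2\omega^2)$ does not obviously fare better.

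The paper's construction is qualitatively different from anything you suggest: it abandons the moment curve entirely and takes $u_\alpha=(\alpha,-1,0)$ and $v_\beta=(\beta+\beta^2 X,0,-1)$, which are each \emph{collinear} (so not MDS). The point is that the normal $w_{\alpha,\beta}=u_\alpha\times v_\beta=(1,\alpha,\beta+\beta^2X)$ separates $\alpha$ and $\beta$ into different coordinates, and the $\beta$-coordinate is \emph{quadratic} in $\beta$ with the $X$ on the quadratic part only. After row-reducing the $3\times 3$ determinant, the constant part gives $\eta:=(\alpha_2-\alpha_1)(\beta_3-\beta_1)=(\alpha_3-\alpha_1)(\beta_2-\beta_1)$ (your $D_1=0$), while the $X$-part gives $\eta\beta_2=\eta\beta_3$. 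This second equation, produced by the $\beta^2X$ term, is what replaces your $m(m+1)$ factor: if $\eta\neq0$ it forces $\beta_2=\beta_3$ directly, and the case analysis is then a two-line check. So the missing idea is not ``tune the moment-curve embedding'' but rather to drop the MDS requirement on $\{u_\alpha\}\cup\{v_\beta\}$ and engineer $w_{\alpha,\beta}$ so that the $\F_p$- and $X$-parts of the third coordinate are $\beta$ and $\beta^2$ respectively.
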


We call it ``weak bipartite'' as the spaces we are considering the {intersection} of form a bipartite graph.

\begin{remark}
Note that the space $W_{1,1}$ has $q^2 - 1$ nonzero points, and the $q-1$ nonzero points of $W_{i,j} \cap W_{1,1}$ must be unique for all $i,j \in [2, n]$. Therefore, any bipartite MDS(3) construction must have $O(\sqrt{q})$ points.  Therefore, this construction is essentially optimal.
\end{remark}

\begin{remark}
Warning! In the construction the $u_i$'s and $v_i$'s are \emph{not} MDS. In fact, they are collinear. Even so, this seems to be one of the few known algebraic constructions which gives a generic 3-wise intersection condition.
\end{remark}

\begin{proof}
  For all $\alpha,\beta \in \F_p$, let $u_\alpha = (\alpha, -1, 0)$ and $v_\beta = (\beta + \beta^2X, 0, -1)$.

  Now assume for arbitrary $\alpha_1,\alpha_2,\alpha_3,\beta_1,\beta_2,\beta_3 \in \F_p$ that $W_{\alpha_1,\beta_1} \cap W_{\alpha_2,\beta_2} \cap W_{\alpha_3,\beta_3} \neq 0$. Let $w_{\alpha,\beta} = u_{\alpha} \times v_{\beta} = (1, \alpha, \beta+\beta^2X)$. Then, we must have that $w_{\alpha_1,\beta_1}, w_{\alpha_2,\beta_2}, w_{\alpha_3,\beta_3}$ are coplanar. In other words
  \[
    \det \begin{bmatrix}1 & 1 & 1\\
      \alpha_1 & \alpha_2 & \alpha_3\\
      \beta_1+\beta_1^2X & \beta_2 + \beta_2^2X & \beta_3+\beta_3^2X
    \end{bmatrix} = 0.
  \]
  Subtracting the first column from the second and third columns and then expanding we have that
  \[
    0 = \det\begin{bmatrix}\alpha_2-\alpha_1 & \alpha_3-\alpha_1 \\ (\beta_2-\beta_1)(1+(\beta_1+\beta_2)X) & (\beta_3-\beta_1)(1+(\beta_1+\beta_3)X)\end{bmatrix}.
  \]
  Expanding, we get that
  \[
    (\alpha_2 - \alpha_1)(\beta_3 - \beta_1)(1 + (\beta_1+\beta_2)X) = (\alpha_3-\alpha_1)(\beta_2-\beta_1)(1+(\beta_1+\beta_3)X).
  \]
  Comparing coefficients of powers of $X$, we get that $\eta := (\alpha_2-\alpha_1)(\beta_3-\beta_1) = (\alpha_3-\alpha_1)(\beta_2-\beta_1)$ and $\eta \beta_2 = \eta \beta_3$.

  If $\eta = 0$, then either $\alpha_1=\alpha_2=\alpha_3$, $\beta_1 = \beta_2 = \beta_3$, or $(\alpha_1,\beta_1) = (\alpha_i, \beta_i)$ for $i = 2$ or $3$. In each of these cases, the spaces $W_{\alpha_1,\beta_1}, W_{\alpha_2,\beta_2}, W_{\alpha_3,\beta_3}$ generically intersect.

  If $\eta \neq 0$ then $\beta_2 = \beta_3$ which then either $\beta_1=\beta_2=\beta_3$ or $\alpha_2 = \alpha_3$. Again, in each of these cases, the spaces $W_{\alpha_1,\beta_1}, W_{\alpha_2,\beta_2}, W_{\alpha_3,\beta_3}$ generically intersect.

  Thus, our construction is bipartite MDS(3).
\end{proof}

\subsection{A very weak $\MDS(3)$ Construction}

Note that the lower bound for the field size of $\MDS(\ell)$, only needed the following structure: that there is \emph{one} {partition} of the vectors into $\ell$ groups, such that $k-k/\ell$-dimensional subspaces, one drawn from each group, has trivial intersection. Let's call this property \emph{very weak} $\MDS(\ell)$. We now show that the lower bound is essentially tight if $k=\ell=3$.

We use a coset trick of \cite{gopi2020maximally}. Let $p$ be a prime such that $p \equiv 1 \mod 3$. Thus, there exists $\zeta \in \mathbb F_p^*$ which is a nontrivial cube root of $1$. Let $S \subset \F_p^*$ be a subgroup of size $(p-1)/3$ not containing $\zeta$. Let $c \in \mathbb F_p$ be a non-quadratic residue, and consider the field extension $\mathbb F_q := \F_p[X] / \langle X^2 - c\rangle$, where $q = p^2$. Now define three sets of vectors
\begin{align*}
  U &= \{(1, \alpha, \alpha^2) : \alpha \in S\}\\
  V &= \{(1, \zeta \beta, \zeta^2 \beta^2) : \beta \in S\}\\
  W &= \{(1, X\gamma, X^2\gamma^2) : \gamma \in \{1, 2, \hdots, (p-1)/2\}\}.
\end{align*}
These three sets are disjoint, and their union is a subset of the Reed-Solomon code over $\mathbb F_q^3$, so $U \cup V \cup W$ is MDS(2). 

Now we seek to show that for any $u_1, u_2 \in U$, $v_1, v_2 \in V$, $w_1, w_2 \in W$ that $\Span (u_1, u_2) \cap \Span (v_1, v_2) \cap \Span (w_1, w_2) = 0$. Have the notation $u_i := (1, \alpha_i, \alpha_i^2)$, etc. We know that $\Span(u_1, u_2)^{\perp} = \Span((\alpha_1\alpha_2, -(\alpha_1 + \alpha_2), 1)),$ etc. Thus, $\Span (u_1, u_2) \cap \Span (v_1, v_2) \cap \Span (w_1, w_2) \neq 0$ if and only if
\[
  \det \begin{bmatrix}
    \alpha_1\alpha_2 & \zeta^2\beta_1\beta_2 & c\gamma_1\gamma_2\\
    \alpha_1+\alpha_2 & \zeta(\beta_1 + \beta_2) & X(\gamma_1 + \gamma_2)\\
    1 & 1 & 1
  \end{bmatrix} = 0.
\]

In particular, this implies that the coefficient of $X$ in the expansion of the determinant is equal to $0$. That is,
\[
  (\gamma_1+\gamma_2)\left[\alpha_1\alpha_2 - \zeta^2\beta_1\beta_2 \right] = 0.
\]
But, $\gamma_1+\gamma_2 \in \{1, \hdots, p-1\}$ and $\frac{\alpha_1\alpha_2}{\beta_1\beta_2} \in S \not\ni \zeta^2$. Thus, the above expression cannot be $0$, contradiction. Therefore, for all $u_1, u_2 \in U$, $v_1, v_2 \in V$, $w_1, w_2 \in W$ that $\Span (u_1, u_2) \cap \Span (v_1, v_2) \cap \Span (w_1, w_2) = 0$, as desired.

Observe that $|U|, |V|, |W| \ge (p-1)/3 = \Omega(\sqrt{q})$, which is tight up to constant factors by the proof of Lemma~\ref{lem:weak-MDS-lowerbound}.

\end{document}